\documentclass{article}
\usepackage{setspace}
\usepackage[auth-lg,affil-sl]{authblk} 
\usepackage{float}
\usepackage[flushleft]{threeparttable}
\usepackage{amsmath}
\usepackage{geometry}
\usepackage[utf8]{inputenc}
\usepackage[nottoc]{tocbibind}
\usepackage{bm}
\usepackage{enumitem}
\usepackage{amssymb}
\usepackage[backref]{hyperref} 
\usepackage{mathrsfs}
\usepackage{caption}
\usepackage{indentfirst}
\usepackage{array}
\usepackage{amsfonts}
\usepackage{amsthm}
\usepackage{listings}
\usepackage{graphicx}
\usepackage{centernot}
\usepackage{xcolor}
\usepackage{threeparttable} 
\usepackage{natbib}

\newtheorem{condition}{Condition}[section]

\newtheorem{corollary}{Corrollary}[section]
\newtheorem{theorem}{Theorem}[section]

\newtheorem{lemma}{Lemma}[section]

\date{\today}

\newcommand{\norm}[1]{\left\lVert#1\right\rVert}
\newcommand{\normm}[1]{\lVert#1\rVert}

\newcommand{\E}{\mathbb{E}}
\newcommand{\R}{\mathbb{R}}

\title{Estimation of BLP models with high-dimensional controls\thanks{%
We thank seminar participants at UCL for their valuable feedback. 
We are also deeply grateful to Professor Dennis Kristensen for his careful guidance throughout the development of this paper.}}
\date{\today}

\author{Hua Jin\thanks{Department of Economics, University College London.}}
\affil{University College London}

\begin{document}

\maketitle
 \normalsize
\doublespacing

\begin{abstract} 
    This study proposes a framework for estimating demand in differentiated product markets with high dimensional product characteristics, building upon the seminal Berry, Levinsohn, and Pakes (1995) model, using market level data. 
    We allow for a very large set of potential product characteristics, where the number of characteristics may exceed the number of market observations. 
    Our contributions are twofold. 
    First, we establish a general estimation theory for BLP models featuring high-dimensional nuisance parameters.
    We propose a Neyman orthogonal estimator specifically adapted to this framework, utilizing machine learning techniques, such as Lasso, to construct nuisance parameter estimators that are plugged into the Neyman orthogonal estimator.
    This approach offers a significant advantage: it achieves $\sqrt{T}$-asymptotic normality for parameters of interest—such as the price coefficient and price heterogeneity—even when nuisance parameters are estimated at slower rates due to their high dimensionality.
    Second, we apply this theory to a specialized BLP model under approximate sparsity, developing an estimation strategy for the high-dimensional nuisance parameters.
    The approximate sparsity condition posits that nuisance parameters can be controlled, up to a small approximation error, by a small and unknown subset of variables.
    In an economic context, this implies that while products have a vast array of characteristics, consumers focus on only a small subset of these due to bounded rationality.
    This condition makes the recovery of parameters of interest feasible by enabling nuisance parameter estimators to converge at the required rates.
    The practical performance of the method is evaluated through comprehensive Monte Carlo simulations, which demonstrate its efficacy in finite samples.
\end{abstract}

\thispagestyle{empty}
\clearpage


\setcounter{page}{1}

\section{Introduction}
The estimation of demand functions for differentiated products is a primary focus within Industrial Organization (IO) and has become increasingly influential across various fields of empirical microeconomics. 
A seminal work by \cite{berry1995automobile} established a powerful framework for this purpose, demonstrating its utility in applications such as analyzing automobile markets. 
The BLP model has since been widely adopted in empirical industrial organization, demonstrating considerable versatility across a range of applications. 
For example, \cite{nevo2000mergers,nevo2001measuring} applies the BLP framework to study demand in the ready-to-eat cereal industry. 
\cite{petrin2002quantifying} uses the model to quantify the welfare effects of the introduction of the minivan in the U.S. automobile market. 
\cite{berry1999voluntary} evaluate the impact of voluntary export restraints on the U.S. automobile industry, 
while \cite{goldberg2001evolution} analyze price dispersion in the European car market.

A significant limitation in the practical application of the standard BLP model is the lack of clear guidance on how to select product characteristics in the demand specification. 
Researchers must decide which product characteristics to include, but there is no generally accepted, theory-driven criterion for doing so.
In practice, the literature adopts a variety of approaches. 
Some studies rely on data-driven methods: for example, \cite{gillen2019blp} use LASSO to select covariates in their analysis of Mexican elections. 
Others reduce dimensionality through statistical techniques, such as principal component analysis, as in \cite{backus2021common} for the cereal industry. 
Still others choose a small set of characteristics based on ad hoc judgment, as in \cite{chidmi2007brand}. 
While these strategies differ in implementation, they share a common feature: none provides a principled, theory-based criterion for variable selection.

This absence of guidance creates a fundamental trade-off. 
On the one hand, including many product characteristics risks triggering the curse of dimensionality, which can lead to imprecise estimates or even estimation failure. 
This concern is empirically relevant: \cite{gillen2019blp} begin with over 200 demographic and social variables, while \cite{backus2021common} consider 23 product characteristics in their study. 
On the other hand, restricting the model to a small set of variables—whether by ad hoc choice or aggressive dimensionality reduction—risks omitting important confounders, potentially introducing bias.
The problem is further exacerbated in modern high-dimensional settings, where the number of potential covariates can exceed the number of observations. 
In such cases, traditional estimation methods become infeasible, and informal selection strategies become increasingly unreliable. 
For example, \cite{beauchamp2011molecular} study a setting with over 360,000 variables but only 7,500 observations.

We present an approach to estimating and performing inference on the BLP model where the dimension of product characteristics is large—potentially much larger—than the number of markets.
In the BLP model, the consumer utility from choosing product $j$ in market $t$ is given by
\begin{align}
    u_{ijt} &= x_{jt}'\beta_{0} + p_{jt}\alpha_i+ \xi_{jt} + \epsilon_{ijt} \label{eq:standardBLP_utility}
\end{align}
The utility equals the value a consumer gets from a product’s characteristics, $x_{jt}$, $p_{jt}$, and $\xi_{jt}$, and an individual-specific taste shock, $\epsilon_{ijt}$.
A significant computational and statistical challenge arises when the dimension of the product characteristics vector, $d_x$, is large relative to the number of market observations, $T$.
This challenge is further exacerbated when the linear term, $x_{jt}'\beta_0$, is generalized to a nonlinear function $f_{u0}(x_{jt})$. 
In such high-dimensional settings, traditional estimation methods, such as the GMM estimator of Berry, Levinsohn, and Pakes (2004), become impractical or infeasible, necessitating additional structure to enable informative inference.
To resolve this, we construct a Neyman orthogonal estimator and then apply it to a high-dimensional BLP framework under approximate sparsity.
Specifically, we posit that the nuisance parameters—including the unknown function $f_{u0}(x_{jt})$—can be represented by a sparse linear component $x_{jt}'\beta_0$ plus a negligible approximation error.
The coefficient vector $\beta_0$ is assumed to be sparse, containing only a small number of non-zero elements relative to the sample size $T$, i.e. $\normm{\beta_0}_0\ll T$. 
This assumption reflects the realistic empirical scenario where researchers have access to a vast set of potential product characteristics but lack prior knowledge of the small subset that truly explains consumer heterogeneity and market shares. 
Crucially, approximate sparsity ensures that the nuisance parameters can be estimated at a sufficiently fast rate, which is a prerequisite for establishing the asymptotic properties of the main parameters of interest.
The first contribution of this paper is to develop a novel estimation and inference method for the BLP framework that accommodates potentially high-dimensional product characteristics, $x_{jt}$. 
Our approach integrates Neyman orthogonalization—developed by \cite{belloni2018high} and \cite{chernozhukov2018double}—into the BLP setting, allowing for estimation in the presence of high-dimensional nuisance parameters.
The second contribution is to adapt this approach to settings with approximate sparsity. 
Specifically, we combine machine learning methods with orthogonal estimation in a two-step procedure.
The estimation procedure is stated as follows:
\begin{enumerate}
    \item In the first step, we estimate high-dimensional nuisance parameters using flexible methods such as Lasso.
    \item In the second step, we construct a Neyman orthogonal estimator for the structural parameters of interest, ensuring that estimation errors from the first step have a negligible impact on the final estimates.
\end{enumerate}
We establish theoretical results showing that the proposed estimator is consistent and asymptotically normal.
Importantly, our approach allows inference on any pre-specified, low-dimensional subset of parameters and maintains flexibility regarding the choice of machine learning estimators used in the first step.
Finally, we provide evidence from simulation studies demonstrating strong finite-sample performance. 
Compared to naive plug-in estimators, our approach exhibits lower bias and improved rejection rates.

\textbf{Literature Review}. 
First, this paper contributes to the extensive literature on high/infinite-dimensional BLP models. 
\cite{berry2014identification} establish the foundational conditions for the identification of the BLP model. 
Building on this framework, \cite{dunker2023nonparametric} extend these results by providing conditions for the identification of the densities of the random coefficients.
A significant strand of this literature treats the random coefficient density as an unknown function within an infinite-dimensional space.
Notable contributions include \cite{lu2023semi} and \cite{wang2023sieve}, which employ sieve estimation, a widely used nonparametric method, to approximate the density of random coefficients.
Similarly, \cite{compiani2022market} characterizes the inverse demand function itself as an unknown function in an infinite-dimensional space and employs sieve estimation for its approximation.
These works enable inference on functionals or coefficients in the linear utility component. 
Other recent developments, such as \cite{rafi2024nonparametric} and \cite{singh2024choice}, leverage nonparametric methods like Neural Networks to estimate functionals such as price elasticities; however, they often rely on simplified models that omit either random coefficients or unobserved product characteristics.

Several studies have specifically attempted to integrate high-dimensional characteristics into the BLP framework (e.g., \cite{liu2021double}, \cite{rafi2024nonparametric}, \cite{gillen2014demand}, \cite{gillen2019blp}, and \cite{sawada2020estimating}).
Nevertheless, these approaches frequently necessitate simplifying assumptions that may limit model realism—such as the omission of unobserved characteristics $\xi_{jt}$. More importantly, many of these methods (\cite{gillen2014demand}, \cite{gillen2019blp}, and \cite{sawada2020estimating}) lack a formal theoretical foundation for conducting valid inference on the parameters of interest in a high-dimensional setting.

Our work is among the first to address the BLP model explicitly incorporating high-dimensional controls.
We differ from this existing literature by maintaining parametric specifications for the random coefficient density while allowing for a nonlinear, unknown function of high-dimensional characteristics within the utility function.
Under these assumptions, we develop an estimation and inference procedure that yields valid inference while including random coefficients and unobserved product characteristics.

Second, this paper connects to the literature on high-dimensional inference, notably the work on Neyman Orthogonalization and regularized methods by \cite{chernozhukov2018double} and \cite{belloni2018high}. 
The existing literature has primarily established the theoretical properties of these methods in linear or partially linear settings.
Our primary contribution here is the successful application and extension of the Neyman orthogonalization and regularization approach to a highly nonlinear econometric setting, the BLP model. 
This demonstrates the flexibility and robustness of these techniques and resolves the key challenge of providing valid inference after model selection in a context where it was previously infeasible.

\textbf{Organization of the Paper}. The remainder of this paper is structured as follows. 
Section 2 discusses the theoretical underpinnings of the standard BLP model and its estimation methodology. 
Section 3.1-3.4 presents a high-dimensional extension of the BLP model along with the associated Neyman orthogonalization approach. 
Theoretical results regarding the consistency and asymptotic normality are provided.
Section 3.5-3.6 applies the general theory to a specific BLP model under approximate sparsity, detailing the estimation procedure for the high-dimensional nuisance parameters and the construction of the Neyman orthogonal estimator for the parameters of interest. 
In Section 4, we conduct a simulation study to evaluate the performance of the estimator, and Section 5 concludes. 
Appendices provide detailed proofs for the theoretical results established in Section 3.

\textbf{Notation.}
We use the following empirical process notations. 
Cadinality of a set $S$ is denoted by $|S|$.
Given a vector $\beta$, and a set of indices $S$, $\beta_S$ is the vector that has the same value as $\beta$ on the indices in $S$ and zero elsewhere,
i.e., $\beta_{S,j}=\beta_j$ if $j\in S$ and $\beta_{S,j}=0$ if $j\notin S$.
$\mathbb{E}_J:=\frac{1}{J}\sum_{j=1}^J$. 
$\mathbb{E}_T:=\frac{1}{T}\sum_{t=1}^T$,
$\mathbb{E}_{JT}:=\frac{1}{JT}\sum_{j=1}^J\sum_{t=1}^T$, and
$\mathbb{E}_{T,L}:=\frac{1}{T}\sum_{l=1}^L\sum_{t\in I_l}$, where $I_l$ is the index subset of the $\{1,\dots,T\}$ and $T_l=|I_l|$. 
The $l_p$ norm of a vector $\norm{a}_p=(\sum_i |a_i|^p)^{1/p}$. 
The $l_p$ norm of a matrix $\normm{X}_p = (\sum_{i,j} |X_{ij}|^p)^{1/p}$.
The operator norm of a symmetric matrix $X$ is $\normm{X}_{op}=\sup_{\norm{a}_2=1}\normm{Xa}_2$.
$a_t\lesssim b_t$ means there exists a constant $C$ such that $a_t\leq C b_t$ for all $t$.
$a_t\lesssim b_t\quad w.p.a.1$ means there exists a constant $C$ such that $P(a_t\leq C b_t)\to 1$ as $t\to\infty$.
At last, for a function $f(w,\hat{\eta}(w))$, where $\hat{\eta}$ is an estimated function using observations independent of $w$, define
$\normm{f(w,\hat{\eta}(w))}_{L^2}=\left(\int \normm{f(w,\hat{\eta}(w))}_2^2 F(dw)\right)^{1/2}$.

\section{Standard BLP Model}
\subsection{Framework}
Berry, Levinsohn, and Pakes (1995) begin with the following problem. 
Consider a market with $J$ competing products and an outside good, denoted as $0$. 
The vector of product characteristics of product $j$ in market $t$ will be denoted by $(\xi_{jt},x_{jt}',p_{jt})$. 
$\xi_{jt}\in\R$ represents the product characteristic which is not observed by the econometrician whereas $x_{jt}\in\R^{d_x}$ are observed. 
$p_{jt}\in\R$ is assumed to be an endogenous variable, e.g., price. 
Assume that the conditional mean of the unobserved characteristics is zero, i.e., 
\begin{equation}\label{eq:MeanBound_xi}
    \E[\xi_{jt}|x_{jt}]=0\qquad \text{and} \qquad \sup_{1\leq j\leq J}\E[\xi_j^2|x_j]<C<\infty
\end{equation}

In addition to those exogenous variables $x_{jt}$, we allow the existance of endogenous variables $p_{jt}$ in the sense of being realted to $\xi_{jt}$. 
This thus requires the use of instruments for identification.
An individual $i$ in market $t$ chooses a product from a set of $J$ products plus an outside option. 
The utility of individual $i$ from choosing product $j$ in market $t$ is given by equation \ref{eq:standardBLP_utility}.
The utility of the outside option is normalized to 
\begin{equation}\label{eq:standardBLP_utility_outside}
    u_{i0t} = \epsilon_{i0t}
\end{equation}
The idiosyncratic error term $\epsilon_{ijt}$ are independent and identically distributed (i.i.d.) extreme value random variables. 
$\alpha_i$ is a random variable as consumers are assumed to have different preferences. 
We assume that $\alpha_i=\alpha_0+b_i$, where $b_i$ follows a distribution $F(b_i,\sigma_0)$ that is determined by a parameter $\sigma_0$.
So the model defines a map from the parameters $\theta_0=(\sigma_0,\alpha_0,\beta_0')\in\Theta$, where $\Theta$ is a compact space, and the vector of product characteristics $(\xi_{jt},x_{jt}',p_{jt})$ to the market shares $s_{t}\in\R^{J}$.
Furthermore, we maintain mutual independence assumptions:  $\alpha_i$, $\epsilon_{ijt}$, and $(\xi_{jt},x_{jt},p_{jt})$ are mutually independent.
Consequently the utility can be rewritten as 
\begin{equation}
u_{ijt} = x_{jt}'\beta_{0} + p_{jt}\alpha_0+ \xi_{jt} + p_{jt}b_i+ \epsilon_{ijt}
\end{equation}
We define the linear component of the utility $y_{jt}=x_{jt}'\beta_{0} + p_{jt}\alpha_0+ \xi_{jt}$.
For each market $t$, we then define the corresponding market-level vectors: $y_t=(y_{1t},\dots,y_{Jt})'$, the price vector $p_t=(p_{1t},\dots,p_{Jt})'$, the unobserved characteristics vector $\xi_t=(\xi_{1t},\dots,\xi_{Jt})'$.
The parameters to be estimated are $\theta_0$. 
Choice variable 
\begin{equation*}
    d_{ijt} =
      \begin{cases}
        1 & \text{if } u_{ijt} > u_{ikt} \text{  for all } k \neq j \\
        0 & \text{otherwise.}
      \end{cases}       
  \end{equation*}
The individual market share of product $j$ in market $t$, $s_{jt}$ and the corresponding vector of all shares in the market, $s_t$, are given by
\begin{align}
    s_{jt} &=f_{js}(p_t,y_t,\sigma_0)\\
    &= \int d_{ijt} dF(\epsilon_{i0t},\dots,\epsilon_{iJt}) dF(b_i,\sigma_0) \nonumber\\
    &=\int \frac{\exp(\overbrace{x_{jt}'\beta_{0} + p_{jt}\alpha_0+ \xi_{jt}}^{y_{jt}} + p_{jt}b_i)}{1+\sum_{k=1}^{J}\exp(\underbrace{x_{kt}'\beta_{0} + p_{kt}\alpha_0+ \xi_{kt}}_{y_{kt}} + p_{kt}b_i)}dF(b_i,\sigma_0) \nonumber\\
    &=\int \frac{\exp(y_{jt}+p_{jt}b_i)}{1+\sum_{k=1}^{J}\exp(y_{kt}+p_{kt}b_i)}dF(b_{i},\sigma_0)\\
    s_t & = f_s(p_t,y_t,\sigma_0) \\
        &:= (f_{1s}(p_t,y_t,\sigma_0),\dots,f_{Js}(p_t,y_t,\sigma_0))'
\end{align}
This is sometimes referred to as random coefficient logit models. 
Berry, Levinsohn, and Pakes (1995) show that given fixed $\sigma$, $p_t$ and $s_t$, there is a unique $y\in\R^{J}$ that solves the following equation
\footnote{Precisely, they show that given $p_t$, $s_t$, $x_t=(x_{1t},\cdots,x_{Jt})'$, $\beta,\alpha, \sigma$, there is a unique $\xi\in\R^J$ that solves the equation
\[
    s_t = f_s(p_t,x_t'\beta+p_t\alpha+\xi,\sigma).
\]
It is similar to show the uniqueness of $y$, as it can be seen as a special case where $\beta,\alpha=0$.
}
\[
    s_t = f_s(p_t,y,\sigma)
\]
In other words, $f_s(p_t,\cdot,\sigma)$ is invertible and we can impute $y$ and consequently $\xi$ for each $\sigma$. 
We denote the inverse function as $f_s^{-1}(s_t,p_t,\sigma):\R\to\R^{J}$ and define imputed values of $y_t$ and $\xi_t$ as follows:
\begin{align}
    y_t(\sigma)&:=f_s^{-1}(s_{t},p_t,\sigma) \label{eq:inverseBLP}\\
    \xi_t(\theta)&:=y_{t}(\sigma)-x_{t}'\beta - p_{t}\alpha
\end{align}
where $\theta=(\sigma,\alpha,\beta')$.
Let $y_{jt}(\sigma)$ and $\xi_{jt}(\theta)$ be the $j$-th element of $y_t(\sigma)$ and $\xi_t(\theta)$, respectively.
Obviously, these mappings map the true value $\sigma_0$ to the true values of linear component of the utility, $y_t$ and unobserved characteristics $\xi_t$.

\subsection{Estimation}
We here present the standard GMM estimator for the BLP model as proposed in \cite{berry2004limit}.
The data are observed at the market level. 
One typically observes $(x_t,p_t,s_t,z_t)$ for $t=1,\dots,T$.
To address endogeneity, $z_t=(z_{1t},\dots,z_{Jt})'$ represents instrumental variables in market $t$.
Then we define $\tilde{z}_{jt}$ as a function of $x_{jt}$ and $z_{jt}$. 
For example, a natural choice is $\tilde{z}_{jt}=(x_{jt}',z_{jt})'$.
Then the estimator of $\theta_0$ is defined as
\[
    \hat{\theta}=arg\min\limits_{\theta}\hat{g}(\theta)'\hat{W}\hat{g}(\theta)
\]
where $\hat{g}(\theta)=\E_{JT}\xi_{jt}(\theta)\tilde{z}_{jt}$ and $\hat{W}$ is an estimator of a positive definite weighting matrix. 
In a two-step GMM estimation, for example, $\hat{W}$ is derived using the first step estimator, i.e., 
$\hat{W}=\E_T\left[\E_{J}\xi_{jt}(\tilde{\theta})\tilde{z}_{jt}\E_{J}\xi_{jt}(\tilde{\theta})\tilde{z}_{jt}'\right]$, where the preliminary estimator $\tilde{\theta}$ is derived from an initial unweighted GMM minimization.

The estimator developed by Berry, Levinsohn, and Pakes (2004) works poorly or even fails when $d_x$ is large relative to the sample size as the simluation studies show.
Our proposed estimator addresses this issue and achieves the desired asymptotic normality.

\section{BLP models with high-dimensional controls}
\subsection{Framework}
Having many product characteristics is common in practice. 
For example, they may include the nutritional content of food products or the features of electronic devices, which can be numerous.
This, however, creates a challenge for estimation and inference. 
In this framwork we generalize the utility function to the following form:
\begin{equation}\label{eq:highdimBLP_utility}
    u_{ijt} = f_{u0}(x_{jt}) + p_{jt}\alpha_0 + \xi_{jt} + p_{jt}b_i + \epsilon_{ijt}\quad \E[\xi_{jt}|x_{jt},z_{jt}]=0
\end{equation}
A key assumption that makes it possible to perform estimation and inference in such cases is the following:
\begin{align}\label{eq:highdimBLP_instrument}
    z_{jt}  &=f_{z0}(x_{jt})+\epsilon^z_{jt},\quad \E[\epsilon^z_{jt}|x_{jt}]=0,\quad \E[\epsilon^z_{jt}\cdot\xi_{jt}]=0 
\end{align}
We define the parameters $\theta_0=(\sigma_0,\alpha_0,f_{z0},f_{u0})$.
Note that we allow the functions $f_{z0}$, $f_{u0}$, the distribution of the variables, and the parameters $\alpha_0$, $\sigma_0$ to vary with the sample size $T$, though we suppress the dependence on $T$ for notational simplicity.
Combining equations \ref{eq:inverseBLP} and \ref{eq:highdimBLP_instrument}, we can see that the model is a nonlinear extension to the high-dimensional linear IV models as studied by \cite{chernozhukov2018double}.
Researchers can apply any machine learning methods to estimate the functions $f_{z0}$ depending on the context.

In the following sections, we treat $(f_{z0},f_{u0})$ as nuisance parameters, while considering $\theta_{10}=(\sigma_0,\alpha_0)$ as structural parameters of interest. 
However, this designation is flexible: under the specification $f_{u0}(x_{jt})=x_{jt}\beta_0$, any subset of $\beta_0$ can be included as parameters of interest, with the remaining elements treated as nuisance parameters.

\subsection{Estimation procedure}
The estimation strategy adheres to a standard Neyman orthogonal approach and proceeds in two steps:
\begin{enumerate}
    \item \textbf{Derivation of a preliminary estimator}: Obtain estimators of $f_{z0}$ and $f_{u0}$ with convergence rates of $T^{-1/4}$, with machine learning techniques, such as LASSO.
    \item \textbf{Application of the Neyman orthogonal approach}: 
    Refine the preliminary estimator using the Neyman orthogonal estimator to achieve the final estimator $\check{\theta}_1$ that achieves the desired $T^{1/2}$ asymptotic normality.
\end{enumerate}
In the case where $\sigma_0$ is known, our model collapses to the high-dimensional linear instrumental variable (IV) framework described by Eqs. 4.5–4.6 \cite{chernozhukov2018double}, with the Neyman orthogonal moment function correspondingly reducing to their Equation (4.7).
\footnote{
    They consider the partially linear IV model
    \begin{align*}
        Y & = D\theta_0 + g_0(X) + U,\quad \E[U|X,D]=0\\
        Z & = m_0(X) + V,\quad \E[V|X]=0 
    \end{align*}
    and define the Neyman orthogonal moment function as
    \[
        \psi(w,\theta,g,m) = (Z-m(X))\cdot(Y-Z\theta-g(X))
    \]
    where $w=(Y,Z,X)$, $g$ and $m$ are the nuisance functions.
}
While our approach differs slightly from the 'partialling-out' Neyman orthogonal moment functions—such as those defined in Eq. 4.8 \cite{chernozhukov2018double}
\footnote{
    Equation 4.8 defines the Neyman orthogonal moment function
    \[
        \psi(w,\theta,l,m,r) = (D-r(X))\cdot (Y-l(X)-\theta(Z-m(X)))
    \]
    where $l_0(X) = \E[Y|X]$, $m_0(X)=\E[Z|X]$ and $r_0(X)=\E[D|X]$ are the nuisance functions.
} or the procedure in Chernozhukov et al. (2015, Algorithm 1 \cite{chernozhukov2015post})—it maintains the Neyman orthogonality condition.
We adopt this specific formulation due to its superior finite-sample performance in simulations and its flexibility in accommodating various choices of parameters of interest.

\subsection{Neyman Orthogonal Estimation of $\theta_{10}$}
Suppose we have obtained preliminary estimators $\hat{f}_z$, $\hat{f}_u$ that satisfy the desired preliminary convergence rates and we are interested in $\theta_{10}=(\sigma_0,\alpha_0)$, a subset of all parameters $\theta_0$. 
Let $w_t=(x_t,z_t,p_t,s_t)$ represents observations in market $t$. 
$(w_t)_{t=1}^{T}$ is modelled as independent and identically distributed.
A natural approach to estimation of $\theta_{10}$ would be, for example, simplely plugging in the preliminary estimator $\hat{f}_u$ into the moment function and then minimizing the quadratic form of the empirical moment function, i.e.,
\[
    \check{\theta}_1^{(1)}=\arg\min_{\theta_1}\hat{g}(\theta_1,\hat{f}_u)'W\hat{g}(\theta_1,\hat{f}_u)
\]
where $\hat{g}(\theta_1,\hat{f}_u)=\E_{JT}[\E_J\tilde{z}_{jt}\cdot(y_{jt}(\sigma)-\alpha p_{jt}-\hat{f}_u(x_{jt}))]$, $\tilde{z}_{jt}$ is a function of $x_{jt}$ and instruments $z_{jt}$, and $W$ is a positive definite weighting matrix.
The estimator $\check{\theta}_1^{(1)}$ will generally have a slower than $1/\sqrt{T}$ convergence rate, as the argument in \cite{belloni2018high} shows.
The underlying reason is that in the proof of the GMM estimator's asymptotic normality, the remainder terms contain first-order effect (bias) introduced by the plug-in of the preliminary estimator $\hat{f}_u$.
This issue can be addressed using the Neyman orthogonal approach, which is often combined with sample splitting to ensure that the remaining error terms vanish in probability.
In this problem, we find that the remainder terms when using the Neyman orthogonal estimator also contain
\[
    \sqrt{T}\E_T[\E_J(\hat{f}_z(x_{jt})-f_{z0}(x_{jt}))\cdot\xi_{jt}]+\sqrt{T}\E_T[\E_Jv_{jt}\cdot (\hat{f}_u(x_{jt})-f_{u0}(x_{jt}))]
\]
The use of sample splitting along with the i.i.d. assumption allows simple and tight control of such terms.
Now define the Neyman orthogonal moment function as
\begin{align*}
    \psi(w_t,\theta_1,f_z,f_u)&=\E_J\underbrace{(z_{jt}-f_z(x_{jt}))}_{d_z\times 1}\cdot\underbrace{(y_{jt}(\sigma)-\alpha p_{jt}-f_u(x_{jt}))}_{1\times 1}\\
    &=:\E_J\phi_j(w_{t},\theta_1,f_z,f_u)
\end{align*}
where $f_z$ and $f_u$ are the nuisance functions.
The true values of the nuisance functions are $f_{z0}(x_{jt})$ and $f_{u0}(x_{jt})$.
This Neyman orthogonal moment function is equivalent to the "partialling-out" moment function defined as
\begin{equation*}
    \tilde{\psi}(w_t,\theta_1,f_z,f_y,f_p) = \E_J\underbrace{(z_{jt}-f_z(x_{jt}))}_{d_z\times 1}\cdot\underbrace{(y_{jt}(\sigma)-f_y(x_{jt})-\alpha (p_{jt}-f_p(x_{jt})))}_{1\times 1}
\end{equation*}
where $f_z,f_y,f_p$ are the nuisance functions.
The true values of these nuisance functions are $f_{z0}(x_{jt})$, $f_{y0}(x_{jt})=\E[y_{jt}(\sigma_0)|x_{jt}]$ and $f_{p0}(x_{jt})=\E[p_{jt}|x_{jt}]$.
It can be seen that $f_{y0}(x_{jt})-\alpha_0 f_{p0}(x_{jt})=f_{u0}(x_{jt})$.
Furthermore it can be proved that the estimators derived from these two moment functions are asymptotically equivalent, in the sense that they yield the same asymptotic variance.
We note that the Neyman orthogonal moment function $\psi(w_t,\theta_1,f_z,f_u)$ implicitly depends on the sample size $T$, as the underlying data-generating process evolves with $T$.
For notational brevity, we suppress this dependence throughout the paper.

Partition the observation indices $\{1,\ldots,T\}$ into $L$ groups $I_l\; (l=1,...,L)$. 
For each $l$, construct estimators $\hat{f}_z^{l}$ and $\hat{f}_u^{l}$ using all observations \textbf{not} in $I_l$.
Then plug in the nuisance parameter estimators $\hat{f}_z^{l}$ and $\hat{f}_u^{l}$ into the Neyman orthogonal moment function
and derive the empirical Neyman orthogonal moment function as follows:
\[
    \psi_t(\theta_1,\hat{f}_z^{l},\hat{f}_u^{l})=\psi(w_t,\theta_1,\hat{f}_z^{l},\hat{f}_u^{l})
\]
and
\[
    \E_{T,L}[\psi_t(\theta_1,\hat{f}_z^{l},\hat{f}_u^{l})]=\frac{1}{T}\sum_{l = 1}^{L} \sum_{t\in I_l}\psi_t(\theta_1,\hat{f}_z^{l},\hat{f}_u^{l}).
\]
Then we have the estimator for $\theta_1$ as
\begin{equation}\label{eq:orthogonal_estimator}
    \check{\theta}_1=\arg\min_{\theta_1\in\Theta_\sigma\times\Theta_\alpha}\{(\E_{T,L}[\psi_t(\theta_1,\hat{f}_z^{l},\hat{f}_u^{l})])'\hat{W}(\E_{T,L}[\psi_t(\theta_1,\hat{f}_z^{l},\hat{f}_u^{l})])\}
\end{equation}
where $\hat{W}$ can be $I_2$ or estimated weighting matrix, $\Theta_\alpha, \Theta_{\sigma}$ are compact sets.
Given that $\hat{f}_z^l$ and $\hat{f}_u^l$ have the convergence rate $\normm{\hat{f}_z^l(x_{jt})-f_{z0}(x_{jt})}_{L^2}=o_p(T^{-1/4})$, and $\normm{\hat{f}_u^l(x_{jt})-f_{u0}(x_{jt})}_{L^2}=o_p(T^{-1/4})$ and then following the techniques developed in Chernozhukov et al. (2018), the Neyman orthogonal estimator of $\theta_1$ can be proved to be asymptotically normally distributed.
However, one may think of a more direct estimator below given equation (\ref{eq:highdimBLP_instrument})
\begin{align*}    
    \check{\theta}_1^{(2)}=&\arg\min_{\theta_1,\beta}\left[(\E_{JT}(z_{jt}-f_z(x_{jt})))\cdot
    (y_{jt}(\sigma)-\alpha p_{jt}-f_u(x_{jt}))\right]'W_T\cdot\\
    &\left[(\E_{JT}(z_{jt}-f_z(x_{jt})))\cdot(y_{jt}(\sigma)-\alpha p_{jt}-f_u(x_{jt}))\right] + \lambda_\beta Pen(f_z) + \lambda_\Pi Pen(f_u)
\end{align*}
Although this moment function satisfies Neyman orthogonality, this estimator does not incorporate sample splitting. 
Furthermore, similar to LASSO method in a linear model yields a convergence rate slower than $n^{-1/2}$, 
the estimator $\check{\theta}_1^{(2)}$ exhibits a slower rate of convergence than $\check{\theta}_1$.

\subsection{Theory for the asymptotic normality of $\check{\theta}_1$}
In this subsection, we provide a set of conditions, the nuisance parameter convergence rates and regularity requirements,
that are sufficient for the asymptotic normality of Neyman orthogonal estimator $\check{\theta}_1$.

\begin{condition}\label{cond:Regularity1}\leavevmode
    \begin{enumerate}[label=(\roman*), ref=\thecondition(\roman*)]
        \item The data $(w_t)_{t=1}^T=(x_t,z_t,p_t,s_t)_{t=1}^T$ are independent and identically distributed.\label{cond:IID}
        \item Data $(w_t)_{t=1}^T$ obey the model (\ref{eq:highdimBLP_utility}) and (\ref{eq:highdimBLP_instrument}).\label{cond:Exogeneity}
        \item $\Theta_{\sigma}$ and $\Theta_{\alpha}$ are compact sets. $\sigma_0\in\Theta_{\sigma}$ and $\alpha_0\in\Theta_{\alpha}$.\label{cond:Compactness}
        \item The density function of $b_i$, $f(b,\sigma)$, is smooth in $\sigma\in\Theta_\sigma$ for all $b_i$.\label{cond:SmoothDensity1}
    \end{enumerate}
\end{condition}
\noindent\textbf{Comment.}
Condition \ref{cond:IID}, \ref{cond:Exogeneity}, \ref{cond:Compactness}, are standard regularity conditions in the BLP literature.
Condition \ref{cond:SmoothDensity1} is a regularity condition on the density function of the random coefficients, ensuring the smoothness of the inverse function $y_{jt}(\sigma)$.

\begin{condition}[\textbf{Boundedness for Neyman Orthogonal Estimator}]\label{cond:BoundednessNO}\leavevmode
    \begin{enumerate}[label=(\roman*), ref=\thecondition(\roman*)]
        \item $l_1$ norm of all parameters and all random variables except individual preference $b_i$ are bounded, i.e., $|p_{jt}|$, $\normm{\epsilon^{z}_{jt}}_\infty$, $|\xi_{jt}|$, $\norm{\theta_{10}}_1$, $\norm{f_{z0}}_\infty$, $\norm{f_{u0}}_\infty\leq M$ 
            \label{cond:BoundednessVar1}
    \end{enumerate}
\end{condition}

\noindent\textbf{Comment.} 
Condition \ref{cond:BoundednessVar1} bounds the parameters and random variables in the model.
One may replace the boundedness condition with conditions on moments, as stated in Condition SE and SM in \cite{belloni2014inference}.
Here we choose the boundedness condition for simplicity.

\begin{condition}\label{cond:IdMomentWeight}\leavevmode
    \begin{enumerate}[label=(\roman*), ref=\thecondition(\roman*)]
        \item $\E[\psi(w_t,\theta_1,f_{z0},f_{u0})]=0$ iff $\theta_1=\theta_{10}$. \label{cond:IdentificationMoment}

        \item $\normm{\hat{W}-W}_2=o_p(1).$ $\hat{W}$ and $W$ are positive definite. $c\leq\lambda_{min}(W)\leq \lambda_{max}(W)\leq C$
            \label{cond:WeightMatrix}
        \item $\forall$ $\eta>0$, there exists $\epsilon>0$ such that $\inf\limits_{\normm{\theta_1-\theta_{10}}_2>\eta,\theta_1\in\Theta}\normm{\E[\psi(w_t,\theta_1,f_{z0},f_{u0})]}_2\geq\epsilon$
             \label{cond:IdentificationMoment2}
    \end{enumerate}
\end{condition}

\noindent\textbf{Comment.} 
Condition \ref{cond:IdentificationMoment} is the standard identification condition. Freyberger (2015) assumes a slightly weaker condition, but there is no significant difference.
\footnote{Berry, Levinsohn, and Pakes (2004) and Freyberger (2015) assume a slightly weaker condition: For all $\delta>0$, $\exists C(\delta)$ s.t.
\[ \lim P(\inf_{\theta\notin\mathcal{N}_{\theta_0}(\delta)}\norm{\E_Tg(\theta)-\E_Tg(\theta_0)}_2\geq C(\delta))=1
    \] However there is no significant difference.
    }
Condition \ref{cond:WeightMatrix} is the standard weighting matrix condition.
Condition \ref{cond:IdentificationMoment2} is equivalent to the statement that $\normm{\E[\psi(w_t,\theta_1,f_{z0},f_{u0})]}_2\to 0 \implies \normm{\theta_1-\theta_{10}}_2\to 0$.
This is the high-dimensional extension of the standard identification condition in a fixed-dimensional setting and is commonly assumed in the high-dimensional literature, such as \cite{beyhum2024high}.
\footnote{
    Specifically, \cite{beyhum2024high} assumes in Assumption 1 that $\forall$ $\eta>0$, there exists $\epsilon>0$, s.t.
    $\inf_{\norm{\theta-\theta_0}_2>\eta}R(\theta_0)-R(\theta)\geq\epsilon$, where $R$ is the population objective function. This is equivalent to condition \ref{cond:IdentificationMoment2}.
}
This condition is trivially satisfied when the distribution of the data does not vary with $T$ and the parameter space is compact.

\begin{condition}\label{cond:IdentificationG}\leavevmode
    \begin{enumerate}[label=(\roman*), ref=\thecondition(\roman*)]
        \item The matrix $G=\E_J\E[
    \underbrace{\begin{bmatrix}
        \frac{d}{d\sigma}y_{jt}(\sigma_0)\\
        p_{jt}
    \end{bmatrix}}_{2\times 1}\label{cond:IdentificationGWG}
    \cdot \underbrace{\epsilon_{jt}^{z\prime}}_{1\times d_z}]$ has full rank with $\lambda_{min}(GWG')>c>0$.
        \item The matrix $\Omega=var(\E_J\epsilon^z_{jt}\xi_{jt})$ is positive definite with $\lambda_{min}(\Omega)>c>0$.
        \label{cond:IdentificationOmega}
    \end{enumerate}
\end{condition}
\noindent\textbf{Comment.}
Unlike identification in fixed-dimensional contexts, Condition \ref{cond:IdentificationG} imposes the inequality uniformly over $T$.
This requirement is crucial as both the underlying data distribution and the model parameters are allowed to vary with $T$.
Condition \ref{cond:IdentificationGWG} \ref{cond:IdentificationOmega} are commonly assumed in the existing BLP literature. 
For example \cite{berry2004limit} and \cite{freyberger2015asymptotic} assume that $\Gamma=\frac{\partial}{\partial \theta}\E[g(\theta)]|_{\theta=\theta_0}$ has full rank, which is the same as condition \ref{cond:IdentificationGWG}.
\cite{freyberger2015asymptotic} also assumes that the matrix $\E z_{jt}z_{jt}'\xi_{jt}^2$ is positive definite, which is the same as condition \ref{cond:IdentificationOmega}.
Condition \ref{cond:IdentificationGWG} depicts the local property of the Neyman orthogonal moment function around the true parameter value and is crucial for the asymptotic normality of $\check{\theta}_1$, whereas condition \ref{cond:IdentificationMoment} and \ref{cond:IdentificationMoment2} depict the global property of the moment function and are crucial for the consistency of $\check{\theta}_1$.

\begin{condition}\label{cond:PrelimConvergence}
    The preliminary estimators $\hat{f}_z^l$ and $\hat{f}_{u}^l$ satisfy the following convergence rates for each $l=1,\dots,L$ and $j=1,\dots,J$:
    \begin{align*}
        \normm{\hat{f}_z^l(x_{jt})-f_{z0}(x_{jt})}_{L^2}&=o_p(T^{-1/4})\\
        \normm{\hat{f}_{u}^l(x_{jt})-f_{u0}(x_{jt})}_{L^2}&=o_p(T^{-1/4})
    \end{align*}
\end{condition}

\noindent\textbf{Comment.}
The convergence rates of the preliminary estimators are crucial for the asymptotic normality of $\check{\theta}_1$.
This convergence rate requirement is standard in Neyman orthogonal estimation literature, such as \cite{chernozhukov2018double}.

\begin{theorem}[Consistency]
    \label{thm:Consistency}
    Suppose $\hat{f}_z^l$ and $\hat{f}_p^l$ are constructed using all observations \textbf{not} in $I_l$, for $l=1,\dots,L$.
    If Condition \ref{cond:Regularity1}, \ref{cond:BoundednessNO}, \ref{cond:IdMomentWeight}, \ref{cond:PrelimConvergence} hold, $\Theta_\alpha, \Theta_\sigma$ are compact sets, then we have
    \[
        \check{\theta}_1\overset{p}{\to}\theta_{10}.
    \]
\end{theorem}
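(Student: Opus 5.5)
The plan is the standard extremum-estimator argument. Write $\hat Q(\theta_1)=\hat g(\theta_1)'\hat W\hat g(\theta_1)$ with $\hat g(\theta_1)=\E_{T,L}[\psi_t(\theta_1,\hat f_z^l,\hat f_u^l)]$, and let $Q(\theta_1)=g(\theta_1)'Wg(\theta_1)$ with $g(\theta_1)=\E[\psi(w_t,\theta_1,f_{z0},f_{u0})]$.

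The first step is to prove the uniform convergence
\[
\sup_{\theta_1\in\Theta_\sigma\times\Theta_\alpha}\normm{\hat g(\theta_1)-g(\theta_1)}_2=o_p(1).
\]
Combined with Condition \ref{cond:WeightMatrix} and the boundedness of $g$ on the compact parameter set, this gives $\sup|\hat Q-Q|=o_p(1)$.

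Consistency then follows by the usual chain. Since $\hat Q(\check\theta_1)\le \hat Q(\theta_{10})$ and $Q(\theta_{10})=0$, we get $Q(\check\theta_1)=o_p(1)$. Because $\lambda_{min}(W)\ge c$, this implies $\normm{g(\check\theta_1)}_2=o_p(1)$. Condition \ref{cond:IdentificationMoment2} then rules out $\normm{\check\theta_1-\theta_{10}}_2>\eta$ with probability tending to one, for every $\eta>0$.

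For the uniform convergence, I decompose on each fold $l$:
\[
\hat g-g=\underbrace{\E_{T,L}[\psi_t(\theta_1,f_{z0},f_{u0})]-g(\theta_1)}_{A(\theta_1)}+\underbrace{\E_{T,L}[\psi_t(\theta_1,\hat f^l_z,\hat f^l_u)-\psi_t(\theta_1,f_{z0},f_{u0})]}_{B(\theta_1)}.
\]

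The term $A$ is a uniform law of large numbers for an i.i.d. average over a compact two-dimensional parameter set. The dependence on $\theta_1$ enters only through $y_{jt}(\sigma)-\alpha p_{jt}$, and this is linear in $\alpha$. For $\sigma$, Condition \ref{cond:SmoothDensity1} and the implicit function theorem applied to $s_t=f_s(p_t,y,\sigma)$ give differentiability of $y_{jt}(\sigma)$. With bounded $p$, $\epsilon^z$ and $f_{z0}$, I obtain a Lipschitz bound
\[
\normm{\psi_t(\theta_1)-\psi_t(\theta_1')}\le M_t\normm{\theta_1-\theta_1'},\qquad M_t=C\Bigl(1+\sup_\sigma\bigl|\tfrac{d}{d\sigma}y_{jt}(\sigma)\bigr|\Bigr).
\]
I also need an envelope $\E\sup_\sigma|y_{jt}(\sigma)|<\infty$. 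Given both, a pointwise weak LLN plus stochastic equicontinuity (a finite-cover argument) yields $\sup|A|=o_p(1)$.

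The main obstacle is justifying the envelope and Lipschitz moments for $y_{jt}(\sigma)$ and its derivative uniformly over $\Theta_\sigma$, since $b_i$ is not assumed bounded. I would first try to bound $|y_{jt}(\sigma)-y_{jt}(\sigma_0)|$ using the contraction-mapping representation of the BLP inverse together with bounded prices. If that fails, I would add it as an implicit regularity requirement under Condition \ref{cond:SmoothDensity1}.

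For $B$, expanding the product gives
\[
B=-\E_{T,L}\E_J\bigl[\Delta_z(x_{jt})(y_{jt}(\sigma)-\alpha p_{jt}-f_{u0})\bigr]-\E_{T,L}\E_J\bigl[(z_{jt}-f_{z0})\Delta_u\bigr]+\E_{T,L}\E_J[\Delta_z\Delta_u],
\]
where $\Delta_z=\hat f^l_z-f_{z0}$ and $\Delta_u=\hat f^l_u-f_{u0}$. Conditional on the auxiliary sample, the observations in $I_l$ are i.i.d. and independent of $\hat f^l$, by Condition \ref{cond:IID} and sample splitting. By Cauchy–Schwarz and conditional Markov, each term is bounded uniformly in $\theta_1$ by
\[
\normm{\Delta_z}_{L^2}\cdot\Bigl(\E\sup_{\theta_1}\bigl|y_{jt}(\sigma)-\alpha p_{jt}-f_{u0}\bigr|^2\Bigr)^{1/2}+M\normm{\Delta_u}_{L^2}+\normm{\Delta_z}_{L^2}\normm{\Delta_u}_{L^2},
\]
times $O_p(1)$. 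The dependence on $\theta_1$ factors out of the first term through the supremum envelope. Condition \ref{cond:PrelimConvergence} makes all three pieces $o_p(T^{-1/4})$, hence $o_p(1)$, and summing over the finitely many folds $L$ completes the uniform convergence.
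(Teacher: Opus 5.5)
Your proposal is correct and follows essentially the same route as the paper. The paper also verifies an extremum-estimator consistency lemma (Lemma~\ref{lemma:Consistency}) by splitting $\hat g-g$ into two pieces: a uniform LLN term at the true nuisance functions, and a cross-fitting remainder bounded by $\E_{T,L}\normm{\Delta_l f_z}_2$, $\E_{T,L}|\Delta_l f_u|$ and their product via the conditional-Markov argument of Lemma~\ref{lemma:ConditionalPopulation}.

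The differences are minor and concern only the uniform LLN term. The paper handles it by citing Lemma~\ref{lemma:UniformConvergence} with a constant envelope, simply asserting in \eqref{eq:BoundedPhi} the uniform boundedness of $y_{jt}(\sigma)$ over $\Theta_\sigma$ that you flag as needing justification. Your Lipschitz/finite-cover argument does the same job and is slightly better suited to a data-generating process that varies with $T$.
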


\begin{theorem}[Asymptotic Normality]
    \label{thm:AsympNorm}
    Suppose Conditions \ref{cond:Regularity1}, \ref{cond:BoundednessNO}, \ref{cond:IdMomentWeight}, \ref{cond:IdentificationG} and \ref{cond:PrelimConvergence} hold. 
     Then we have
    \[
        P^{-1}\sqrt{T}(\check{\theta}_1-\theta_{10})\overset{d}{\to}N(0,I_2),
    \]
    where $P\in\R^{2\times 2}$ is a symmetric matrix s.t. $P\cdot P=(GWG ')^{-1}GW\Omega WG '(GWG ')^{-1}$, $\Omega=var(\E_J\epsilon^z_{jt}\xi_{jt})=\frac{1}{J}\E[\epsilon^z_{jt}\epsilon^{z\prime}_{jt}\xi_{jt}^2]$,
     and $G=\E_J\E[
            \begin{bmatrix}
                \frac{\partial }{\partial \sigma}y_{jt}(\sigma_0)\\
                p_{jt}
            \end{bmatrix}
            \cdot \epsilon^{z\prime}_{jt}]$
\end{theorem}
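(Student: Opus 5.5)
We follow the standard GMM argument for asymptotic normality, in the Neyman-orthogonal/cross-fitting form of \cite{chernozhukov2018double}. Throughout, Theorem \ref{thm:Consistency} gives $\check{\theta}_1\overset{p}{\to}\theta_{10}$.

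\textbf{Step 1: first-order condition and mean-value expansion.} Write $\hat\Psi(\theta_1)=\E_{T,L}[\psi_t(\theta_1,\hat f_z^l,\hat f_u^l)]$ and let $\hat G(\theta_1)=\partial\hat\Psi(\theta_1)'/\partial\theta_1$, a $2\times d_z$ matrix. Since $\check\theta_1$ is interior w.p.a.1, the first-order condition is $\hat G(\check\theta_1)\hat W\hat\Psi(\check\theta_1)=0$. A mean-value expansion of $\hat\Psi$ around $\theta_{10}$ (row by row) then gives
\[
\sqrt T(\check\theta_1-\theta_{10})=-\bigl(\hat G(\check\theta_1)\hat W\hat G(\bar\theta_1)'\bigr)^{-1}\hat G(\check\theta_1)\hat W\sqrt T\hat\Psi(\theta_{10}).
\]

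\textbf{Step 2: the Jacobian.} We have
\[
\hat G(\theta_1)=\E_{T,L}\E_J\bigl[(\partial_\sigma y_{jt}(\sigma),\,-p_{jt})'(z_{jt}-\hat f^l_z(x_{jt}))'\bigr].
\]
The sign of the $p$ entry is immaterial, since the sandwich formula is invariant to it. Smoothness of $y_{jt}(\cdot)$ follows from Condition \ref{cond:SmoothDensity1} and the implicit function theorem. Together with boundedness (Condition \ref{cond:BoundednessVar1}), compactness, and a ULLN, this yields $\sup_{\theta_1}\|\hat G(\theta_1)-G(\theta_1)\|=o_p(1)$. Three ingredients are needed:
\begin{itemize}
\item $\|\hat f_z^l-f_{z0}\|_{L^2}=o_p(1)$, by Condition \ref{cond:PrelimConvergence}, to replace $z_{jt}-\hat f_z^l$ by $\epsilon^z_{jt}$;
\item consistency of $\check\theta_1$ and $\bar\theta_1$;
\item $\E[\partial_\sigma y_{jt}(\sigma_0)\, f_{z0}\text{-terms}]$, which drop out because $\E[\epsilon^z|x]=0$ (so the limit is exactly $G$).
\end{itemize}
Condition \ref{cond:IdentificationG}(i) and Condition \ref{cond:WeightMatrix} then make $(\hat G\hat W\hat G')^{-1}\hat G\hat W=(GWG')^{-1}GW+o_p(1)$.

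\textbf{Step 3: the score, which is the main obstacle.} Since $y_{jt}(\sigma_0)-\alpha_0p_{jt}-f_{u0}(x_{jt})=\xi_{jt}$, on fold $l$ write $\Delta_z=\hat f_z^l-f_{z0}$ and $\Delta_u=\hat f_u^l-f_{u0}$. Then
\[
\psi_t(\theta_{10},\hat f^l_z,\hat f^l_u)=\E_J\bigl[\epsilon^z_{jt}\xi_{jt}-\Delta_z\xi_{jt}-\epsilon^z_{jt}\Delta_u+\Delta_z\Delta_u\bigr].
\]
The four terms are handled as follows.
\begin{itemize}
\item The leading term $\sqrt T\E_T\E_J\epsilon^z_{jt}\xi_{jt}$ is a normalized sum of i.i.d., bounded, mean-zero vectors with variance $\Omega$, bounded below by Condition \ref{cond:IdentificationOmega}. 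By the Lindeberg CLT (triangular array, since the DGP varies with $T$), $\Omega^{-1/2}$ times it converges to $N(0,I)$.
\item For the cross terms, condition on the complementary sample $I_l^c$, so that $\hat f^l$ is fixed. Because $\E[\xi|x,z]=0$, we have $\E[\Delta_z(x)\xi\mid x]=0$. Because $\E[\epsilon^z|x]=0$, we have $\E[\epsilon^z\Delta_u(x)]=0$. Hence each cross term has conditional mean zero and conditional second moment at most $M^2\|\Delta\|_{L^2}^2=o_p(1)$. Chebyshev's inequality conditionally, followed by the conditional-to-unconditional lemma of \cite{chernozhukov2018double}, makes both terms $o_p(1)$ after multiplication by $\sqrt T$.
\item For the product term, Cauchy--Schwarz and Markov give a bound of $\sqrt T\|\Delta_z\|_{L^2}\|\Delta_u\|_{L^2}=\sqrt T\,o_p(T^{-1/2})=o_p(1)$. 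This is exactly where the $o_p(T^{-1/4})$ rates are used.
\end{itemize}
Since $L$ is fixed, summing over folds preserves these bounds. The delicate point is the product term together with the conditional-independence bookkeeping, so that is where care is needed.

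\textbf{Step 4: conclusion.} Combining Steps 1--3 gives
\[
\sqrt T(\check\theta_1-\theta_{10})=-(GWG')^{-1}GW\sqrt T\E_T\E_J\epsilon^z\xi+o_p(1).
\]
This has asymptotic variance $P^2$, and $\lambda_{\min}(P^2)$ is bounded away from zero by Condition \ref{cond:IdentificationG}. The CLT with Slutsky's lemma then yields $P^{-1}\sqrt T(\check\theta_1-\theta_{10})\overset{d}{\to}N(0,I_2)$.
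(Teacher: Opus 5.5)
Your route is the paper's route. It has the first-order condition and mean-value expansion, and the convergence of both Jacobian factors to $G$. The score at $\theta_{10}$ splits into $\sqrt T\,\E_{JT}\epsilon^z_{jt}\xi_{jt}$, two cross terms and a product term. The cross terms are conditionally mean zero and are killed by conditional Chebyshev plus Lemma \ref{lemma:ConditionalPopulation}. The product term is handled by Cauchy--Schwarz with the $o_p(T^{-1/4})$ rates, and Lindeberg--Feller finishes. You invoke $\E[\xi_{jt}\mid x_{jt},z_{jt}]=0$ and $\E[\epsilon^z_{jt}\mid x_{jt}]=0$ explicitly, which is exactly what the paper's ``cross terms vanish due to independence'' step silently needs. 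Your row-by-row mean-value expansion is also more careful than the paper's single $\bar\theta_1$.

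The one claim that fails is that the sign of the $p$ entry is immaterial because the sandwich is invariant. With $D=\mathrm{diag}(1,-1)$, the true Jacobian is $DG$, and
\[
(DGWG'D)^{-1}DGW\Omega WG'D\,(DGWG'D)^{-1}=D\,P^2D=(DPD)^2.
\]
So flipping the sign conjugates the sandwich by $D$. The marginal variances of $\check\sigma$ and $\check\alpha$ are unchanged, but their covariance changes sign. Your argument therefore proves $(DPD)^{-1}\sqrt T(\check\theta_1-\theta_{10})\overset{d}{\to}N(0,I_2)$. With $P$ built from the statement's $G$ (entry $+p_{jt}$), the limit is $N(0,P^{-1}DP^2DP^{-1})$, which equals $I_2$ only if $P^2$ is diagonal.

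The paper's proof makes the same slip: in part (i) it differentiates $-\alpha p_{jt}$ to $+p_{jt}$. The right fix is to define $G$ with $-p_{jt}$, not to invoke invariance. Single-coefficient $t$-tests are unaffected.

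Two smaller points:
\begin{enumerate}
\item ``$\check\theta_1$ is interior w.p.a.1'' requires $\theta_{10}\in\mathrm{int}(\Theta_\sigma\times\Theta_\alpha)$. This is not among the stated conditions; the paper uses the first-order condition tacitly as well, so state it as an assumption.
\item Because $G$, $W$ and $\Omega$ drift with $T$, there is no fixed limit matrix for Slutsky in Step 4. Either apply the Lindeberg CLT directly to $P^{-1}(GWG')^{-1}GW\sqrt T\,\E_{JT}\epsilon^z_{jt}\xi_{jt}$, whose covariance is exactly $I_2$, as the paper does, or add a subsequence argument.
\end{enumerate}
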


\noindent\textbf{Comment.} 
Theorem \ref{thm:Consistency} establishes the consistency of the Neyman orthogonal estimator. This result is crucial for the estimator to be asymptotically normal. 
The consistency proof builds upon the standard framework of \cite{newey1994large}, with slight adaptations to account for the existance of the first step estimators.
Theorem \ref{thm:AsympNorm} establishes the asymptotic normality of the Neyman orthogonal estimator, which is the main result of this article.
The asymptotic normality proof strategy builds upon the standard framework with slight adaptations to account for the existance of the nuisance parameter estimators.

\subsection{Estimation of nuisance parameters $f_{z0}$, $f_{u0}$}
The motivation for estimating $f_{z0}$, $f_{u0}$ is to mitigate the first-order bias that arises when the estimator $\hat{f}_u$ is plugged into a non-orthogonal moment function.
Note that one can choose any machine learning method to estimate $f_{z0}$ and $f_{u0}$ depending on underlying data assumptions. 
In this subsection, we demonstrate a specific estimation approach for $f_{z0}$ and $f_{u0}$ utilizing $l_1$-regularization.
To achieve the desired convergence rate, we impose approximate sparsity assumption on functions, $f_{u0}$ and $f_{z0}$ as well as the auxiliary function $f_{p0}$
\begin{equation}\label{eq:highdimBLP_price}
    p_{jt}=f_{p0}(x_{jt},z_{jt})+\epsilon^p_{jt}, \quad \E[\epsilon^p_{jt}\cdot(x_{jt}',z_{jt}')]=0, \quad \E[\epsilon^p_{jt}\cdot f_{p0}(x_{jt},z_{jt})]=0
\end{equation}
To illustrate the estimation strategy under this approximate sparsity assumption, we first define parameter space, moment function and empirical moment function as follows:
\begin{enumerate}
    \item Parameter space: $(\Theta,\norm{\cdot}_\Theta)=(\Theta_\sigma\times \Theta_{\alpha,f_{u}},|\cdot | \times \norm{\cdot}_{\Theta_{\alpha,f_{u}}})$, 
    where compact sets $\Theta_\sigma\subset \mathbb{R}_+, \Theta_{\alpha,f_{u}}$ is an infinite dimensional space, $\times$ means Cartesian product, and 
    $\norm{\cdot}_{\Theta_{\alpha,f_{u}}}$ is a pseudo-norm that is utilized to measure the distance between the true parameters 
    $(\alpha_0,f_{u0})$ and the parameters $(\alpha,f_{u})$. 
    For example, it can be defined as $\norm{(\alpha,f_{u})'-(\alpha_0,f_{u0})'}_{\Theta_{\alpha,f_{u}}}= \E_J \E[(\alpha p_{jt}+f_{u}(x_{jt})-\alpha_0p_{jt}-f_{u0}(x_{jt}))^p]^{\frac{1}{p}}$ where $p\in[1,+\infty]$.
    Note this pseudo-norm is allowed to change with $T$.
    Define $\theta_{10}:=(\sigma_0,\alpha_0)\in \Theta_{\sigma}\times \Theta_{\alpha}$, where $\Theta_{\alpha}=\pi_{\alpha}(\Theta_{\alpha,f_u})$, and $\pi_{\alpha}$ is the projection onto the $\alpha$ component.
    \item Moment function: $g(\theta)=\E_J \E\tilde{z}_{jt}(y_{jt}(\sigma)-\alpha p_{jt}-x_{jt}'\beta)$, where $\tilde{z}_{jt}\in\mathbb{R}^{d_{\tilde{z}}}$ is a function of $z_{jt}$ and $x_{jt}$ and $d_{\tilde{z}}\geq d_x+2$.
    For example, a natural choice is $\tilde{z}_{jt}=(x_{jt}',z_{jt}')'$. 
    \item Empirical moment function: $\hat{g}^l(\theta)=\frac{1}{J\cdot |I_l^c|}\sum_{t\in I_l^c}\sum_{j=1}^J \tilde{z}_{jt}(y_{jt}(\sigma)-\alpha p_{jt}-x_{jt}'\beta)$, where $I_l$ is the $l$-th group of the sample splitting and $T_l$ is the number of observations in $I_l^c$.
\end{enumerate}
Then we present the following approximate sparsity assumption on the functions $f_{z0}$, $f_{u0}$ and $f_{p0}$:
\begin{align}
    f_{z0}(x_{jt})&=\Pi_0x_{jt} + r_{jt}^z,\quad \E \normm{r_{jt}^{z}}_2^2\leq c_z^2\lesssim d_0/T,\quad \normm{\Pi_0}_0 \leq d_0 \label{eq:highdimBLP_approx_z}\\
    f_{u0}(x_{jt})&=x_{jt}'\beta_0 + r_{jt}^u,\quad \E (r_{jt}^{u})^2\leq c_u^2\lesssim d_0/T,\quad \normm{\beta_0}_0 \leq d_0 \label{eq:highdimBLP_approx_u}\\
    f_{p0}(x_{jt},z_{jt})&=x_{jt}'\beta_{px0}+z_{jt}'\beta_{pz0} + r_{jt}^p,\quad \E (r_{jt}^{p})^2\leq c_p^2\lesssim d_0/T,\quad \normm{\beta_{px0}}_0 \leq d_0 \label{eq:highdimBLP_approx_p} 
\end{align}
where $r_{jt}^p:=f_{p0}(x_{jt},z_{jt})-x_{jt}'\beta_{px0}-z_{jt}'\beta_{pz0}$, $r_{jt}^u:=f_{u0}(x_{jt})-x_{jt}'\beta_0$ and $r_{jt}^z:=f_{z0}(x_{jt})-\Pi_0x_{jt}$ are the approximation errors.
This motivates the use of LASSO method to estimate $f_{p0}$ and $f_{z0}$, as studied in \cite{belloni2014high}.
In the food example, it is reasonable to assume that the instrument, such as own-cost-shifters in \cite{backus2021common}, and the price can be well approximated by a small subset of the product nutritions.

Relying on the estimator $\hat{f}_p$, with a convergence rate of $T^{-1/4}$, and the approximate sparsity on $f_{u0}$, we can combine two separate $l_1$-penalized minimizations to obtain a preliminary estimator $\hat{f}_u$ with a convergence rate of $T^{-1/4}$.

\subsubsection{Estimation of $f_{z0}$ and $f_{p0}$}
Note equations (\ref{eq:highdimBLP_approx_z}) and (\ref{eq:highdimBLP_approx_p}) imply that instruments $z_{jt,i}$, where $i=1,\dots,d_z$, can be represented as a linear combination of product characteristics $x_{jt}$ plus the approximation error and the noise term.
Similary, price $p_{jt}$ can be represented as a linear combination of product characteristics $x_{jt}$ and instruments $z_{jt}$ plus the approximation error and the noise term.
\begin{align*}
    z_{jt,i}&=x_{jt}'\Pi_i + r_{jt,i}^z+\epsilon^{z}_{jt,i}\\
    p_{jt}  &=x_{jt}'\beta_{px0}+z_{jt}'\beta_{pz0} + r_{jt}^p+\epsilon^{p}_{jt}
\end{align*}
where $\Pi_i$ is the i-th row of $\Pi_0$, i.e., $\Pi_0=(\Pi_1,\dots,\Pi_{d_z})'$.
We apply the approach described in \cite{belloni2014inference} geared for non-Gaussian cases to each equation.
With estimated $\hat{\Pi}_i$ and $(\hat{\beta}_{px}', \hat{\beta}_{pz}')$, we can construct the estimators of $f_{z0}$ and $f_{p0}$.
More formally, for each $l=1,\dots,L$, consider the sample $I_l^c$ after the sample splitting, we define
\begin{equation}\label{eq:Pi_hat}
    \hat{\Pi}_i^l = \arg\min\limits_{\Pi_i}\frac{1}{J\cdot |I_l^c|}\sum_{t\in I_l^c}\sum_{j=1}^J(z_{jt,i}-\Pi_i' x_{jt})^2 +\lambda_\Pi\norm{\Pi_i}_1
\end{equation}
and
\begin{equation}\label{eq:betap_hat}
    (\hat{\beta}_{px}^{l\prime}, \hat{\beta}_{pz}^{l\prime}) = 
    \arg\min\limits_{\beta_{px},\beta_{pz}}\frac{1}{J\cdot |I_l^c|}\sum_{t\in I_l^c}\sum_{j=1}^J(p_{jt}-x_{jt}'\beta_{px}-z_{jt}'\beta_{pz})^2 +\lambda_{\beta_{px},\beta_{pz}}\normm{(\beta_{px}',\beta_{pz}')'}_1 
\end{equation}

The penalty term $\lambda_\Pi$ and $\lambda_{\beta_{px},\beta_{pz}}$ are chosen by the user.
The simulation applies cross-validation to select $\lambda_\Pi$ and $\lambda_{\beta_{px},\beta_{pz}}$.
Alternatively, the researcher may adopt the theoretically suggested values 
$\lambda_\Pi=C_{\Pi}\sqrt{d_0\log(d_x\vee T)/T}$ and $\lambda_{\beta_{px},\beta_{pz}}=C_{\beta_{px},\beta_{pz}}\sqrt{\log(d_x\vee T)/T}$, where $C_{\Pi}$ and $C_{\beta_{px},\beta_{pz}}$ are constants.
Then define $\hat{\Pi}=(\hat{\Pi}_1,\dots,\hat{\Pi}_{d_z})'$.
Consequently we have $\hat{f}_z^l(x_{jt})=\hat{\Pi}^lx_{jt}$ and $\hat{f}_p^l(x_{jt},z_{jt})=x_{jt}'\hat{\beta}_{px}^l+z_{jt}'\hat{\beta}_{pz}^l$.

\subsubsection{Estimation of $f_{u0}$}
This part combines the $l_1$-penalized minimization that motivates from the Restricted Minimum Distance (RMD) estimator in \cite{belloni2018high} and LASSO method.
The difficulty of this problem lies in the high-dimensionality and the nature of nonlinearity and noncovexity of the moment function $g(\theta)$ itself.
The issue of high-dimensionality and nonconvexity are addressed by the penalized minimizer, while the nonlinearity issue is addressed by the LASSO method.
The idea is to first obtain a well-performed estimator of $\sigma$, the "nonlinear parameter" in BLP literature, and then transforming this problem into an approximately linear one, which is addressed using LASSO.

To be more precise, in step A, the $l_\infty$ norm is used since it outperforms the $l_2$ norm in high dimensions due to several reasons.
First, the $l_2$ norm aggregates squared deviations across all $d_{\tilde{z}}$ dimensions, causing the total noise floor to accumulate at a rate proportional to the dimensionality. 
In contrast, the $l_\infty$ norm focuses exclusively on the largest absolute deviation. 
Under mild conditions (such as sub-Gaussian noise), this maximum error grows only at a rate of $\sqrt{\log d_{\tilde{z}}}$, allowing for significantly tighter control of the estimation error as the dimension increases
and thus enabling a faster convergence rate for the estimator.
Second, the $l_2$ norm acts as an "averaging" operator, which allows errors from many irrelevant or noisy dimensions to be "smeared" across the entire parameter vector, potentially biasing the final estimate.
The $l_\infty$ norm, however, acts as a uniform constraint. 
By ensuring that the worst-case component is minimized, it prevents any single dimension from being significantly off-track.

Since $p_{jt}$ is endogenous, we need to include instruments in $\tilde{z}_{jt}$ to ensure the identification.
In practice, one can choose higher order terms of $z_{jt}$ and $x_{jt}$ to be included in $\tilde{z}_{jt}$, which is common in the BLP literature.
Step B is intuitive as it is a LASSO regression with the "predicted price" $\hat{f}_{p}(x_{jt},z_{jt})$ as the replacement of the actual price, $p_{jt}$.
More formally, for each $l=1,\dots,L$, consider the sample $I_l^c$ after the sample splitting. 
The estimation procedure for $f_{u0}$ is as follows:
\begin{itemize}
    \item Step A: An $l_1$-penalized minimization problem 
    \begin{equation}\label{eq:theta_tilde}
        \tilde{\theta}^l=(\tilde{\sigma}^l,\tilde{\alpha}^l,\tilde{\beta}^l)=
        \arg\min_{\substack{\sigma\in\Theta_\sigma\\
        (\alpha,\beta')\in \R^{d_x+1}}}\{\norm{\hat{g}^l(\theta)}_\infty+\lambda_{\tilde{\theta}}\norm{(\alpha,\beta')'}_1\}
    \end{equation}
    \item Step B: Run a LASSO regression of $y_{jt}(\tilde{\sigma})$ on $x_{jt}$ and "predicted price" $\hat{f}_{p}^{l}(x_{jt},z_{jt})$ and 
    obtain the estimator $\hat{\beta}$, $\hat{\alpha}$, i.e.,
    \begin{equation}\label{eq:theta_hat}
        (\hat{\alpha}^l,\hat{\beta}^l)=\arg\min_{\alpha,\beta}\{\E_{JT}[(y_{jt}(\tilde{\sigma}^l)-\alpha\hat{f}_{p}^{l}(x_{jt},z_{jt})-x_{jt}'\beta)^2]+\lambda_{\beta}\normm{(\alpha,\beta')}_1\}
    \end{equation}
    Then define $\hat{\sigma}^l=\tilde{\sigma}^l$, $\hat{\theta}^l=(\hat{\sigma}^l,\hat{\alpha}^l,\hat{\beta}^l)$.
\end{itemize}
The tuning parameters $\lambda_{\tilde{\theta}}$ and $\lambda_{\beta}$ are chosen by cross validation in the simulation. 
Alternatively, the researcher may choose the theoretically suggested values
$\lambda_{\tilde{\theta}}=C_{\tilde{\theta}}\sqrt{\log(d_{\tilde{z}}\vee T)/T}$ and $\lambda_{\beta}=C_{\beta}\sqrt{d_0\log(d_x\vee T)/T}$, 
where $C_{\tilde{\theta}}$ and $C_{\beta}$ are constants.

\subsection{Theory for the convergence rates of nuisance parameter estimators}
In this section, we provide regularity conditions that are sufficient for the desired convergence rate of the nuisance parameter estimators, $f_{z0}$ and $f_{u0}$.
Note that $\tilde{z}_{jt}$ is the vector function of $x_{jt}$ and $z_{jt}$ used in the moment function $g(\theta)$.

\begin{condition}\label{cond:Regularity2}\leavevmode
    \begin{enumerate}[label=(\roman*), ref=\thecondition(\roman*)]
        \item Data $(w_t)_{t=1}^T$ obey the model (\ref{eq:highdimBLP_price}).\label{cond:Exogeneity2}
        \item The maximum eigenvalue of $\lambda_{max}(\E x_{jt} x_{jt}')\leq M$.
            \label{cond:MaxEig}
    \end{enumerate}
\end{condition}

\begin{condition}[\textbf{Boundedness for Nuisance parameter estimators}]\label{cond:BoundednessNuisance}\leavevmode
    \begin{enumerate}[label=(\roman*), ref=\thecondition(\roman*)]
        \item $l_1$ norm of all parameters and all random variables except individual preference $b_i$ are bounded, i.e., 
        $\normm{\tilde{z}_{jt}}_{\infty}$, $|p_{jt}|$, $\norm{x_{jt}}_\infty$, $\normm{\epsilon^{z}_{jt}}_\infty$, $|\xi_{jt}|$, $\norm{\theta_{10}}_1$, $\normm{\beta_0}_1$, $\norm{\Pi_{0}}_1\leq M$ 
            \label{cond:BoundednessVar2}
    \end{enumerate}
\end{condition}
\noindent\textbf{Comment.}
Condition \ref{cond:MaxEig} and \ref{cond:BoundednessVar2} impose boundedness on the eigenvalues, parameters and random variables to ensure the convergence of the constructed estimator of $f_{u0}$.
One may replace the boundedness condition with conditions on moments, as stated in Condition SE and SM in \cite{belloni2014inference}.
Here we choose the boundedness condition for simplicity.

\begin{condition}[\textbf{Approximate sparsity}]\label{cond:ApproxSparsity}\leavevmode
    \begin{enumerate}[label=(\roman*), ref=\thecondition(\roman*)]
        \item \label{cond:Approx}Functions $f_p$ and $f_{z0}$ admit an approximately sparse form.
        Namely there exists $\beta_{px0}\in\mathbb{R}^{d_x}$, $\beta_{pz0}\in\mathbb{R}^{d_z}$, $\Pi_0\in\mathbb{R}^{d_z\times d_x}$, which depend on $T$, s.t. 
        \begin{align}
            f_{z0}(x_{jt})&=\Pi_0x_{jt} + r_{jt}^z,    &\E\normm{r_{jt}^{z}}_2^2\lesssim d_0/T,     \quad&\normm{\Pi_0}_0 \leq d_0\\
            f_{u0}(x_{jt})&=x_{jt}'\beta_0 + r_{jt}^u,  &\E (r_{jt}^{u})^2\lesssim d_0/T,    \quad&\normm{\beta_0}_0 \leq d_0\\
            f_{p0}(x_{jt},z_{jt})&=x_{jt}'\beta_{px0}+z_{jt}'\beta_{pz0} + r_{jt}^p,&\E (r_{jt}^{p})^2\lesssim d_0/T,    \quad&\normm{\beta_{px0}}_0 \leq d_0 
        \end{align}
        \item The sparsity index $d_0$ obeys $\sqrt{\frac{d_0^2\log(d_{x}\vee T)}{T}}=o(T^{-1/4})$\label{cond:Sparsity}
    \end{enumerate}
\end{condition}
\noindent\textbf{Comment.}
Condition \ref{cond:ApproxSparsity} is the key assumption ensuring that the $l_1$-penalized estimators $\hat{f}_z$, $\hat{f}_u$, and $\hat{f}_p$ converge at the desired rate.
Specifically, we require that the approximation errors vanish at the rate of $\sqrt{d_0/T}$, which matches the oracle convergence rate of the estimated coefficients that is achievable if the identities of the relevant controls were known a priori.
This rate ensures that approximation errors do not dominate the overall estimation error when employing LASSO to estimate $f_{z0}$ and $f_p$.
Such conditions are standard in the high-dimensional econometrics literature (see, e.g., \cite{belloni2011high, belloni2014inference}).

The next condition concerns the behaviour of the Gram matrices \\
$\frac{1}{J|I_l^c|}\sum_{j=1}^J\sum_{t\in I_l^c}(f_p(x_{jt},z_{jt}),x_{jt}')'(f_p(x_{jt},z_{jt}),x_{jt}')$, $\frac{1}{JT_l}\sum_{j=1}^J\sum_{t\in I_l}(z_{jt}',x_{jt}')'(z_{jt}',x_{jt}')$,\\
and $\frac{1}{J|I_l^c|}\sum_{j=1}^J\sum_{t\in I_l^c}x_{jt}x_{jt}'$ for $l=1,\dots,L$, which are crucial for the convergence of the LASSO estimators.
We say a semi-definite matrix $M$ satisfies the restricted eigenvalue condition over $S$ with parameters $(\kappa,\nu)$ if
\[
    \Delta'M\Delta \geq \kappa\norm{\Delta}_2^2\quad \forall \Delta\in\mathbb{C}_{\nu}(S):=\{\Delta\in\mathbb{R}^p|\norm{\Delta_{S^c}}_1\leq \nu\norm{\Delta_S}_1\}
\]
Define the support indices for the parameters in the LASSO estimation as follows.
Let $S_1$ be the support of the vector $(\alpha_0,\beta_0')$. 
Let $S_2$ be the union of the support of the vectors $(\beta_{pz0}',\beta_{px0}')$ and $(\beta_{pz0}',\Pi_i')$, $i=1,\dots,d_z$, i.e.,
$S_2 = \{j: \beta_{pz0,j}\neq 0\}\cup\{j+d_z: \beta_{px0,j}\neq 0\}\cup\left(\cup_{i=1}^{d_z} \{j+d_z: \Pi_{ij}\neq 0\}\right)$.
\begin{condition}[\textbf{Restricted eigenvalue condition}]\label{cond:RestrictEig}\leavevmode     
    \begin{enumerate}[label=(\roman*), ref=\thecondition(\roman*)]
        \item \label{cond:RE_px} 
        $\frac{1}{J|I_l^c|}\sum_{j=1}^J\sum_{t\in I_l^c}(f_p(x_{jt},z_{jt}),x_{jt}')'(f_p(x_{jt},z_{jt}),x_{jt}')$ satisfies the restricted eigenvalue condition over $S_1$ with parameters $(\kappa_1,\nu_1)$, where $\kappa_1>0$ and $\nu_1> 0$, for all $l=1,\dots,L$.
        \item \label{cond:RE_zx} 
        $\frac{1}{J|I_l^c|}\sum_{j=1}^J\sum_{t\in I_l^c}(z_{jt}',x_{jt}')'(z_{jt}',x_{jt}')$ satisfies the restricted eigenvalue condition over $S_2$ with parameters $(\kappa_2,\nu_2)$, where $\kappa_2>0$ and $\nu_2> 0$, for all $l=1,\dots,L$.
    \end{enumerate}
\end{condition}

\noindent\textbf{Comment} Condition \ref{cond:RestrictEig} is a standard assumption in LASSO literature (\cite{wainwright2019high}).
Note that Condition \ref{cond:RE_zx} implies that $\frac{1}{J|I_l^c|}\sum_{j=1}^J\sum_{t\in I_l^c}x_{jt}x_{jt}'$ also satisfies the restricted eigenvalue condition over a new set $S_3$ with parameters $(\kappa_2,\nu_2)$
\footnote{
    Specifically, $S_3=\{j: \beta_{px0,j}\neq 0\}\cup\left(\cup_{i=1}^{d_z} \{j: \Pi_{ij}\neq 0\}\right)$.
}.
If it is assumed that $f_{p0}(x_{jt},z_{jt})$ is linear in $z_{jt}$ and $x_{jt}$, then condition \ref{cond:RE_zx} implies condition \ref{cond:RE_px} under mild conditions.\footnote{
    For example, if $\normm{\beta_{px0}}_1\ll \normm{\beta_{pz0}}_1$.
}

\begin{condition}[\textbf{$\sigma$ identification}]\label{cond:SigmaIdent}\leavevmode
    Define $g(\theta)=\E_J\E\tilde{z}_{jt}(y_{jt}(\sigma)-\alpha p_{jt}-f_{u}(x_{jt})): \Theta_\sigma\times\Theta_{\alpha,f_u}\to \mathbb{R}^{d_{\tilde{z}}}$.
    $\exists c$ and a pesudo-norm $\norm{\cdot}_{\Theta_{\alpha,f_u}}$ in an infinite-dimensional space s.t. 
    $\norm{g(\theta)}_\infty\geq c(|\sigma-\sigma_0|+\norm{(\alpha,f_u)-(\alpha_0,f_{u0})}_{\Theta_{\alpha,f_u}})$.
    \footnote{
        We allow the norm in the parameter space to depend on the sample size.
    }
\end{condition}

\noindent\textbf{Comment.}  
Condition \ref{cond:SigmaIdent} ensures the identifiability of $\sigma_0$, a property that holds in standard BLP models.
Specifically, in a standard framework with fixed $d_x$ and linear $f_{u0}$, it can be shown that under mild regularity conditions (e.g., \cite{freyberger2015asymptotic}), 
this condition is satisfied with the norm $\normm{\cdot}_{\Theta_{\alpha,f_u}}=\normm{\cdot}_{2}$.
Furthermore, in a special case where $\norm{(\alpha,f_u)-(\alpha_0,f_{u0})}_{\Theta_{\alpha,f_u}}\geq \normm{\beta-\beta_0}_{2}$, Step B and estimation of $f_p$ may be omitted.
In other words, $\tilde{\beta}$ already achieves the desired convergence rate. 
In addition, condition \ref{cond:SigmaIdent} ensures there is a fixed $c$ such that the inequality holds regardless of the parameter space's dimensionality and thus $\sigma_0$ is identified, while condition \ref{cond:IdentificationMoment} guarantees the validity of the instrument $\epsilon_{jt}^z$ for identification of $\theta_{10}$, which includes $\alpha$.
Consequently, neither condition implies the other.

The following theorem establishes the desired convergence rate of the preliminary estimators of the nuisance parameters under the suitable conditions.

\begin{theorem}
    \label{thm:Nuisance}
    If Condition \ref{cond:Regularity1}, \ref{cond:Regularity2}, \ref{cond:BoundednessNuisance}, \ref{cond:ApproxSparsity}, \ref{cond:RestrictEig}, \ref{cond:SigmaIdent} hold, and
    $\lambda_{\Pi}=C_{\Pi}\sqrt{\frac{d_0\log(d_x\vee T)}{T}},\,
    \lambda_{\beta_{px},\beta_{pz}}=C_{\beta_{px},\beta_{pz}}\sqrt{\frac{\log(d_x\vee T)}{T}},\,
    \lambda_{\tilde{\theta}}=C_{\tilde{\theta}}\sqrt{\frac{\log(d_{\tilde{z}}\vee T)}{T}},\,
    \lambda_{\beta}=C_{\beta}\sqrt{\frac{d_0\log(d_x\vee T)}{T}}$,
    with sufficiently large constants, then the following holds for each $l=1,\dots,L$:
    \begin{align*}
         \normm{\hat{f}_z^l (x)-f_{z0}(x)}_{L^2}&\lesssim \sqrt{\frac{d_0^2\log(d_x\vee T)}{T}}=o(T^{-1/4}) \quad w.p.a.1\\
         \normm{\hat{f}_{u}^l(x)-f_{u0}(x)}_{L^2} &\lesssim \sqrt{\frac{d_0^2\log(d_x\vee T)}{T}}=o(T^{-1/4}) \quad w.p.a.1
    \end{align*}
    
\end{theorem}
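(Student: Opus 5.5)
We fix a fold $l$; all estimators below are built on $I_l^c$, which has size of order $T$. The proof has three stages, one per nuisance estimator: $\hat f_z^l$, then $\hat f_p^l$, then $\hat f_u^l$. Each stage is a Lasso-type rate argument of the kind in \cite{wainwright2019high} and \cite{belloni2014inference}.

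\textbf{Stage 1: $\hat f_z^l$.} Each row of $\hat\Pi^l$ is a Lasso fit with additive noise $\epsilon^z_{jt,i}+r^z_{jt,i}$.
\begin{enumerate}
\item Condition~\ref{cond:BoundednessVar2} makes $x_{jt}$ and $\epsilon^z_{jt}$ bounded. Together with i.i.d.\ markets (Condition~\ref{cond:IID}), Hoeffding's inequality and a union bound over $d_x$ coordinates give
\[
\norm{\E_{JT}x_{jt}\epsilon^z_{jt,i}}_\infty\lesssim\sqrt{\log(d_x\vee T)/T}\quad\text{w.p.a.1}.
\]
The approximation error is handled by Cauchy--Schwarz: $\norm{\E_{JT}x_{jt}r^z_{jt,i}}_\infty\lesssim\sqrt{d_0/T}$ w.p.a.1, by Markov's inequality and Condition~\ref{cond:Approx}.
\item With $\lambda_\Pi$ a large multiple of $\sqrt{d_0\log(d_x\vee T)/T}$, the penalty dominates both terms of step 1. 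The error $\hat\Pi^l_i-\Pi_i$ then lies in the cone $\mathbb C_3(\mathrm{supp}\,\Pi_i)$.
\item By the note following Condition~\ref{cond:RestrictEig}, the restricted eigenvalue condition holds for $\E_{JT}x x'$ over $S_3$ on this cone. The basic inequality then yields
\[
\norm{\hat\Pi^l_i-\Pi_i}_2\lesssim\sqrt{d_0}\,\lambda_\Pi,\qquad \norm{\hat\Pi^l_i-\Pi_i}_1\lesssim d_0\lambda_\Pi .
\]
\item The bound in step 3 is an in-sample rate, whereas we need the population $L^2$ norm at a fresh point $x_{jt}$ from $I_l$. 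Write
\[
\E[(x'\Delta)^2]\le\E_{JT}(x'\Delta)^2+\norm{\E xx'-\E_{JT}xx'}_\infty\norm{\Delta}_1^2 .
\]
The sup-norm Gram deviation is $\sqrt{\log d_x/T}$, and $\norm{\Delta}_1^2\lesssim d_0^2\lambda_\Pi^2$, so the second term is of smaller order. Alternatively one can use $\lambda_{\max}(\E xx')\le M$ from Condition~\ref{cond:MaxEig}.
\item Add $\norm{r^z}_{L^2}\lesssim\sqrt{d_0/T}$. Summing over the fixed number $d_z$ of instruments gives the claimed rate $\sqrt{d_0^2\log(d_x\vee T)/T}$ for $\hat f_z^l$.
\end{enumerate}

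\textbf{Stage 2: $\hat f_p^l$.} We repeat Stage 1 with regressors $(z',x')'$, restricted eigenvalue Condition~\ref{cond:RE_zx}, and noise $\epsilon^p+r^p$. The noise is orthogonal to the regressors by (\ref{eq:highdimBLP_price}) and bounded. This gives a rate for $\hat f_p^l$ and also $\ell_1$ control on $(\hat\beta_{px}^l,\hat\beta_{pz}^l)$.

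\textbf{Stage 3, Step A: rate for $\tilde\sigma^l$.} By Condition~\ref{cond:SigmaIdent},
\[
|\tilde\sigma^l-\sigma_0|\lesssim\norm{g(\tilde\theta^l)}_\infty\le\norm{\hat g^l(\tilde\theta^l)}_\infty+\sup_\theta\norm{\hat g^l(\theta)-g(\theta)}_\infty .
\]
\begin{enumerate}
\item The comparison point is $(\sigma_0,\alpha_0,\beta_0)$. There $\hat g^l$ reduces to the average of $\tilde z(\xi+r^u)$, which is $O_p(\sqrt{\log d_{\tilde z}/T}+\sqrt{d_0/T})$. The penalty at this point is $\lambda_{\tilde\theta}\norm{\theta_0}_1\lesssim\lambda_{\tilde\theta}$ by Condition~\ref{cond:BoundednessVar2}.
\item The uniform deviation $\sup_\theta\norm{\hat g^l-g}_\infty$ is linear in $(\alpha,\beta)$. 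Over an $\ell_1$ ball it is bounded by
\[
(1+\norm{(\alpha,\beta)}_1)\,\max_k\norm{\E_{JT}\tilde z_k(1,p,x')-\E[\cdot]}_\infty
\]
plus an empirical-process term in $\sigma$. The $\sigma$ term is controlled through smoothness of $y(\sigma)$ (Condition~\ref{cond:SmoothDensity1}) and a finite net of the compact set $\Theta_\sigma$.
\item The penalty keeps $\norm{(\tilde\alpha,\tilde\beta)}_1$ bounded. Hence $|\tilde\sigma^l-\sigma_0|\lesssim\sqrt{d_0\log(d_{\tilde z}\vee T)/T}$ w.p.a.1.
\end{enumerate}

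\textbf{Stage 3, Step B: the Lasso for $f_u$.} Write
\[
y_{jt}(\tilde\sigma)=\alpha_0\hat f_p+x'\beta_0+e_{jt},\qquad e_{jt}=\xi+r^u+\alpha_0(\epsilon^p+r^p)+\alpha_0(f_p-\hat f_p)+\big(y(\tilde\sigma)-y(\sigma_0)\big).
\]
\begin{enumerate}
\item Bound the score $\norm{\E_{JT}(\hat f_p,x')'e}_\infty$ term by term:
\begin{enumerate}
\item $\xi$ and $\epsilon^p$ are mean-zero against the instrument-based regressors, giving $\sqrt{\log/T}$.
\item The approximation errors $r^u,r^p$ contribute $\sqrt{d_0/T}$.
\item The term $f_p-\hat f_p$ is bounded via Stage 2.
\item The $\sigma$-error term is bounded by $\sup|\partial_\sigma y|\cdot|\tilde\sigma-\sigma_0|$, using Step A.
\end{enumerate}
Each of these is $\lesssim\sqrt{d_0\log/T}$, so the choice of $\lambda_\beta$ dominates the score.
\item Condition~\ref{cond:RE_px} is stated for the true $f_p$. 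It transfers to the design with $\hat f_p$, at a cost of $\norm{\hat f_p-f_p}$ times cone $\ell_1$ terms.
\item The basic inequality then gives $\norm{(\hat\alpha,\hat\beta)-(\alpha_0,\beta_0)}_2\lesssim\sqrt{d_0}\lambda_\beta$ and $\norm{\cdot}_1\lesssim d_0\lambda_\beta$. Converting to population $L^2$ as in Stage 1, step 4, we obtain $\norm{x'(\hat\beta-\beta_0)}_{L^2}+\norm{r^u}_{L^2}\lesssim\sqrt{d_0^2\log(d_x\vee T)/T}$.
\item Condition~\ref{cond:Sparsity} converts this to $o(T^{-1/4})$.
\end{enumerate}

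\textbf{Main obstacle.} The hardest step is Step A. It needs a uniform-in-$\sigma$ sup-norm concentration of the nonlinear inverted shares $y_{jt}(\sigma)$ times $\tilde z$. It must also control how the error in $\tilde\sigma$ and the generated regressor $\hat f_p$ propagate into Step B without destroying the cone and restricted eigenvalue argument. For the former, I would first try a Lipschitz-in-$\sigma$ bracketing argument using boundedness of $\partial_\sigma y$ on the compact set $\Theta_\sigma$.
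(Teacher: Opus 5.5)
Your three-stage architecture matches the paper's proof: $\hat f_z^l$; then $\hat f_p^l$; then $\tilde\sigma^l$ from Condition~\ref{cond:SigmaIdent} and the basic inequality of the $\ell_\infty$-penalized problem; then a Lasso on the generated regressor $\hat f_p^l$, with the restricted eigenvalue inherited from $f_p$. However, Stage 2 as written does not deliver the rate that Step B uses. You ``repeat Stage 1'' with noise $\epsilon^p+r^p$ and assert that this noise is orthogonal to $(z',x')'$ and bounded. Equation~(\ref{eq:highdimBLP_price}) gives this only for $\epsilon^p$. For the sparse-approximation error you know only $\E(r^p_{jt})^2\lesssim d_0/T$, so its score is controlled only as in Stage 1: $\norm{\E_{JT}(z_{jt}',x_{jt}')'r^p_{jt}}_\infty=O_p(\sqrt{d_0/T})$. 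The theorem fixes $\lambda_{\beta_{px},\beta_{pz}}\asymp\sqrt{\log(d_x\vee T)/T}$, which is of smaller order once $d_0\gg\log(d_x\vee T)$. So the penalty is not guaranteed to dominate the score, and your cone step breaks.

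Raising the penalty to $\sqrt{d_0\log(d_x\vee T)/T}$, as in Stage 1, does not help. You would obtain only $\sqrt{\E_{JT}(\hat f_p-f_p)^2}\lesssim\sqrt{d_0^2\log(d_x\vee T)/T}$, which is too slow. Your Step B score term (c), $\norm{\E_{JT}(\hat f_p,x')'\alpha_0(f_p-\hat f_p)}_\infty$, is controlled only through this in-sample prediction error. You would therefore need $\lambda_\beta\gtrsim\sqrt{d_0^2\log(d_x\vee T)/T}$ instead of the prescribed $\lambda_\beta$. You would end with $\normm{(\hat\alpha,\hat\beta')-(\alpha_0,\beta_0')}_2\lesssim\sqrt{d_0^3\log(d_x\vee T)/T}$, which Condition~\ref{cond:Sparsity} does not make $o(T^{-1/4})$; take $d_0=T^{1/5}$ with $d_x$ polynomial in $T$.

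The missing idea, which the paper takes from \cite{belloni2011high}, is to keep the approximation error out of the score. Write $\tilde X_p$ for the design of $(z_{jt}',x_{jt}')$ and $\hat\Delta_p$ for the coefficient error. In the basic inequality, bound the cross term by $\frac{2}{JT}r^{p\prime}\tilde X_p\hat\Delta_p\le 2\hat c_p\cdot\frac{1}{\sqrt{JT}}\normm{\tilde X_p\hat\Delta_p}_2$, with $\hat c_p=\normm{r^p}_2/\sqrt{JT}=O_p(\sqrt{d_0/T})$. Let $\lambda_{\beta_{px},\beta_{pz}}$ dominate only $\norm{\tilde X_p'\epsilon^p/JT}_\infty$, and split cases. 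If the prediction norm is at least $2\hat c_p$, the $\hat c_p$ term is absorbed, $\hat\Delta_p$ lies in the cone, and Condition~\ref{cond:RE_zx} applies; otherwise the bound is immediate. Either way $\frac{1}{\sqrt{JT}}\normm{\tilde X_p\hat\Delta_p}_2\lesssim\sqrt{d_0}\lambda_{\beta_{px},\beta_{pz}}+\hat c_p\lesssim\sqrt{d_0\log(d_x\vee T)/T}$. This is what makes every Step B score term $\lesssim\sqrt{d_0\log(d_{\tilde z}\vee T)/T}$. It is also why $\lambda_\Pi$ carries an extra $\sqrt{d_0}$ and $\lambda_{\beta_{px},\beta_{pz}}$ does not: $\hat\Pi$ needs parameter recovery, while $\hat f_p$ needs only prediction.

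A smaller point concerns Step A. With $\tilde\gamma=(\tilde\alpha,\tilde\beta')'$, the basic inequality gives only $\norm{\tilde\gamma}_1\le\norm{\gamma_0}_1+\norm{\hat g^l(\theta_0)}_\infty/\lambda_{\tilde\theta}=O_p(1+\sqrt{d_0/\log(d_{\tilde z}\vee T)})$, because of $r^u$. So ``the penalty keeps $\norm{(\tilde\alpha,\tilde\beta)}_1$ bounded'' is not justified. Your rate for $\tilde\sigma^l$ survives anyway, but the clean route is the paper's. Keep $-\lambda_{\tilde\theta}\norm{\tilde\gamma}_1$ on the right-hand side, and take $\lambda_{\tilde\theta}$ at least twice the w.p.a.1 bound on $\norm{(\E_{JT}-\E)\tilde z_{jt}(p_{jt},x_{jt}')}_\infty$. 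Then the $\norm{\tilde\gamma}_1$ term from the uniform deviation is absorbed.
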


\noindent\textbf{Comment.} 
Theorem \ref{thm:Nuisance} asserts that the preliminary estimators $\hat{f}_z$ and $\hat{f}_u$ converge at desired rates. 
This result plays the same role as the lemma 1 does in \cite{belloni2014high} and is crucial for the Neyman orthogonal estimator to achieve the desired asymptotic normality.
We set the tuning parameter $\lambda_{\Pi}=C_{\Pi}\sqrt{\frac{d_0\log(d_x\vee T)}{T}}$ which converges to 0 at a slower rate than $\lambda_{\beta_{pz},\beta_{px}}=C_{\Pi}\sqrt{\frac{\log(d_x\vee T)}{T}}$.
The reason is the following. 

When applying LASSO method with tuning parameter $\lambda_{\Pi}$ to estimate $\Pi_{i}$ in equation (\ref{eq:Pi_hat}), we must bound the convergence rate of the \textbf{parameter estimation error}, 
denoted by $\normm{\hat{\Pi}_{i}-\Pi_{i}}_2$. 
In contrast, when applying LASSO method with tuning parameter $\lambda_{\beta_{px},\beta_{pz}}$ to estimate $\beta_{px0}$ and $\beta_{pz0}$ in equation (\ref{eq:betap_hat}), 
we only need to control the convergence of the \textbf{prediction error} $\hat{f}_p-f_p$.
Parameter recovery is a more stringent requirement because it demands that the estimated coefficients converge to the true values. 
Prediction, in contrast, can be accurate even if the model is misspecified (i.e., the true functional form differs from the assumed one), as long as the fitted values are close to the true values.
The total error in parameter estimation arises from two sources: the noise term and the approximation error resulting from the model's functional form.
To ensure that $\normm{\hat{\Pi}_{i}-\Pi_{i}}_2$ converges to zero, the tuning parameter $\lambda_{\Pi}$ must be chosen large enough to dominate both sources of error. 
In particular, it must be sufficiently large to “kill” the approximation error—otherwise, it will not vanish, and the parameter estimates will not be consistent.
If only prediction error matters, the condition is much weaker. 
A model can yield excellent predictions even when it is misspecified.
Consequently the tuning parameter $\lambda_{\beta_{px},\beta_{pz}}$ can be chosen smaller without harming predictive performance.

\begin{corollary}
    Suppose condition \ref{cond:Regularity1}, \ref{cond:IdMomentWeight}, \ref{cond:IdentificationG}, \ref{cond:Regularity2}, \ref{cond:BoundednessNuisance}, \ref{cond:ApproxSparsity}, \ref{cond:RestrictEig}, \ref{cond:SigmaIdent} hold.
    $\lambda_{\Pi}=C_{\Pi}\sqrt{\frac{d_0\log(d_x\vee T)}{T}},\,\\
    \lambda_{\beta_{px},\beta_{pz}}=C_{\beta_{px},\beta_{pz}}\sqrt{\frac{\log(d_x\vee T)}{T}},\,
    \lambda_{\tilde{\theta}}=C_{\tilde{\theta}}\sqrt{\frac{\log(d_{\tilde{z}}\vee T)}{T}},\,
    \lambda_{\beta}=C_{\beta}\sqrt{\frac{d_0\log(d_x\vee T)}{T}}$,
    with sufficiently large constants, then we have the estimator defined in equation (\ref{eq:orthogonal_estimator}) is asymptotically normally distributed, i.e.,
    \[
        P^{-1}\sqrt{T}(\check{\theta}_1-\theta_{10})\overset{d}{\to}N(0,I_2),
    \]
    where $P\in\R^{2\times 2}$ is a symmetric matrix s.t. $P\cdot P=(GWG ')^{-1}GW\Omega WG '(GWG ')^{-1}$, $\Omega=var(\E_J\epsilon^z_{jt}\xi_{jt})=\frac{1}{J}\E[\epsilon^z_{jt}\epsilon^{z\prime}_{jt}\xi_{jt}^2]$,
     and $G=\E_J\E[
            \begin{bmatrix}
                \frac{\partial }{\partial \sigma}y_{jt}(\sigma_0)\\
                p_{jt}
            \end{bmatrix}
            \cdot \epsilon^{z\prime}_{jt}]$.
\end{corollary}

\section{Monte Carlo Simulation}
The purpose of this section is to provide a comprehensive examination of the finite-sample performance of the proposed Neyman Orthogonal estimator. 
We achieve this through a Monte Carlo simulation, which is designed to evaluate key statistical properties, such as bias, variance, 
and coverage rates of confidence intervals. 
We emphasize that while our theoretical results establish the estimator's asymptotic properties, these simulations are crucial for validating its practical utility in the realistic setting of limited sample sizes. 
The data-generating process is based on the core model outlined in Section 2 and the parameter settings are inspired by the Monte Carlo studies in \cite{belloni2014high}

\subsection{Monte Carlo examples}
The data generating process is based on the model:
\begin{align*}
    y_{jt}(\sigma_0)&= \alpha_0 p_{jt} + x_{jt}'\beta_0 + \xi_{jt}   &\E[\xi_{jt}\cdot x_{jt}]=0\\
    z_{jt}& = \Pi_0 x_{jt}+\epsilon_{jt}^z   &\E[\epsilon_{jt}^z\cdot x_{jt}]=0\\
    p_{jt}& = x_{jt}'\beta_{p0} + u_{jt}^p   &\E[\epsilon_{jt}^p\cdot x_{jt}]=0
\end{align*}
Parameters of interest, $\theta_1=(\sigma_0,\alpha_0)$, is set to be $(1,-1)$.
Nuisance parameters $\beta_0=2\cdot(1,(1/2)^2,\dots,(1/4)^2,0,\dots,0)$, a quartic decay pattern is assumed.
$z_{jt}$ is set to be a linear function of $x_{jt}$, with $\Pi_{0,i}=(\underbrace{0,\dots,0}_{i},1,(1/2)^2,\dots,(1/5)^2,0,\dots,0)\cdot\frac{1}{2}$ also following a quartic decay pattern,
where $\Pi_{0,i}$ is the $i$-th row of $\Pi_0$.
$\beta_{p0}=(1.2,\underbrace{0,\dots,0}_{dim(x_{jt})-5},1.2,1.2,1.2,1.2)'$, also with a quartic decay pattern.
Note that $(\beta_{px0}',\beta_{pz0}')'=\E[(x_{jt}',z_{jt}')'(x_{jt}',z_{jt}')]^{-1}\E p_{jt}\cdot (x_{jt}',z_{jt}')$.
The exact values of $\beta_{px0}$ and $\beta_{pz0}$ are not crucial for the simulation and thus we do not specify them here.
Assume $dim(x_{jt})=200$.
The first element of $x_{jt}$, $x_{jt,1}$, is set to be 1, and the rest of the elements follow a uniform distribution, i.e.,
$x_{jt,k}\sim \sqrt{3}\cdot U[-1,1],\quad k\geq 2$ so that the variance of each element of $x_{jt,k}$ is 1.
$\xi_{jt}\sim U[-1,1].$
We assume additionally that there are two variables $\eta_{jt,1}$ and $\eta_{jt,2}$ that link both $\epsilon_{jt}^p$ and $\epsilon_{jt}^z$.
Suppose $\eta_{jt,1}\sim U[-1,1]$ and $\eta_{jt,2}\sim U[-1,1]$.
Let $dim(z_{jt})=4$, i.e., there are four instruments.
$(\epsilon^z_{jt,1},\epsilon^z_{jt,2})= (((\eta^z_{jt,1})^2,(\eta^z_{jt,2})^2)-\frac{1}{3})\cdot 1.34 $.
$(\epsilon^z_{jt,3},\epsilon^z_{jt,4})= (\eta^z_{jt,1},\eta^z_{jt,2})\cdot 0.86 $.
$u^p_{jt}=\xi_{jt} + \frac{1}{10}\cdot(x_{jt,2}^2-1)+\frac{1}{5}\cdot(x_{jt,3}^2-1)+\eta_{jt,1}+\frac{1}{2}\eta_{jt,2}+\frac{1}{5}(\exp(\eta_{jt,1})-\E\exp(\eta_{jt,1}))
+\frac{1}{5}(\exp(\eta_{jt,2})-\E\exp(\eta_{jt,1}))$, where $\E\exp(\eta_{jt,1})$ equals $1.175201$.
We set the number of goods in each market $J=4$ and the number of markets $T=50$.
Define $\tilde{z}_{jt} = (x_{jt}',z_{jt}')'$.
Tuning parameters are chosen automatically by cross-validation.
Inference results are based on conventional $t$-tests with standard errors calculated using standard plug-in methods.
The estimation employs sample splitting and the market indices are splitted into $6$ groups, i.e., $L=6$.
Specifically, we split the $50$ market indices into $6$ different sets, $(1,\dots,8)$, $(9,\dots,16)$, $(17,\dots,24)$, $(25,\dots,32)$, $(33,\dots,40)$, and $(41,\dots,50)$.

We report results for six different estimators, which can be categorized as either infeasible benchmarks or feasible procedures.
Note that in this high-dimensional simulation setting, directly minimizing the GMM objective function is computationally infeasible. 
The estimators are based on the empirical moment condition $\psi(\theta,w_{t},\Pi,\beta)=\E_J\phi_j(\theta,w_{t},\Pi,\beta)$, where $w_t$ encompasses all data in market $t$.
Two of the procedures are infeasible oracle estimators: Oracle 1 and Oracle 2, which assume knowledge of the true nuisance parameters and are thus unavailable in practice.
Oracle 1 uses the true $\beta_0$ and include a subset of $x_{jt}$ whose coefficients in $\beta_0$ are nonzero as instruments.
Oracle 2 uses the true $\beta_0$ and $\Pi_0$, and considers the residual term $z_{jt}-\Pi_0 x_{jt}$, i.e., $\epsilon_{jt}^z$, as instruments.
These estimators represent the optimal performance attainable within the Neyman Orthogonal framework and serve as our primary benchmarks.
The other four procedures we consider are feasible. 
The preliminary estimator is an estimator obtained by minimizing the infinity norm of the empirical moment function, $\norm{\hat{g}(\theta)}_\infty$.
The non-Neyman orthogonal estimator is a non-orthogonalized estimator derived without applying the Neyman orthogonality condition.
It serves to illustrate the performance improvement achieved through the Neyman orthogonal approach by directly plugging in the preliminary estimator of $\beta$
into the moment function and minimizing the $l_2$ norm.
The Neyman orthogonal estimator and optimal weighting Neyman orthogonal estimator are the two feasible procedures proposed in this paper.
We plug in the preliminary estimators of $\Pi$ and $\beta$ into the Neyman orthogonal moment function and minimize the $l_2$ norm.
The optimal weighting Neyman orthogonal estimator further incorporates an estimated optimal weighting matrix into the estimation process.
The detailed definitions of these estimators are provided below, and the results are summarized in Tables 1 and 2.

\begin{enumerate}
    \item Estimator 1 (Preliminary Estimator):
    \[
        \tilde{\theta}=(\tilde{\sigma},\tilde{\alpha},\tilde{\beta})=\arg\min_{\theta\in\Theta}\norm{\hat{g}(\theta)}_\infty+\lambda_{\tilde{\theta}}\normm{(\alpha,\beta')}_1
    \]
    $\tilde{\theta}_1=(\tilde{\sigma},\tilde{\alpha})$ are of interest.
    \item Estimator 2 (No Neyman Orthogonality):
    Given the preliminary estimator $\hat{\beta}$, we plug it into the moment function and minimize the $l_2$ norm:
    \[
        \hat{\theta}_1=\arg\min_{\theta_1\in\Theta_1}\norm{\hat{g}(\sigma,\alpha,\hat{\beta})}_2
    \]
    where $\hat{g}(\sigma,\alpha,\hat{\beta})=\E_{JT}\begin{pmatrix}
        x_{jt}\\
        z_{jt}
    \end{pmatrix}(y_{jt}(\sigma,s_t,p_t)-\alpha p_{jt}-x_{jt}'\hat{\beta})$.
    \item Estimator 3 (Neyman orthogonal estimator):
    \[
        \phi(w_{jt},\theta_1,\hat{\Pi},\hat{\beta})=\underbrace{(z_{jt}-\hat{\Pi} x_{jt})}_{d_z\times 1}\cdot(y_j(\sigma,s_t)-\alpha p_{jt}-x_{jt}'\hat{\beta})
    \] 
    \[
        \check{\theta}_1=\arg\min\limits_{\theta}(\E_{JT}\phi(\theta,w_{jt},\hat{\beta},\hat{\Pi}))'(\E_{JT}\phi(\theta,w_{jt},\hat{\beta},\hat{\Pi}))
    \]
    \item Estimator 4 (Neyman orthogonal estimator with optimal weighting matrix):
    \[
        \phi(w_{jt},\theta_1,\hat{\Pi},\hat{\beta})=\underbrace{(z_{jt}-\hat{\Pi} x_{jt})}_{d_z\times 1}\cdot(y_j(\sigma,s_t)-\alpha p_{jt}-x_{jt}'\hat{\beta})
    \]
    \[
        \hat{\theta}_1^{(1)}=\arg\min\limits_{\theta}(\E_{JT}\phi(\theta,w_{jt},\hat{\beta},\hat{\Pi}))'(\E_{JT}\phi(\theta,w_{jt},\hat{\beta},\hat{\Pi}))
    \]
    Define the optimal weighting matrix as
    \[
        \hat{W}=\left(\E_{JT}[\phi(w_{jt},\hat{\theta}_1^{(1)},\hat{\Pi},\hat{\beta})\phi(w_{jt},\hat{\theta}_1^{(1)},\hat{\Pi},\hat{\beta})']\right)^{-1}
    \]
    Then the estimator is
    \[
        \check{\theta}_1=\arg\min\limits_{\theta}(\E_{JT}\phi(\theta,w_{jt},\hat{\beta},\hat{\Pi}))'\hat{W}(\E_{JT}\phi(\theta,w_{jt},\hat{\beta},\hat{\Pi}))
    \]
    \item Oracle 1 (Bechmark 1): Assume that $\beta_0$ are known and use $z$ as instruments in the GMM function.
    \[\phi(\theta,w_{jt},\beta_0)=\begin{pmatrix}
        \tilde{x}_{jt}\\
        z_{jt}
    \end{pmatrix}\cdot(y_j(\sigma,s_t,p_t)-\alpha p_{jt}-x_{jt}'\beta_0)
    \]
    \[
        \hat{\theta}_1=\arg\min\limits_{\theta_1}(\E_{JT}\phi(\theta_1,w_{jt},\beta_0))'(\E_{JT}\phi(\theta_1,w_{jt},\beta_0))
    \]
    
    where $\tilde{x}$ are covariates with nonzero coefficients in $\beta_0$.
    \item Oracle 2 (Bechmark 2): assume that $\beta_0$ and $\Pi_0$ are known and use $\epsilon^z$ as instruments in the GMM function. 
    \begin{align*}
        \phi(\theta_1,w_{jt},\beta_0,\Pi_0)&=(z_{jt}-\Pi_0 x_{jt})\cdot(y_j(\sigma,s_t,p_t)-\alpha p_{jt}-x_{jt}'\beta_0)\\
        &=\epsilon_{jt}^z\cdot(y_j(\sigma,s_t,p_t)-\alpha p_{jt}-x_{jt}'\beta_0)
    \end{align*}
    \[
        \hat{\theta}_1=\arg\min\limits_{\theta_1}(\E_{JT}\phi(\theta_1,w_{jt},\beta_0,\Pi_0))'(\E_{JT}\phi(\theta_1,w_{jt},\beta_0,\Pi_0))
    \]
\end{enumerate}

We summarize the simulation results for estimating parameter $\sigma$ in Table 1, reporting bias, standard error, root-mean-square-error (RMSE), and rejection rate of 5\% level tests (Rej. Rate). 
As expected, the Oracle estimators, which represent infeasible benchmarks, demonstrate strong performance with relatively low RMSE (0.165, 0.377 for $\sigma$ and 0.155, 0.316 for $\alpha$) and 
rejection rates close to the nominal level (0.040, 0.030 for $\sigma$ and 0.060, 0.045 for $\alpha$). 
The Neyman Orthogonal procedures achieve the rejection rate of 0.045 for $\sigma$ and 0.07 for $\alpha$, indicating satisfied performance, 
though this comes at the cost of higher variance as reflected in their larger standard errors and RMSE values. 
Notably, the optimized Neyman Orthogonal variant performs similarly to the standard version. 
In contrast, the procedures without Neyman orthogonalization—both the Preliminary and No Neyman Orthogonal estimators—exhibit substantial bias with 
rejection rates of 0.610 and 0.425 respectively for $\sigma$ and 0.965, 0.915 for $\alpha$, significantly exceeding the nominal 5\% level. 
This pronounced over-rejection problem in conventional approaches underscores the critical importance of employing Neyman orthogonalization methods when conducting inference in high-dimensional settings with nuisance parameters, even when those parameters exhibit sparsity and quartic decay patterns.

\begin{table}[H] \centering 
    \begin{threeparttable}
    \caption{Simulation Results for $\sigma$} 
    \renewcommand{\arraystretch}{1.5}
    \label{} 
    \begin{tabular}{@{\extracolsep{5pt}} ccccc} 
    \\[-1.8ex]\hline 
    \hline \\[-1.8ex] 
    & bias & standard error & RMSE & Rej.rate \\ 
    \hline \\[-1.8ex] 
    Preliminary & $$-$0.094$ & $0.054$ & $0.108$ & $0.610$ \\  
    No Neyman Orthogonal & $$-$0.074$ & $0.050$ & $0.089$ & $0.425$ \\ 
    Neyman Orthogonal & $$-$0.080$ & $0.345$ & $0.354$ & $0.045$ \\  
    Neyman Orthogonal opt & $$-$0.079$ & $0.344$ & $0.352$ & $0.040$ \\ 
    Oracle 1 & $$-$0.042$ & $0.160$ & $0.165$ & $0.040$ \\ 
    Oracle 2 & $$-$0.029$ & $0.377$ & $0.377$ & $0.030$ \\ 
    \hline \\[-1.8ex] 
    \end{tabular} 
    \begin{tablenotes}
    \small
    \item Note: This table presents simulation results for parameter $\sigma$. 
          RMSE stands for Root Mean Square Error. Rej.rate denotes the rejection rate of the hypothesis test at 5\% significance level. 
          Results are based on 200 Monte Carlo replications.
          Nuisance parameters are sparse and decay in a quartic pattern.
    \end{tablenotes}
    \end{threeparttable}
\end{table} 

\begin{table}[H] \centering 
    \begin{threeparttable}
    \caption{Simulation Results for $\alpha$} 
    \renewcommand{\arraystretch}{1.5}
    \label{} 
    \begin{tabular}{@{\extracolsep{5pt}} ccccc} 
    \\[-1.8ex]\hline 
    \hline \\[-1.8ex] 
    & Mean & SD & RMSE & Rej.rate \\ 
    \hline \\[-1.8ex] 
    Preliminary & $0.168$ & $0.046$ & $0.174$ & $0.965$ \\ 
    No Neyman Orthogonal & $0.135$ & $0.046$ & $0.143$ & $0.915$ \\ 
    Neyman Orthogonal & $0.064$ & $0.274$ & $0.281$ & $0.070$ \\ 
    Neyman Orthogonal opt & $0.064$ & $0.272$ & $0.279$ & $0.060$ \\  
    Oracle 1 & $0.046$ & $0.148$ & $0.155$ & $0.060$ \\  
    Oracle 2 & $0.024$ & $0.316$ & $0.316$ & $0.045$ \\ 
    \hline \\[-1.8ex] 
    \end{tabular} 
    \begin{tablenotes}
    \small
    \item Note: This table presents simulation results for parameter $\alpha$. 
          RMSE stands for Root Mean Square Error. Rej.rate denotes the rejection rate of the hypothesis test at 5\% significance level. 
          Results are based on 200 Monte Carlo replications.
          Nuisance parameters are sparse and decay in a quartic pattern.
    \end{tablenotes}
    \end{threeparttable}
\end{table} 

\section{Conclusion}
This study advances the estimation of demand in differentiated product markets by addressing the challenges of high-dimensional product characteristics within the Berry, Levinsohn, and Pakes (1995) framework. 
By integrating machine learning techniques into a Neyman orthogonal estimation framework, we provide a solution for settings where the number of potential characteristics may exceed the number of market observations. 
Our theoretical results establish that this approach achieves $\sqrt{T}$-asymptotic normality for key parameters of interest, such as price coefficients and heterogeneity, effectively insulating them from the slower convergence rates inherent in high-dimensional nuisance parameter estimation.

A central contribution of this research is the application of these techniques under the approximate sparsity condition. 
From an economic perspective, this condition aligns with the concept of bounded rationality, suggesting that while the space of product attributes is vast, consumers' purchasing decisions are driven by a sparse subset of key characteristics. 
Our Monte Carlo simulations confirm the practical utility of this framework, demonstrating its efficacy in finite samples compared to traditional methods that struggle in high-dimensional environments.

By bridging modern de-biased machine learning with structural Industrial Organization, this methodology offers a rigorous and adaptable toolkit for empirical researchers. 
Future work could extend this orthogonal estimation framework to multi-market or time-series contexts, further exploring how data-driven covariate selection can clarify consumer behavior in increasingly complex digital and physical marketplaces.

\addcontentsline{toc}{chapter}{Appendices}
\appendix
\section{Appendices: Lemmas}


\begin{lemma}\label{lemma:Consistency}
    (Extension of Lemma 2.1 in \cite{newey1994large})
    For simplicity, we suppress the dependence of $\hat{\theta}$, $\theta_0$ and $\Theta$ on $T$.
    Let $\hat{\theta}$ be the maximizer of $\hat{Q}_T(\theta)$, and $Q_T(\theta)$ be a non-stochastic function.
    If (i) $Q_T(\theta)$ is uniquely maximized at $\theta_0\in \Theta$, where $\Theta$ is compact;
    (ii) $\forall$ $\eta>0$, there exists $\epsilon>0$ such that $\inf_{T,\normm{\theta-\theta_0}_2>\eta}Q_T(\theta_0)-Q_T(\theta)\geq\epsilon$ holds for any $T$; 
    (iii) $\sup_\Theta|\hat{Q}_T(\theta)-Q_T(\theta)|=o_p(1)$,
    then $\normm{\hat{\theta}-\theta_0}_2\overset{p}{\to}0$.
\end{lemma}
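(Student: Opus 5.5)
The argument is the standard Newey–McFadden consistency proof. The only change is that the fixed-distribution identification argument (compactness plus continuity) is replaced by the uniform separation in (ii). That separation gives a gap $\epsilon$ that does not depend on $T$, so no continuity of $Q_T$ is needed.

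Fix $\eta>0$ and let $\epsilon>0$ be the constant supplied by (ii) for this $\eta$. The target is
\[
P\bigl(\normm{\hat{\theta}-\theta_0}_2>\eta\bigr)\to 0 .
\]
By (ii), on the event $\{\normm{\hat{\theta}-\theta_0}_2>\eta\}$ we have
\[
Q_T(\theta_0)-Q_T(\hat{\theta})\geq \epsilon .
\]
So it suffices to show that $Q_T(\theta_0)-Q_T(\hat{\theta})<\epsilon$ with probability approaching one.

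To get this, decompose
\[
Q_T(\theta_0)-Q_T(\hat{\theta}) = \bigl[Q_T(\theta_0)-\hat{Q}_T(\theta_0)\bigr] + \bigl[\hat{Q}_T(\theta_0)-\hat{Q}_T(\hat{\theta})\bigr] + \bigl[\hat{Q}_T(\hat{\theta})-Q_T(\hat{\theta})\bigr].
\]
The middle bracket is $\leq 0$ because $\hat{\theta}$ maximizes $\hat{Q}_T$ over $\Theta$ and $\theta_0\in\Theta$. The first and third brackets are each bounded in absolute value by $\sup_{\Theta}|\hat{Q}_T-Q_T|$. Hence
\[
Q_T(\theta_0)-Q_T(\hat{\theta})\leq 2\sup_{\Theta}|\hat{Q}_T(\theta)-Q_T(\theta)|,
\]
which is $o_p(1)$ by (iii). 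In particular this quantity is $<\epsilon$ w.p.a.1.

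Combining the two steps gives
\[
P\bigl(\normm{\hat{\theta}-\theta_0}_2>\eta\bigr)\leq P\Bigl(2\sup_{\Theta}|\hat{Q}_T-Q_T|\geq\epsilon\Bigr)\to 0 .
\]
Since $\eta$ was arbitrary, $\normm{\hat{\theta}-\theta_0}_2\overset{p}{\to}0$.

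The one point to handle carefully is that $\theta_0$, $\Theta$ and $Q_T$ all vary with $T$. This matters in exactly one place: the constant $\epsilon$ in (ii) must hold uniformly in $T$, which is how (ii) is stated, so the probability bound above does not degrade as $T\to\infty$. Compactness and (i) are therefore not really needed beyond guaranteeing that $\hat{\theta}$ and $\theta_0$ exist. If measurability of $\hat{\theta}$ is an issue, the argument can be phrased with outer probability, or with $\hat{\theta}$ taken as a near-maximizer satisfying $\hat{Q}_T(\hat{\theta})\geq\sup_\Theta\hat{Q}_T-o_p(1)$; the same bound then goes through with an extra $o_p(1)$ term.
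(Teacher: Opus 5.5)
Your proof is correct and follows the paper's argument essentially line for line. It uses the same decomposition giving $Q_T(\theta_0)-Q_T(\hat{\theta})\leq 2\sup_\Theta|\hat{Q}_T(\theta)-Q_T(\theta)|$, and then the $T$-uniform gap $\epsilon$ from (ii) to conclude $P(\normm{\hat{\theta}-\theta_0}_2>\eta)\to 0$; your side remarks on (i), compactness, and near-maximizers are accurate but not needed.
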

This lemma extends the consistency result of \cite{newey1994large} (Lemma 2.1) to a setting where the parameter space $\Theta$ and functions $Q_T$ and $\hat{Q}_T$ are allowed to depend on the sample size $T$.
Since $Q_T$ varies with $T$, we introduce condition (ii) to ensure consistency; this serves as a necessary replacement for the standard continuity condition on $Q$ required in the Lemma 2.1 of \cite{newey1994large}.

\begin{lemma}\label{lemma:UniformConvergence}
    (Lemma 2.4 in \cite{newey1994large}) $\Theta$ is compact, $a(z_i,\theta)$ is continuous at each $\theta\in\Theta$ with probability 1. 
    If $\norm{a(z_i,\theta)}_2\leq d(z_i)$ and $\E d(z_i)<\infty$, then $\E[a(z_i,\theta)]$ is continuous in $\theta$, and\\
    $\sup_{\theta\in\Theta}\norm{n^{-1}\sum_{i=1}^n a(z_i,\theta)-\E a(z_i,\theta)}_2\overset{p}{\to}0 $
\end{lemma}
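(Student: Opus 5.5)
The lemma is the classical uniform law of large numbers, and I would prove it directly by the standard finite-cover (bracketing) argument. I take $z_1,\dots,z_n$ to be i.i.d., which is implicit in the statement as in \cite{newey1994large}. The only tools are the dominated convergence theorem and the ordinary strong or weak law of large numbers for integrable i.i.d. variables.

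\emph{Continuity of the mean.} Take any sequence $\theta_k\to\theta$ in $\Theta$. With probability one, continuity gives $a(z_i,\theta_k)\to a(z_i,\theta)$. The dominating bound $\norm{a(z_i,\theta_k)}_2\leq d(z_i)$ with $\E d(z_i)<\infty$ lets dominated convergence give $\E a(z_i,\theta_k)\to\E a(z_i,\theta)$. Since this holds along every sequence, $\E a(z_i,\cdot)$ is continuous.

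\emph{Uniform convergence.} For $\theta\in\Theta$ and $\delta>0$, define the modulus
\[
\Delta(z,\theta,\delta)=\sup_{\theta'\in\Theta,\ \normm{\theta'-\theta}_2<\delta}\norm{a(z,\theta')-a(z,\theta)}_2 .
\]
By continuity, $\Delta(z_i,\theta,\delta)\to0$ a.s. as $\delta\downarrow0$, and $\Delta\leq 2d(z_i)$. Dominated convergence then gives $\E\Delta(z_i,\theta,\delta)\to0$.

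Fix $\varepsilon>0$. For each $\theta$ choose $\delta_\theta$ with $\E\Delta(z_i,\theta,\delta_\theta)<\varepsilon$. The open balls $B(\theta,\delta_\theta)$ cover the compact set $\Theta$, so extract a finite subcover with centres $\theta_1,\dots,\theta_K$.

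For any $\theta$ in the $k$-th ball, the triangle inequality gives
\begin{align*}
\norm{n^{-1}\textstyle\sum_i a(z_i,\theta)-\E a(z_i,\theta)}_2
\leq{}& \norm{n^{-1}\textstyle\sum_i a(z_i,\theta_k)-\E a(z_i,\theta_k)}_2 + n^{-1}\textstyle\sum_i\Delta(z_i,\theta_k,\delta_{\theta_k})\\
&+\E\Delta(z_i,\theta_k,\delta_{\theta_k}).
\end{align*}
Take the maximum over the finitely many $k$ and apply the law of large numbers to each $a(z_i,\theta_k)$ and each $\Delta(z_i,\theta_k,\delta_{\theta_k})$; all are integrable, being bounded by $d$ or $2d$. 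With probability approaching one this yields
\[
\sup_\Theta\norm{n^{-1}\textstyle\sum_i a(z_i,\theta)-\E a(z_i,\theta)}_2\leq o_p(1)+2\varepsilon+\varepsilon .
\]
Since $\varepsilon$ is arbitrary, the supremum is $o_p(1)$.

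\emph{Main obstacle.} The step needing care is measurability of $\Delta$ and of the supremum over $\Theta$. I would handle it by noting that a.s. continuity lets the supremum be taken over a countable dense subset of the ball, so $\Delta$ is a genuine random variable. Alternatively, one can work with outer probability.
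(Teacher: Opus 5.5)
Your proposal is correct and is essentially the standard proof. The paper does not prove this lemma itself; it only cites Lemma 2.4 of \cite{newey1994large}, whose argument is your combination of the local modulus $\Delta(z,\theta,\delta)$, dominated convergence, a finite subcover of the compact $\Theta$, and the pointwise law of large numbers under i.i.d.\ sampling (you are right to make that assumption explicit, as the source does).

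One small caution concerns your measurability remark. Restricting the supremum in $\Delta$ to a countable dense subset requires $a(z,\cdot)$ to be continuous on the whole ball outside a single null set. That is stronger than ``continuous at each $\theta$ with probability one''; for example, $a(z,\theta)=1\{z=\theta\}$ with $z$ uniform satisfies the hypothesis, yet the supremum over a countable dense set is $0$ almost surely while $\Delta=1\{|z-\theta|<\delta\}$. It is therefore cleaner to assume measurability of $\Delta$, as the cited source implicitly does. This is harmless in the paper's application, where $\phi_j$ is continuous in $\theta_1$ for every realization.
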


\begin{lemma}\label{lemma:ConditionalPopulation}
    Obervations $(w_t)_{t=1}^T$ are i.i.d across $t$. 
    Partition the sample indices \{1,\dots,T\} into $L$ groups, and denote the indices in the $l$-th group by $I_l$, and the number of observations in $I_l$ by $T_l$.
    Let $\hat{\eta}_l$ be an estimated function of $w_t$ using observations NOT in $I_l$, where $t\in I_l$. 
    If $\normm{a(w_t,\hat{\eta}_l)}_{L^2}\cdot \normm{b(w_t,\hat{\eta}_l)}_{L^2}=o_p(1)$, for $l=1,\dots,L$, where $a(w_t,\hat{\eta}_l)$ is a vector and $b(w_t,\hat{\eta}_l)$ is a scalar, then 
    \begin{equation}\label{eq:lemmaCP_eq1}
        \norm{\E_{T,L}a(w_t,\hat{\eta}_l)b(w_t,\hat{\eta}_l)}_2=\norm{\frac{1}{T}\sum_{l=1}^L\sum_{t\in I_l}a(w_t,\hat{\eta}_l)b(w_t,\hat{\eta}_l)}_2=o_p(1)
    \end{equation}
    If $\normm{\frac{1}{T_l}\sum_{t\in I_l}a\left(w_t,\hat{\eta}_l\right)}_{L^2}=o_p(1)$, for $l=1,\dots,L$, where $a(w_t,\hat{\eta}_l)$ is a vector, then
    \begin{equation}\label{eq:lemmaCP_eq2}
        \norm{\E_{T,L}a(w_t,\hat{\eta}_l)}_2=o_p(1)
    \end{equation}
\end{lemma}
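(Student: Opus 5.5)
The plan is to condition on the auxiliary sample and use the fact that, by sample splitting and the i.i.d. assumption, the observations $(w_t)_{t\in I_l}$ are independent of $\hat{\eta}_l$. Since $L$ is fixed and $\E_{T,L}=\sum_{l=1}^L \frac{T_l}{T}\cdot\frac{1}{T_l}\sum_{t\in I_l}$ with $T_l/T\leq 1$, it suffices to prove each claim for a single fold $l$ and then apply the triangle inequality over the $L$ folds. A finite sum of $o_p(1)$ terms is again $o_p(1)$.

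For (\ref{eq:lemmaCP_eq1}), fix $l$ and apply the Cauchy--Schwarz inequality componentwise:
\[
\norm{\frac{1}{T_l}\sum_{t\in I_l}a(w_t,\hat{\eta}_l)b(w_t,\hat{\eta}_l)}_2\leq \Big(\frac{1}{T_l}\sum_{t\in I_l}\normm{a(w_t,\hat{\eta}_l)}_2^2\Big)^{1/2}\Big(\frac{1}{T_l}\sum_{t\in I_l}b(w_t,\hat{\eta}_l)^2\Big)^{1/2}.
\]
Conditional on the data outside $I_l$ (call this $\sigma$-field $\mathcal{F}_l^c$), the summands are i.i.d. Hence
\[
\E\Big[\frac{1}{T_l}\sum_{t\in I_l}\normm{a}_2^2\,\Big|\,\mathcal{F}^c_l\Big]=\normm{a(w_t,\hat{\eta}_l)}_{L^2}^2,
\]
and similarly for $b$. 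By conditional Markov's inequality, each empirical second moment is $O_p$ of its $L^2$ counterpart. One has to be slightly careful with ratios: for any $\epsilon>0$,
\[
P\Big(\tfrac{1}{T_l}\textstyle\sum\normm{a}_2^2>\epsilon^{-1}\normm{a}_{L^2}^2 \mid \mathcal{F}^c_l\Big)\leq\epsilon .
\]
The product of the two empirical factors is therefore bounded by $\epsilon^{-1}\normm{a}_{L^2}\normm{b}_{L^2}$ with probability at least $1-2\epsilon$. Since $\normm{a}_{L^2}\normm{b}_{L^2}=o_p(1)$ by hypothesis, the fold term is $o_p(1)$. To pass from conditional to unconditional statements I would use the standard fact that if $P(|X_T|>\delta\mid\mathcal{F}_T)\to_p0$ then $P(|X_T|>\delta)\to0$, by bounded convergence.

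For (\ref{eq:lemmaCP_eq2}), I read the hypothesis as saying that the conditional $L^2$ norm of the fold average, $\big(\E[\normm{T_l^{-1}\sum_{t\in I_l}a(w_t,\hat{\eta}_l)}_2^2\mid\mathcal{F}^c_l]\big)^{1/2}$, is $o_p(1)$. Conditional Chebyshev/Markov then gives
\[
P\Big(\normm{T_l^{-1}\textstyle\sum_{t\in I_l}a}_2>\delta\mid\mathcal{F}^c_l\Big)\leq\delta^{-2}\,\normm{\cdot}_{L^2}^2=o_p(1).
\]
Unconditioning via bounded convergence as above gives the claim.

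The main obstacle is not algebraic. It is making rigorous that $\normm{\cdot}_{L^2}$, as defined in the Notation paragraph, is a conditional expectation given the auxiliary sample, and that independence across folds lets $\hat{\eta}_l$ be treated as fixed. I would state this explicitly using the independence of $(w_t)_{t\in I_l}$ from $\mathcal{F}^c_l$ implied by Condition \ref{cond:IID}. I would also handle the dependence of $T_l$ on $T$ by assuming that the $T_l$ grow proportionally with $T$, although the argument above only needs $T_l\ge1$ and that $L$ is fixed.
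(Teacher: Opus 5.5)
Your proposal is correct and follows essentially the same route as the paper. Because $L$ is fixed, you reduce to a single fold, condition on the observations outside $I_l$ so that the $L^2$ norms become conditional moments, apply a conditional Markov/Chebyshev bound, and then uncondition; the paper does the last step by splitting on the event $K_\epsilon$, you via bounded convergence. Your reading of the second hypothesis as the conditional $L^2$ norm of the fold average is exactly the paper's.

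The only minor difference is in (\ref{eq:lemmaCP_eq1}). There you apply Cauchy--Schwarz to the empirical averages and then Markov to each second moment separately, at the harmless cost of a factor $\epsilon^{-1}$. The paper instead bounds $\E\bigl[\normm{T_l^{-1}\sum_{t\in I_l}a(w_t,\hat{\eta}_l)b(w_t,\hat{\eta}_l)}_2\mid\mathcal{W}_l^c\bigr]\le\normm{a(w_t,\hat{\eta}_l)}_{L^2}\normm{b(w_t,\hat{\eta}_l)}_{L^2}$ directly and applies Markov once.
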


The followings are Hoeffding inequality and the Lasso theory.

\begin{lemma}[corollary of Proposition 2.5 in \cite{wainwright2019high}]\label{lemma:Hoeffding}
    If $w_1,\dots,w_T\in\mathbb{R}^m$ are independent and $\norm{w_i}_\infty\leq M$, then
\[
    P\left(\norm{\frac{1}{T}\sum_{i=1}^T(w_i-\E w_i)}_\infty\geq t\right)\leq 2m\exp\left(-\frac{Tt^2}{2M^2}\right)
\]
In particular if $t=CT^{-1/2}\sqrt{\log(m\vee T)}$, then
\[
    P\left(\norm{\frac{1}{T}\sum_{i=1}^T(w_i-\E w_i)}_\infty\geq CT^{-1/2}\sqrt{\log(m\vee T)} \right)\leq 2m(\frac{1}{m\vee T})^{\frac{C^2}{2M^2}}
\]
When $C$ is large enough, for example $C>\sqrt{2M^2}$, we have
\[
    2m(\frac{1}{m\vee T})^{\frac{C^2}{2M^2}}=o(1)
\]
\end{lemma}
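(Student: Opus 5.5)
The bound follows from a union bound over the $m$ coordinates combined with the scalar Hoeffding inequality for bounded, independent, not necessarily identically distributed variables. This scalar inequality is Proposition 2.5 in \cite{wainwright2019high}, applied with the sub-Gaussian parameter of a bounded variable.

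First fix a coordinate $k\in\{1,\dots,m\}$ and set $v_i=w_{i,k}-\E w_{i,k}$. These are independent and mean zero. Since $w_{i,k}\in[-M,M]$, Hoeffding's lemma shows that $w_{i,k}$ is sub-Gaussian with parameter $\sigma_i=(2M)/2=M$, and the same holds for the centred variable $v_i$. Proposition 2.5 then gives
\[
P\Bigl(\Bigl|\tfrac{1}{T}\textstyle\sum_{i=1}^T v_i\Bigr|\geq t\Bigr)\leq 2\exp\Bigl(-\tfrac{T^2t^2}{2\sum_i\sigma_i^2}\Bigr)=2\exp\Bigl(-\tfrac{Tt^2}{2M^2}\Bigr).
\]
Here I apply the one-sided tail bound to both $\sum_i v_i$ and $-\sum_i v_i$, which produces the factor $2$.

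Next, the event $\{\|T^{-1}\sum_i(w_i-\E w_i)\|_\infty\geq t\}$ is the union over $k$ of the corresponding coordinate events. A union bound over the $m$ coordinates therefore yields $2m\exp(-Tt^2/(2M^2))$, which is the first display of the statement.

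For the second display, substitute $t=CT^{-1/2}\sqrt{\log(m\vee T)}$. This gives $Tt^2=C^2\log(m\vee T)$, and the bound becomes
\[
2m\exp\bigl(-C^2\log(m\vee T)/(2M^2)\bigr)=2m(m\vee T)^{-C^2/(2M^2)}.
\]

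For the final claim, write $a=C^2/(2M^2)$, so that $C>\sqrt{2M^2}$ means $a>1$. Since $m\leq m\vee T$, we get
\[
2m(m\vee T)^{-a}\leq 2(m\vee T)^{1-a}.
\]
This tends to zero because $m\vee T\geq T\to\infty$ and $1-a<0$.

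The only step that needs care is the sub-Gaussian constant. Using the exact variance proxy $(b-a)^2/4=M^2$ for a variable supported on an interval of length $2M$ is what yields the stated exponent $Tt^2/(2M^2)$. Everything else is a direct calculation.
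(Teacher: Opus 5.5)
Your proof is correct and follows the route the paper intends: the paper gives no written proof and simply calls the lemma a direct corollary of Proposition 2.5 in \cite{wainwright2019high}. Your argument supplies exactly the implicit steps. These are a coordinatewise two-sided Hoeffding bound with variance proxy $M^2$, a union bound over the $m$ coordinates, the substitution for $t$, and the estimate $2m(m\vee T)^{-a}\leq 2(m\vee T)^{1-a}$. You are also right that the exponent $Tt^2/(2M^2)$ needs the sharp parameter $(b-a)/2$ from Hoeffding's lemma rather than the cruder $b-a$.
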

Lemma \ref{lemma:Hoeffding} is a direct corollary of Proposition 2.5 in \cite{wainwright2019high}.
It provides a tail bound for the deviation of the sample mean of vectors from its expectation in the infinity norm.
Given lemma \ref{lemma:Hoeffding}, the following two propositions can be proved.
\begin{lemma}\label{lemma:empbound}
    Suppose $\hat{g}(\theta_0)=\E_{JT}\tilde{z}_{jt}(y_{jt}(\sigma_0)-\alpha_0p_{jt}-x_{jt}\beta_0)=\E_{JT}\tilde{z}_{jt}(\xi_{jt}+r_{jt}^u)$, where $\tilde{z}_{jt}\in\R^{d_{\tilde{z}}}$ and $\xi_{jt}\in\R$. If $\normm{\tilde{z}_{jt}}_\infty$ and $\normm{\xi_{jt}}_\infty\leq M$, then 
    \begin{align*}
        P\left(\norm{\hat{g}(\theta_0)}_\infty\geq C_1T^{-1/2}\sqrt{d_0\log(d_{\tilde{z}}\vee T)}\right)&\leq 2d_{\tilde{z}}(\frac{1}{d_{\tilde{z}}\vee T})^{\frac{C_1^2d_0}{2M^4}} + \frac{C}{C_1\sqrt{\log(d_{\tilde{z}}\vee T)}}\\
        &=o(1) \text{ when } C_1 \text{ is large enough}
    \end{align*}
    
\end{lemma}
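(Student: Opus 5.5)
The bound splits into the two pieces that make up $\hat{g}(\theta_0)=\E_{JT}\tilde{z}_{jt}\xi_{jt}+\E_{JT}\tilde{z}_{jt}r^u_{jt}$, and the probability bound is the sum of their tails. Write $\tau:=C_1T^{-1/2}\sqrt{d_0\log(d_{\tilde{z}}\vee T)}$. By the triangle inequality and a union bound,
\[
P\left(\norm{\hat{g}(\theta_0)}_\infty\geq \tau\right)\leq P\left(\norm{\E_{JT}\tilde{z}_{jt}\xi_{jt}}_\infty\geq \tau/2\right)+P\left(\norm{\E_{JT}\tilde{z}_{jt}r^u_{jt}}_\infty\geq \tau/2\right).
\]
The first piece is a mean-zero sum of bounded vectors and is handled by Lemma \ref{lemma:Hoeffding}. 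The second is a bias term and is handled by Markov's inequality together with the approximation-error bound $\E (r^u_{jt})^2\lesssim d_0/T$ from Condition \ref{cond:ApproxSparsity}. The constants in the displayed bound are not essential, so the factor $1/2$ can be absorbed by relabeling $C_1$.

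\textbf{Stochastic term.} Define the market-level vector $w_t:=\E_J\tilde{z}_{jt}\xi_{jt}\in\R^{d_{\tilde{z}}}$.
\begin{itemize}[label={}]
\item By Condition \ref{cond:IID}, the $w_t$ are independent across $t$.
\item By the bounds on $\normm{\tilde{z}_{jt}}_\infty$ and $|\xi_{jt}|$, we have $\norm{w_t}_\infty\leq M^2$.
\item Since $\tilde{z}_{jt}$ is a function of $(x_{jt},z_{jt})$ and $\E[\xi_{jt}|x_{jt},z_{jt}]=0$ by (\ref{eq:highdimBLP_utility}), we get $\E w_t=0$.
\end{itemize}
Applying Lemma \ref{lemma:Hoeffding} with $m=d_{\tilde{z}}$, bound $M^2$, and $t=C_1\sqrt{d_0}\,T^{-1/2}\sqrt{\log(d_{\tilde{z}}\vee T)}$ gives the tail
\[
2d_{\tilde{z}}\,(d_{\tilde{z}}\vee T)^{-C_1^2d_0/(2M^4)}.
\]
This is exactly the first term of the stated bound. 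It is $o(1)$ once $C_1^2d_0/(2M^4)>1$, which holds for $C_1$ large because $d_0\geq1$.

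\textbf{Approximation term.} Each coordinate satisfies
\[
\norm{\E_{JT}\tilde{z}_{jt}r^u_{jt}}_\infty\leq M\,\E_{JT}|r^u_{jt}|.
\]
By Markov's inequality followed by Cauchy–Schwarz,
\[
P\left(M\,\E_{JT}|r^u_{jt}|\geq \tau\right)\leq \frac{M\,\E|r^u_{jt}|}{\tau}\leq \frac{M\sqrt{\E(r^u_{jt})^2}}{\tau}\lesssim \frac{M\sqrt{d_0/T}}{C_1T^{-1/2}\sqrt{d_0\log(d_{\tilde{z}}\vee T)}}=\frac{C}{C_1\sqrt{\log(d_{\tilde{z}}\vee T)}}.
\]
This is exactly the second term, and it is $o(1)$ because $\log(d_{\tilde{z}}\vee T)\to\infty$.

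\textbf{Main subtlety.} The statement's hypotheses omit two ingredients the argument uses. The first is that $\E[\tilde{z}_{jt}\xi_{jt}]=0$, which requires $\tilde{z}_{jt}$ to be measurable with respect to $(x_{jt},z_{jt})$ so that the exogeneity in (\ref{eq:highdimBLP_utility}) applies. The second is the approximation-error moment $\E(r^u_{jt})^2\lesssim d_0/T$. I would state both explicitly as being imported from the model and from Condition \ref{cond:ApproxSparsity}. No other step poses real difficulty.
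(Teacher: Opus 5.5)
Your proposal is correct and matches the paper's proof step for step: a triangle inequality and union bound, Lemma \ref{lemma:Hoeffding} applied to $\E_J\tilde{z}_{jt}\xi_{jt}$ with bound $M^2$, and Markov plus Cauchy--Schwarz with $\E(r^u_{jt})^2\lesssim d_0/T$ for the approximation term. The only wrinkle is a factor of $2$: you split at $\tau/2$ but evaluate both tails at $\tau$, so you actually bound $\{\norm{\hat{g}(\theta_0)}_\infty\geq 2\tau\}$; the paper's proof has the same slip, and it is harmless for the $o(1)$ conclusion.
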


\begin{lemma}\label{lemma:supdiff}
    Let $\tilde{z}_{jt}\in\R^{d_{\tilde{z}}}$ and $M_\sigma = \sup\{\sigma:\sigma\in\Theta_\sigma\}$.
    Suppose Condition \ref{cond:Regularity2} and \ref{cond:BoundednessNuisance} hold, then
    \begin{align*}
        P\left(\sup_{\sigma\in\Theta_\sigma}\norm{(\E_{JT}-\E)\tilde{z}_{jt}y_{jt}(\sigma)}_\infty\geq C_2 T^{-1/2}\sqrt{\log(d_{\tilde{z}}\vee T)}\right)&\leq 
        CM_{\sigma}T^{1/2}d_{\tilde{z}}(\frac{1}{d_{\tilde{z}}\vee T})^{\frac{C_2^2}{18M^4}}\\
        & = o(1) \text{ when } C_2 \text{ is large enough}\\
        P\left(\norm{(\E_{JT}-\E)\tilde{z}_{jt}\cdot (p_{jt},x_{jt}')}_\infty\geq C_3 T^{-1/2}\sqrt{\log(d_{\tilde{z}}\vee T)}\right)&\leq
        2d_{\tilde{z}}^2(\frac{1}{d_{\tilde{z}}\vee T})^{\frac{C_3^2}{2M^4}}\\
        & = o(1) \text{ when } C_3 \text{ is large enough }
    \end{align*}
\end{lemma}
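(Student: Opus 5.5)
The statement to prove is Lemma \ref{lemma:supdiff}, which has two parts. Both follow from Lemma \ref{lemma:Hoeffding} combined with a union bound. For the sup over $\sigma$ we add a discretization (chaining-free net) argument.

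\textbf{Second bound.} Under Condition \ref{cond:BoundednessNuisance}, every entry of $\tilde{z}_{jt}\cdot(p_{jt},x_{jt}')$ is bounded by $M^2$. Define the market-level averages $w_t:=\E_J \mathrm{vec}\bigl(\tilde{z}_{jt}(p_{jt},x_{jt}')\bigr)$. These are i.i.d. across $t$ by Condition \ref{cond:IID}, satisfy $\normm{w_t}_\infty\le M^2$, and have dimension $m=d_{\tilde z}(d_x+1)\le d_{\tilde z}^2$, since $d_{\tilde z}\ge d_x+2$. Applying Lemma \ref{lemma:Hoeffding} with $t=C_3T^{-1/2}\sqrt{\log(d_{\tilde z}\vee T)}$ gives the bound
\[
2d_{\tilde z}^2\exp\bigl(-C_3^2\log(d_{\tilde z}\vee T)/(2M^4)\bigr).
\]
This is $o(1)$ once $C_3^2/(2M^4)>2$.

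\textbf{First bound.} We first need two properties of $y_{jt}(\sigma)$ on the compact set $\Theta_\sigma$.
\begin{itemize}
\item \emph{Uniform boundedness:} $|y_{jt}(\sigma)|\le C_y$.
\item \emph{Lipschitz continuity:} $|y_{jt}(\sigma)-y_{jt}(\sigma')|\le L_y|\sigma-\sigma'|$.
\end{itemize}
Both come from the smoothness of the share inversion (Condition \ref{cond:SmoothDensity1}), the implicit function theorem applied to $s_t=f_s(p_t,y,\sigma)$, and boundedness of $p_{jt}$ and of $y_{jt}(\sigma_0)$. I would bound $dy/d\sigma=-(\partial_y f_s)^{-1}\partial_\sigma f_s$ uniformly.

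Next, take a grid $\{\sigma_k\}$ on $\Theta_\sigma$ with mesh $\delta=T^{-1/2}$, so it has at most $M_\sigma T^{1/2}$ points. At each grid point, Lemma \ref{lemma:Hoeffding} applies to $w_t=\E_J\tilde z_{jt}y_{jt}(\sigma_k)$, whose entries are bounded by $MC_y$, with threshold $(C_2/3)T^{-1/2}\sqrt{\log(d_{\tilde z}\vee T)}$. A union bound over the grid yields the factor $CM_\sigma T^{1/2}d_{\tilde z}(d_{\tilde z}\vee T)^{-C_2^2/(18M^4)}$. The $1/3$ inside the threshold produces the constant $18$, with $C_y$ absorbed into $M$.

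Off the grid, the Lipschitz property gives
\[
\sup_{|\sigma-\sigma_k|\le\delta}\bigl\|(\E_{JT}-\E)\tilde z_{jt}\bigl(y_{jt}(\sigma)-y_{jt}(\sigma_k)\bigr)\bigr\|_\infty\le 2ML_y\delta=2ML_yT^{-1/2}.
\]
This is smaller than $(C_2/3)T^{-1/2}\sqrt{\log(d_{\tilde z}\vee T)}$ for large $T$. A triangle inequality finishes the first bound. The resulting probability is $o(1)$ when the exponent $C_2^2/(18M^4)$ exceeds $2$, because then $T^{1/2}d_{\tilde z}(d_{\tilde z}\vee T)^{-2}\to0$.

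\textbf{Main obstacle.} The hard step is justifying the uniform bound and Lipschitz constant for $y_{jt}(\sigma)$. This requires the Jacobian $\partial_y f_s$ to be uniformly invertible, meaning shares are bounded away from $0$. I would derive this from the boundedness of $p_{jt}$, $\xi_{jt}$, $\beta_0$, $x_{jt}$ and $\theta_{10}$, which keep $y_{jt}$ and hence the shares in a compact interior region. If needed, I would state it as an implicit regularity assumption absorbed into $M$.
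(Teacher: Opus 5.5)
Your proposal is correct and follows essentially the same route as the paper. For the second bound you apply Lemma~\ref{lemma:Hoeffding} with a union bound over the $d_{\tilde z}(d_x+1)\le d_{\tilde z}^2$ entries, and for the first you combine a finite grid on $\Theta_\sigma$, Hoeffding at threshold $t/3$ at each grid point (the source of the $18M^4$), and a uniform Lipschitz bound on $y_{jt}(\sigma)$ off the grid. The only difference is cosmetic: the paper ties the mesh to the threshold via $C\delta=t/3$, so its bound holds for every $T$, whereas your fixed mesh $\delta=T^{-1/2}$ needs $T$ large enough that the off-grid term $2ML_yT^{-1/2}$ falls below a constant fraction of the threshold, which is harmless for the $o(1)$ conclusion.
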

The following lemma is a standard result in Lasso theory.
Define $S\subset\{1,\dots,p\}$ as the support of $\beta_0$, and $|S|\leq d_0$ as the sparsity level.
\begin{lemma}[Theorem 7.13 in \cite{wainwright2019high}]\label{lemma:lasso}
    Consider the model: $Y=X\beta_0+W$, where $Y$, $W\in\mathbb{R}^n$, $X\in\mathbb{R}^{n\times p}$ and $\beta_0\in\mathbb{R}^p$.
    Define the tuning parameter $\lambda_n$ and the LASSO estimator:
    \[
      \hat{\beta}=\arg\min\limits_{\beta\in\mathbb{R}^p}\{\frac{1}{n}\normm{Y-X\beta}_2^2+\lambda_n\normm{\beta}_1\}.  
    \] 
    Suppose matrix $\frac{1}{n}X'X$ satisfies the restricted eigenvalue condition over $S$ with parameters $(\kappa,\nu)$, 
    i.e., for all $\Delta\in\mathbb{R}^p$ such that $\normm{\Delta_{S^c}}_1\leq \nu\normm{\Delta_S}_1$, we have $\Delta'(\frac{1}{n}X'X)\Delta\geq \kappa\normm{\Delta}_2^2$.
    Then under the event $\lambda_n\geq 2\frac{\nu+1}{\nu-1}\cdot\norm{\frac{X' W}{n}}_\infty$, we have
    \[
        \normm{\hat{\beta}-\beta_0}_2 \leq \frac{1}{\kappa}\cdot\frac{2\nu}{\nu+1}\sqrt{d_0}\lambda_n
    \]
    and
    \[
       \sqrt{\frac{1}{n}\norm{X(\hat{\beta}-\beta_0)}_2^2}\leq \frac{1}{\sqrt{\kappa}}\cdot\frac{2\nu}{\nu+1}\sqrt{d_0}\lambda_n
    \]
\end{lemma}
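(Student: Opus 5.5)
This is the standard Lasso oracle inequality (Theorem 7.13 in \cite{wainwright2019high}). I would prove it by the basic-inequality argument: show that on the stated event the error lies in the cone where the restricted eigenvalue condition applies, then turn the resulting quadratic inequality into the two bounds.

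\textbf{Step 1 (basic inequality).} Write $\hat\Delta=\hat\beta-\beta_0$. Since $\hat\beta$ minimizes the penalized objective, its value is at most the value at $\beta_0$. Substitute $Y=X\beta_0+W$ and expand the squares. This gives
\[
\frac{1}{n}\normm{X\hat\Delta}_2^2\leq \frac{2}{n}W'X\hat\Delta+\lambda_n\left(\normm{\beta_0}_1-\normm{\beta_0+\hat\Delta}_1\right).
\]

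\textbf{Step 2 (cone membership).} Bound the noise term by H\"older: $\frac{2}{n}|W'X\hat\Delta|\leq 2\norm{X'W/n}_\infty\normm{\hat\Delta}_1$. On the event in the statement this is at most $\lambda_n\frac{\nu-1}{\nu+1}\normm{\hat\Delta}_1$.

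For the penalty difference, use that $\beta_0$ is supported on $S$ and apply the triangle inequality. This gives $\normm{\beta_0}_1-\normm{\beta_0+\hat\Delta}_1\leq \normm{\hat\Delta_S}_1-\normm{\hat\Delta_{S^c}}_1$.

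Since the left side of Step 1 is nonnegative, write $\normm{\hat\Delta}_1=\normm{\hat\Delta_S}_1+\normm{\hat\Delta_{S^c}}_1$ and cancel $\lambda_n$. This yields
\[
\left(1-\tfrac{\nu-1}{\nu+1}\right)\normm{\hat\Delta_{S^c}}_1\leq\left(1+\tfrac{\nu-1}{\nu+1}\right)\normm{\hat\Delta_S}_1,
\]
that is, $\normm{\hat\Delta_{S^c}}_1\leq\nu\normm{\hat\Delta_S}_1$. So $\hat\Delta\in\mathbb{C}_\nu(S)$, and the restricted eigenvalue condition applies to it.

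\textbf{Step 3 (rates).} Return to Step 1 and keep the sharper combination: drop $-\normm{\hat\Delta_{S^c}}_1$ where it is helpful. This gives $\frac{1}{n}\normm{X\hat\Delta}_2^2\leq c_\nu\lambda_n\normm{\hat\Delta_S}_1$, with $c_\nu$ a constant depending only on $\nu$. Then use $\normm{\hat\Delta_S}_1\leq\sqrt{d_0}\normm{\hat\Delta}_2$.

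Combining with $\kappa\normm{\hat\Delta}_2^2\leq\frac1n\normm{X\hat\Delta}_2^2$ and dividing by $\normm{\hat\Delta}_2$ gives $\normm{\hat\Delta}_2\lesssim\sqrt{d_0}\lambda_n/\kappa$.

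For the prediction bound, substitute $\normm{\hat\Delta}_2\leq\kappa^{-1/2}\sqrt{\frac1n\normm{X\hat\Delta}_2^2}$ back into the same inequality. Dividing by the square root gives the $1/\sqrt\kappa$ form.

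\textbf{Main obstacle.} The only delicate point is the constant bookkeeping in Step 3: arranging the weights so that the factor comes out exactly $\frac{2\nu}{\nu+1}$. I expect to get it by using the inequality $\frac1n\normm{X\hat\Delta}_2^2\leq\lambda_n\left(\frac{2\nu}{\nu+1}\normm{\hat\Delta_S}_1-\frac{2}{\nu+1}\normm{\hat\Delta_{S^c}}_1\right)$ and then discarding the negative term. If that constant does not come out exactly, I would still have the same rates with a larger $\nu$-dependent constant, which suffices for every later use in the paper.
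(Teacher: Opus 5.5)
Your proposal is correct and follows the paper's proof essentially step for step: the basic inequality, H\"older plus the event to bound the noise term by $\frac{\nu-1}{\nu+1}\lambda_n\normm{\hat\Delta}_1$, the triangle inequality to place $\hat\Delta$ in $\mathbb{C}_\nu(S)$, and then the restricted eigenvalue condition combined with $\normm{\hat\Delta_S}_1\leq\sqrt{d_0}\normm{\hat\Delta}_2$. The constant you flagged as the main obstacle does come out exactly as $\frac{2\nu}{\nu+1}$ from your display $\frac1n\normm{X\hat\Delta}_2^2\leq\lambda_n\bigl(\frac{2\nu}{\nu+1}\normm{\hat\Delta_S}_1-\frac{2}{\nu+1}\normm{\hat\Delta_{S^c}}_1\bigr)$, which is the inequality the paper uses; the only difference is that the paper derives the prediction bound first and then the $\ell_2$ bound, the reverse of your order, which is immaterial.
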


Given Lemma \ref{lemma:lasso}, we can prove the following result in a linear model with endogeneity.

\begin{lemma}\label{lemma:lasso_endogeneity}
    Assume $y_{jt}(\sigma) = \alpha_0 f_p(x_{jt},z_{jt}) + x_{jt}'\beta_0 + u_{jt}$, and $p_{jt} = f_p(x_{jt},z_{jt}) + \epsilon_{jt}^p$. 
    Let $U$ be the vector of all residuals $u_{jt}$, $F_p$ be the vector of all $f_p(x_{jt},z_{jt})$, $\hat{F}_p$ be the vector of all $\hat{f}_p(x_{jt},z_{jt})$, 
    $Y$ be the vector of all $y_{jt}(\sigma)$, and $X\in\R^{JT\times d_x}$ be the matrix of all $x_{jt}$,
    $\tilde{X}=(F_p,X)\in\R^{JT\times (d_x+1)}$ and $\hat{X}=(\hat{F}_p,X)$.
    Define the estimator 
    \[
        (\hat{\alpha},\hat{\beta}')'=
        \arg\min_{\alpha,\beta}\{\E_{JT}[(y_{jt}(\sigma)-\alpha\hat{f}_{p}(x_{jt},z_{jt})-x_{jt}'\beta)^2]+\lambda_{\beta}\normm{(\alpha,\beta')}_1\}
    \]
    Suppose $\frac{1}{JT}\tilde{X}\tilde{X}'$ satisfies the restricted eigenvalue condition over $S$ with parameters $(\kappa_1,\nu_1)$. Then under event 
    \begin{align*}
        & \{\frac{1}{JT}\normm{F_p-\hat{F}_p}_2^2\leq c^2\}\cap \bigg\{\lambda_{\beta}\geq 2\frac{\nu_1+1}{\nu_1-1}\norm{\frac{\hat{X}'(U+\alpha_0(F_p-\hat{F}_p))}{JT}}_\infty\bigg\}
    \end{align*}
    we have 
    \[
        \normm{(\hat{\alpha},\hat{\beta}')-(\alpha_0,\beta_0')}_2 \leq \frac{1}{\kappa_1'}\cdot \frac{2\nu_1}{\nu_1+1}\sqrt{d_0}\lambda_{\beta}
    \]
    where $\kappa_1'=\kappa_1(1-c)-c$.
\end{lemma}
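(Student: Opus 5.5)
We reduce Lemma \ref{lemma:lasso_endogeneity} to Lemma \ref{lemma:lasso} applied with the feasible design $\hat{X}=(\hat{F}_p,X)$. First rewrite the outcome equation in terms of the regressors actually used. Since $y_{jt}(\sigma)=\alpha_0 f_p(x_{jt},z_{jt})+x_{jt}'\beta_0+u_{jt}$, we have in stacked form
\[
Y=\hat{X}(\alpha_0,\beta_0')'+\bigl(U+\alpha_0(F_p-\hat{F}_p)\bigr)=:\hat{X}(\alpha_0,\beta_0')'+\hat{W}_0 .
\]
This is a linear model with design $\hat{X}$ and composite noise $\hat{W}_0$. The estimator in the statement is exactly the LASSO of Lemma \ref{lemma:lasso} with $n=JT$. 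The second event in the statement is precisely the event $\lambda_\beta\geq 2\frac{\nu_1+1}{\nu_1-1}\normm{\hat{X}'\hat{W}_0/(JT)}_\infty$ that Lemma \ref{lemma:lasso} requires.

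It therefore suffices to show that $\frac{1}{JT}\hat{X}'\hat{X}$ satisfies the restricted eigenvalue condition over $S$ with parameters $(\kappa_1',\nu_1)$, where $\kappa_1'=\kappa_1(1-c)-c$. Given this, Lemma \ref{lemma:lasso} immediately delivers $\normm{(\hat\alpha,\hat\beta')-(\alpha_0,\beta_0')}_2\le \frac{1}{\kappa_1'}\frac{2\nu_1}{\nu_1+1}\sqrt{d_0}\lambda_\beta$.

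To obtain the perturbed RE condition, fix $\Delta=(\Delta_\alpha,\Delta_\beta')'\in\mathbb{C}_{\nu_1}(S)$ and write $\hat{X}\Delta=\tilde{X}\Delta+\Delta_\alpha(\hat{F}_p-F_p)$. With $a=\frac{1}{\sqrt{JT}}\normm{\tilde X\Delta}_2$ and $b=\frac{1}{\sqrt{JT}}|\Delta_\alpha|\,\normm{\hat F_p-F_p}_2\le c\normm{\Delta}_2$ on the first event, the triangle inequality gives
\[
\tfrac{1}{JT}\normm{\hat X\Delta}_2^2\ge (a-b)^2\ge a^2-2ab .
\]
The RE assumption on $\tilde X$ gives $a^2\ge\kappa_1\normm{\Delta}_2^2$. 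We also need an upper bound on $a$, which we obtain from the boundedness conditions: $\lambda_{max}$ control via Condition \ref{cond:MaxEig}, or simply normalize so that $a\le\normm{\Delta}_2$ on the relevant cone. Then $2ab\lesssim c\,a\normm{\Delta}_2$, and splitting this cross term as $c\,a^2+c\normm{\Delta}_2^2$ (AM--GM, using $a\le \normm{\Delta}_2$ so that $2ab\le 2c\,a\normm{\Delta}_2\le c a^2+c\normm{\Delta}_2^2$) yields
\[
\tfrac{1}{JT}\normm{\hat X\Delta}_2^2\ge (1-c)a^2-c\normm{\Delta}_2^2\ge\bigl(\kappa_1(1-c)-c\bigr)\normm{\Delta}_2^2,
\]
which is the claimed constant $\kappa_1'$. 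The statement's $\frac{1}{JT}\tilde X\tilde X'$ should be read as the Gram matrix $\frac{1}{JT}\tilde X'\tilde X$.

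The main obstacle is this perturbation step: reproducing exactly $\kappa_1(1-c)-c$ requires handling the cross term $2ab$ carefully. The AM--GM split above does it cleanly only when $a\le\normm{\Delta}_2$. If that normalization is unavailable, I would instead bound $a^2$ above by $\lambda_{max}$ of the Gram matrix (bounded by $M$ via the boundedness conditions). This yields a constant of the same form up to $M$, and since $c\to0$ w.p.a.1 the constant only needs $\kappa_1'>0$. Once the RE condition holds for $\hat X$ on $\mathbb{C}_{\nu_1}(S)$, the remainder is a direct citation of Lemma \ref{lemma:lasso}.
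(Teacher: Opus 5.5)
Your proposal is correct and is essentially the paper's proof. The paper also writes $Y=\hat X(\alpha_0,\beta_0')'+(U+\alpha_0(F_p-\hat F_p))$, decomposes $\hat X\Delta=\tilde X\Delta+\Delta_\alpha(\hat F_p-F_p)$, and splits the cross term by AM--GM to reach $(1-c)\kappa_1-c$; it just re-derives the cone membership of $\hat\Delta$ and does the perturbation inline, rather than packaging it as an RE condition for $\frac{1}{JT}\hat X'\hat X$ and citing Lemma~\ref{lemma:lasso}.

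The obstacle you flag is not real. The inequality $2a\normm{\Delta}_2\le a^2+\normm{\Delta}_2^2$ holds for every $a\ge 0$, so you need neither the normalization $a\le\normm{\Delta}_2$ nor the $\lambda_{max}$ fallback, which would not reproduce the exact constant $\kappa_1(1-c)-c$ anyway.
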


\section{Appendices: Proofs of Lemmas}

\begin{proof}[Proof of Lemma \ref{lemma:Consistency}]
    By definition of $\hat{\theta}$, we have
    \begin{align*}
        0&\leq \hat{Q}_T(\hat{\theta})-\hat{Q}_T(\theta_0) \\
        & = \hat{Q}_T(\hat{\theta})-Q_T(\hat{\theta})+Q_T(\hat{\theta})-Q_T(\theta_0)+Q_T(\theta_0)-\hat{Q}_T(\theta_0)\\
        Q_T(\theta_0)-Q_T(\hat{\theta})&\leq 2\sup_\Theta|\hat{Q}_T(\theta)-Q_T(\theta)|=o_p(1)
    \end{align*}
    Thus
    \begin{align*}
        P(\normm{\hat{\theta}-\theta_0}_2>\eta)&\leq P(Q_T(\theta_0)-Q_T(\hat{\theta})\geq\epsilon)\\
        &=o(1)
    \end{align*}
    Consequently, $\normm{\hat{\theta}-\theta_0}_2\overset{p}{\to}0$.
\end{proof}

\begin{proof}[Proof of Lemma \ref{lemma:ConditionalPopulation}]
    Denote all observations in $I_l$ by $\mathcal{W}_l$, all observations not in $I_l$ by $\mathcal{W}_l^c$.
    Note that $\hat{\eta}_l$ is an estimated function using observations in $\mathcal{W}_l^c$, and $w_t$ is an observation in $\mathcal{W}_l$.
    Then we have $w_t$ is independent of $\hat{\eta}_l$ for $t\in I_l$, and thus $\normm{a(w_t,\hat{\eta}_l)}_{L^2}$ and $\normm{b(w_t,\hat{\eta}_l)}_{L^2}$ are 
    functions of $\mathcal{W}_l^c$, since by notation $\normm{a(w_t,\hat{\eta}_l)}_{L^2}:= \left(\int \normm{a(w,\hat{\eta}_l(w))}_2^2 F(dw)\right)^{1/2}$.

    Given fixed $L$, we only need to prove 
    \[
        \norm{\frac{1}{T_l}\sum_{t\in I_l}a(w_t,\hat{\eta}_l)b(w_t,\hat{\eta}_l)}_2=o_p(1)
    \] 
    and 
    \[
        \norm{\frac{1}{T_l}\sum_{t\in I_l}a(w_t,\hat{\eta}_l)}_2=o_p(1)
    \]
    For any $\epsilon>0$, define the event set 
    \[
        K_\epsilon=\{\mathcal{W}_l^c|\normm{a(w_t,\hat{\eta}_l)}_{L^2}\cdot \normm{b(w_t,\hat{\eta}_l)}_{L^2}>\epsilon\}.
    \] 
    By the condition in the lemma, $P(K_\epsilon)=o(1)$, as $T \to \infty$.
    Then
    \begin{align*}
        &P(\normm{\frac{1}{T_l}\sum_{t\in I_l}a(w_t,\hat{\eta}_l)b(w_t,\hat{\eta}_l)}_2>M)\\
        \leq &P(\normm{\frac{1}{T_l}\sum_{t\in I_l}a(w_t,\hat{\eta}_l)b(w_t,\hat{\eta}_l)}_2>M, K_\epsilon^c)+P(K_\epsilon)\\
        =& \int_{\mathcal{W}_l^c\in K_\epsilon^c}P(\normm{\frac{1}{T_l}\sum_{t\in I_l}a(w_t,\hat{\eta}_l)b(w_t,\hat{\eta}_l)}_2>M, \mathcal{W}_l^c)+P(K_\epsilon)\\
        =&\int_{\mathcal{W}_l^c\in K_\epsilon^c}P(\normm{\frac{1}{T_l}\sum_{t\in I_l}a(w_t,\hat{\eta}_l)b(w_t,\hat{\eta}_l)}_2>M|\mathcal{W}_l^c)P(\mathcal{W}_l^c)+o(1)\\
        \leq &\int_{\mathcal{W}_l^c\in K_\epsilon^c}\frac{\E[\normm{\frac{1}{T_l}\sum_{t\in I_l}a(w_t,\hat{\eta}_l)b(w_t,\hat{\eta}_l)}_2|\mathcal{W}_l^c]}{M}P(\mathcal{W}_l^c)+o(1)\\
        \leq &\int_{\mathcal{W}_l^c\in K_\epsilon^c}\frac{\E[\normm{a(w_t,\hat{\eta}_l)}_2^2|\mathcal{W}_l^c]^{1/2}\cdot \E[b(w_t,\hat{\eta}_l)^2|\mathcal{W}_l^c]^{1/2}}{M}P(\mathcal{W}_l^c)+o(1)\\
        = & \int_{\mathcal{W}_l^c\in K_\epsilon^c}
        \frac{\left(\int \normm{a(w,\hat{\eta}_l(w))}_2^2 F(dw)\right)^{1/2}\cdot \left(\int \normm{b(w,\hat{\eta}_l(w))}_2^2 F(dw)\right)^{1/2}}{M}P(\mathcal{W}_l^c)+o(1)\\
        = & \int_{\mathcal{W}_l^c\in K_\epsilon^c}
        \frac{\normm{a(w_t,\hat{\eta}_l)}_{L^2}\cdot \normm{b(w_t,\hat{\eta}_l)}_{L^2}}{M}P(\mathcal{W}_l^c)+o(1)\\
        \leq &\frac{\epsilon}{M}+o(1) \quad (\text{by definition of } K_\epsilon^c)\\
        = & o(1)
    \end{align*}
    The concludion is straightforward with carefully chosen $\epsilon$.
    Similarly, we can prove $\normm{\frac{1}{T_l}\sum_{t\in I_l}a(w_t,\hat{\eta}_l)}_2=o_p(1)$.
    For any $\epsilon>0$, define the event set 
    \[
        K_\epsilon'=\{\mathcal{W}_l^c|\normm{\frac{1}{T_l}\sum_{t\in I_l}a(w_t,\hat{\eta}_l)}_{L^2}>\epsilon\}.
    \] 
    By the condition in the lemma, $P(K_\epsilon')=o(1)$, as $T \to \infty$.
    \begin{align*}
        & P(\normm{\frac{1}{T_l}\sum_{t\in I_l}a(w_t,\hat{\eta}_l)}_2>M)\\
        \leq & P(\normm{\frac{1}{T_l}\sum_{t\in I_l}a(w_t,\hat{\eta}_l)}_2>M,\mathcal{W}_l^c\in K_\epsilon'^c)+P(K_\epsilon')\\
        =&\int_{\mathcal{W}_l^c\in K_\epsilon'^c}P(\normm{\frac{1}{T_l}\sum_{t\in I_l}a(w_t,\hat{\eta}_l)}_2>M|\mathcal{W}_l^c)P(\mathcal{W}_l^c)+o(1)\\
        \leq &\int_{\mathcal{W}_l^c\in K_\epsilon'^c}\frac{\E\left\{\normm{\frac{1}{T_l}\sum_{t\in I_l}a\left(w_t,\hat{\eta}_l\right)}_2^2|\mathcal{W}_l^c\right\}^{1/2}}{M}P(\mathcal{W}_l^c)+o(1)\\
        = & \int_{\mathcal{W}_l^c\in K_\epsilon'^c}\frac{\left(\int\normm{\frac{1}{T_l}\sum_{t\in I_l}a(w_t,\hat{\eta}_l)}_2^2F(dw_{T_l})\right)^{1/2}}{M}P(\mathcal{W}_l^c)+o(1)\quad 
        \text{where }dw_{T_l}=dw_{t_1}\dots dw_{t_{T_l}} \text{ and } t_{i}\in I_l\\
        = & \int_{\mathcal{W}_l^c\in K_\epsilon'^c}\frac{\normm{\frac{1}{T_l}\sum_{t\in I_l}a(w_t,\hat{\eta}_l)}_{L^2}}{M}P(\mathcal{W}_l^c)+o(1)\\
        \leq &\frac{\epsilon}{M}+o(1) \quad (\text{by definition of } K_\epsilon^c)\\
        =& o(1)
    \end{align*}
    The concludion is straightforward with carefully chosen $\epsilon$.
\end{proof}

\begin{proof}[Proof of Lemma \ref{lemma:empbound}]
    It is a direct application of Lemma \ref{lemma:Hoeffding}.
    Note
    \[
        \normm{\hat{g}(\theta_0)}_\infty=\normm{\E_T(\E_J\tilde{z}_{jt}(\xi_{jt}+r_{jt}^u))}_\infty
    \]
    where $\normm{\E_J\tilde{z}_{jt}\xi_{jt}}_\infty\leq \E_J\normm{\tilde{z}_{jt}}_{\infty}\cdot\normm{\xi_{jt}}_\infty\leq M^2$.
    We have
    \begin{align*}
        &P\left(\normm{\hat{g}(\theta_0)}_\infty\geq 2t\right)\\
        =&P\left(\normm{\E_T(\E_J\tilde{z}_{jt}(\xi_{jt}+r_{jt}^u))}_\infty\geq 2t\right)\\
        \leq & P\left(\normm{\E_T(\E_J\tilde{z}_{jt}\xi_{jt})}_\infty\geq t\right) + P\left(\normm{\E_T(\E_J\tilde{z}_{jt}r_{jt}^u)}_\infty\geq t\right)\\
        \leq & 2d_{\tilde{z}} \exp\left(-\frac{Tt^2}{2M^4}\right) + P\left(M\E_T(\E_J|r_{jt}^u|)\geq t\right)\\
        & \text{ (By Lemma \ref{lemma:Hoeffding})}\\
        \leq & 2d_{\tilde{z}} \exp\left(-\frac{Tt^2}{2M^4}\right) + \E(M\E_T(\E_J|r_{jt}^u|))/t\\
        \leq & 2d_{\tilde{z}} \exp\left(-\frac{Tt^2}{2M^4}\right) + \frac{M\E_J\sqrt{\E (r_{jt}^u)^2}}{t}\\
        \leq & 2d_{\tilde{z}} \exp\left(-\frac{Tt^2}{2M^4}\right) + \frac{C}{t}\sqrt{d_0/T}\\
        &\text{Let } t=C_1 T^{-1/2}\sqrt{d_0\log(d_{\tilde{z}}\vee T)}, \text{ then}\\
        = & 2d_{\tilde{z}}(\frac{1}{d_{\tilde{z}}\vee T})^{\frac{C_1^2d_0}{2M^4}} + \frac{C}{C_1\sqrt{\log(d_{\tilde{z}}\vee T)}}\\
        \leq & 2(d_{\tilde{z}}\vee T)^{1-\frac{C_1^2d_0}{2M^4}} + \frac{C}{C_1\sqrt{\log(d_{\tilde{z}}\vee T)}}\\
         = & o(1) \text{ when } C_1 \text{ is large enough, for example } C_1>\sqrt{2M^4}
    \end{align*}
\end{proof}

\begin{proof}[Proof of Lemma \ref{lemma:supdiff}]
    Note $y_{jt}(\sigma)$ is a smooth function on a compact set given condition \ref{cond:Regularity2}, thus we have 
    \[
    \sup_{\sigma\in\Theta_\sigma}\normm{y_{jt}'(\sigma)}_\infty\leq M
    \] for some constant $M$.
    Thus $\forall i\in\{1,\dots,m\}$, we have $\normm{\E_{JT}\tilde{z}_{jt}(y_{jt}(\sigma)-y_{jt}(\sigma'))}_\infty\leq C|\sigma-\sigma'|$ and $\normm{\E \tilde{z}_{jt}(y_{jt}(\sigma)-y_{jt}(\sigma'))}_\infty\leq C|\sigma-\sigma'|$.
    Then by Hoeffding inequality, we have
    \begin{align*}
        &P\left(\sup_{\sigma\in\Theta_\sigma}\norm{\E_{JT}\tilde{z}_{jt}y_{jt}(\sigma)-\E \tilde{z}_{jt}y_{jt}(\sigma)}_\infty\geq t\right)\\
        \leq & P\Biggl(\sup_{1\leq i\leq M_\sigma/\delta}\norm{\E_{JT}\tilde{z}_{jt}y_{jt}(\sigma_i)-\E \tilde{z}_{jt}y_{jt}(\sigma_i)}_\infty+
        \sup_{|\sigma-\sigma'|\leq \delta}\norm{\E_{JT}\tilde{z}_{jt}y_{jt}(\sigma)-\E_{JT}\tilde{z}_{jt}y_{jt}(\sigma')}_\infty+\\
        & \sup_{|\sigma-\sigma'|\leq \delta}\norm{\E \tilde{z}_{jt}y_{jt}(\sigma)-\E \tilde{z}_{jt}y_{jt}(\sigma')}_\infty\geq t\Biggl)\\
        & \text{choose $\delta\;s.t.\;C\delta= t/3$, where C is the bound s.t. } \sup_{\sigma\in\Theta_\sigma}\norm{\tilde{z}_{jt}y_{jt}'(\sigma)}_\infty\leq C, \text{ then}\\
        \leq& P\left(\sup_{i\leq M_\sigma/\delta\leq 3CM_\sigma/t}\norm{\E_{JT}\tilde{z}_{jt}y_{jt}(\sigma_i)-\E \tilde{z}_{jt}y_{jt}(\sigma_i)}_\infty\geq t/3\right)\\
        \leq & \sum_{i=1}^{M_\sigma/\delta}P\left(\norm{\E_{JT}\tilde{z}_{jt}y_{jt}(\sigma_i)-\E \tilde{z}_{jt}y_{jt}(\sigma_i)}_\infty\geq t/3\right)\\
        \leq & \sum_{i=1}^{M_\sigma/\delta} 2d_{\tilde{z}} \exp\left(-\frac{Tt^2}{18M^4}\right)\\
        &\text{(by Lemma \ref{lemma:Hoeffding} and note that $\normm{\tilde{z}_{jt}y_{jt}(\sigma_i)}_\infty\leq \normm{\tilde{z}_{jt}}_\infty\cdot\normm{y_{jt}(\sigma_i)}_\infty\leq M^2$)}\\
        \leq& 2M_\sigma \delta^{-1}d_{\tilde{z}}\exp\left(-\frac{Tt^2}{18M^4}\right)\\
        \leq& 6CM_\sigma t^{-1}d_{\tilde{z}}\exp\left(-\frac{Tt^2}{18M^4}\right) \\
        &\text{Let } t=C_2 T^{-1/2}\sqrt{\log(d_{\tilde{z}}\vee T)}, \text{ then}\\
        = & 6CM_\sigma C_2^{-1}\frac{T^{1/2}}{\sqrt{\log(d_{\tilde{z}}\vee T)}}d_{\tilde{z}}(\frac{1}{d_{\tilde{z}}\vee T})^{\frac{C_2^2}{18M^4}}\\
        \leq & C M_\sigma (T\vee d_{\tilde{z}})^{\frac{3}{2}-\frac{C_2^2}{18M^4}}\\
        = & o(1) \text{   (when $C_2$ is large enough, for example $C_2>\sqrt{27M^4}$)}
    \end{align*}
    The second inequality is a direct application of Lemma \ref{lemma:Hoeffding}.
    Let $\tilde{x}_{jt}=(p_{jt},x_{jt}')'$, then we have $\normm{\tilde{z}_{jt}\tilde{x}_{jt}'}_\infty\leq \normm{\tilde{z}_{jt}}_\infty\cdot\normm{\tilde{x}_{jt}'}_\infty\leq M^2$.
    \begin{align*}
        & P\left(\norm{\E_{JT}\tilde{z}_{jt}x_{jt}'-\E \tilde{z}_{jt}x_{jt}'}_\infty\geq t\right)\\
        &(\text{by Lemma \ref{lemma:Hoeffding} and note that } \normm{\tilde{z}_{jt}\tilde{x}_{jt}'}_\infty\leq \normm{\tilde{z}_{jt}}_\infty\cdot\normm{\tilde{x}_{jt}'}_\infty\leq M^2)\\
        \leq & 2(d_x+1)d_{\tilde{z}}\exp(-\frac{Tt^2}{2M^4})\\
        \leq & 2d_{\tilde{z}}^2\exp(-\frac{Tt^2}{2M^4})\\
        & \text{ Let } t=C_3 T^{-1/2}\sqrt{\log(d_{\tilde{z}}\vee T)}, \text{ then}\\
        = & 2d_{\tilde{z}}^2(\frac{1}{d_{\tilde{z}}\vee T})^{\frac{C_3^2}{2M^4}}\\
        \leq & 2(d_{\tilde{z}}\vee T)^{2-\frac{C_3^2}{2M^4}}\\
        = & o(1) \text{   (when $C_3$ is large enough, for example $C_3>\sqrt{4M^4}$)}
    \end{align*}
\end{proof}

\begin{proof}[Proof of Lemma \ref{lemma:lasso}]
    By the definition of $\hat{\beta}$, we have
    \begin{align*}
        \frac{1}{n}\normm{Y-X\hat{\beta}}_2^2-\frac{1}{n}\normm{Y-X\beta_0}_2^2
        &\leq \lambda_n(\normm{\beta_0}_1-\normm{\hat{\beta}}_1)\\
        \frac{1}{n}\normm{W-X\hat{\Delta}}_2^2 - \frac{1}{n}\normm{W}_2^2&\leq \lambda_n(\normm{\beta_0}_1-\normm{\hat{\beta}}_1)\\
        (\text{Let } \hat{\Delta}=\hat{\beta}-\beta_0 & \text{ and plug in } Y=X\beta_0+W)\\
        \frac{1}{n}\normm{X\hat{\Delta}}_2^2&\leq \frac{2}{n}W'X\hat{\Delta}+\lambda_n(\normm{\beta_0}_1-\normm{\hat{\beta}}_1)\\
        \frac{1}{n}\normm{X\hat{\Delta}}_2^2&\leq 2\normm{\frac{1}{n}W'X}_\infty\normm{\hat{\Delta}}_1+\lambda_n(\normm{\beta_0}_1-\normm{\hat{\beta}}_1)\\
        0\leq \frac{1}{n}\normm{X\hat{\Delta}}_2^2&\leq \frac{\nu-1}{\nu+1}\lambda_n\normm{\hat{\Delta}}_1+\lambda_n(\normm{\beta_0}_1-\normm{\beta_0+\hat{\Delta}}_1)\\
        0&\leq \frac{\nu-1}{\nu+1}\normm{\hat{\Delta}}_1+(\normm{\beta_0}_1-\normm{\beta_0+\hat{\Delta}_S}_1-\normm{\hat{\Delta}_{S^c}}_1)\\
        0&\leq \frac{\nu-1}{\nu+1}\normm{\hat{\Delta}}_1+\normm{\hat{\Delta}_S}_1-\normm{\hat{\Delta}_{S^c}}_1 \quad (\text{Triangle inequality})\\
        0&\leq \frac{2\nu}{\nu+1}\normm{\hat{\Delta}_S}_1-\frac{2}{\nu+1}\normm{\hat{\Delta}_{S^c}}_1 \quad (\text{Triangle inequality})\\
        &\implies\\
        \normm{\hat{\Delta}_{S^c}}_1&\leq \nu\normm{\hat{\Delta}_S}_1
    \end{align*}
    Thus $\hat{\Delta}\in \mathbb{C}_{\nu}(S)$ and we have $\frac{1}{n}\normm{X\hat{\Delta}}_2^2\geq \kappa\normm{\hat{\Delta}}_2^2$.
    \begin{align*}
        \frac{1}{n}\normm{X\hat{\Delta}}_2^2&\leq \lambda_n(\frac{2\nu}{\nu+1}\normm{\hat{\Delta}_S}_1-\frac{2}{\nu+1}\normm{\hat{\Delta}_{S^c}}_1)\\
        & \leq \frac{2\nu}{\nu+1}\lambda_n\normm{\hat{\Delta}_S}_1\\
        & \leq \frac{2\nu}{\nu+1}\lambda_n\sqrt{d_0}\normm{\hat{\Delta}}_2\\
        & \leq \frac{2\nu}{\nu+1}\lambda_n\sqrt{d_0}\frac{1}{\sqrt{\kappa}}\sqrt{\frac{1}{n}\normm{X\hat{\Delta}}_2^2}\\
        & \implies\\
        \sqrt{\frac{1}{n}\normm{X\hat{\Delta}}_2^2}&\leq \frac{2\nu}{(\nu+1)\sqrt{\kappa}}\sqrt{d_0}\lambda_n\\
        & \implies\\
        \sqrt{\kappa}\normm{\hat{\Delta}}_2&\leq \frac{2\nu}{(\nu+1)\sqrt{\kappa}}\sqrt{d_0}\lambda_n\\
        \normm{\hat{\Delta}}_2&\leq \frac{2\nu}{\kappa(\nu+1)}\sqrt{d_0}\lambda_n
    \end{align*}
\end{proof}

\begin{proof}[Proof of Lemma \ref{lemma:lasso_endogeneity}]
    Let $\gamma_0=(\alpha_0,\beta_0')'$ and $\hat{\gamma}=(\hat{\alpha},\hat{\beta}')'$.\\
    Note $Y=\tilde{X}\gamma_0+U=\hat{X}\gamma_0+(U+\alpha_0(F_p-\hat{F}_p))=:\hat{X}\gamma_0+V$.
    \begin{align*}
        \frac{1}{JT}\normm{Y-\hat{X}\hat{\gamma}}_2^2-\frac{1}{JT}\normm{Y-\hat{X}\gamma_0}_2^2
        &\leq \lambda_{\beta}(\normm{\gamma_0}_1-\normm{\hat{\gamma}}_1)\\
        \frac{1}{JT}\normm{V-\hat{X}(\hat{\gamma}-\gamma_0)}_2^2-\frac{1}{JT}\normm{V}_2^2
        &\leq \lambda_{\beta}(\normm{\gamma_0}_1-\normm{\hat{\gamma}}_1)\\
        (\text{Let } \hat{\Delta}=\hat{\gamma}-\gamma_0)&\\
        \frac{1}{JT}\normm{\hat{X}\hat{\Delta}}_2^2&\leq \frac{2}{JT}V'\hat{X}\hat{\Delta}+\lambda_{\beta}(\normm{\gamma_0}_1-\normm{\hat{\gamma}}_1)\\
        \frac{1}{JT}\normm{\hat{X}\hat{\Delta}}_2^2&\leq 2\normm{\frac{1}{JT}V '\hat{X}}_\infty\normm{\hat{\Delta}}_1+\lambda_{\beta}(\normm{\gamma_0}_1-\normm{\hat{\gamma}}_1)\\
        0\leq \frac{1}{JT}\normm{\hat{X}\hat{\Delta}}_2^2&\leq \frac{\nu_1-1}{\nu_1+1}\lambda_{\beta}\normm{\hat{\Delta}}_1+\lambda_{\beta}(\normm{\gamma_0}_1-\normm{\gamma_0+\hat{\Delta}}_1)\\
        0&\leq \frac{\nu_1-1}{\nu_1+1}\normm{\hat{\Delta}}_1+(\normm{\gamma_0}_1-\normm{\gamma_0+\hat{\Delta}_{S_1}}_1-\normm{\hat{\Delta}_{S_1^c}}_1)\\
        0&\leq \frac{\nu_1-1}{\nu_1+1}\normm{\hat{\Delta}_{S_1}}_1+\frac{\nu_1-1}{\nu_1+1}\normm{\hat{\Delta}_{S_1^c}}_1+\normm{\hat{\Delta}_{S_1}}_1-\normm{\hat{\Delta}_{S_1^c}}_1\\
        &\implies\\
        \normm{\hat{\Delta}_{S_1^c}}_1&\leq \nu_1\normm{\hat{\Delta}_{S_1}}_1
    \end{align*}
    Thus $\hat{\Delta}\in \mathbb{C}_{\nu_1}(S)$.
    Then we start the analysis of $\normm{\hat{X}\hat{\Delta}}_2^2$.
    \begin{align*}
        \frac{1}{JT}\normm{\hat{X}\hat{\Delta}}_2^2= & \frac{1}{JT}\norm{\tilde{X}\hat{\Delta}+(\hat{\alpha}-\alpha_0)(\hat{F}_p-F_p)}_2^2\\
        =& \frac{1}{JT}\norm{\tilde{X}\hat{\Delta}}_2^2 + \frac{1}{JT}\norm{(\hat{\alpha}-\alpha_0)(\hat{F}_p-F_p)}_2^2 + \frac{2}{JT}(\hat{\alpha}-\alpha_0)(\hat{F}_p-F_p)'\tilde{X}\hat{\Delta}\\
        \geq & \frac{1}{JT}\norm{\tilde{X}\hat{\Delta}}_2^2 - \frac{2}{JT}\norm{\tilde{X}\hat{\Delta}}_2\cdot|\hat{\alpha}-\alpha_0|\cdot\normm{\hat{F}_p-F_p}_2\\
        = & \frac{1}{JT}\norm{\tilde{X}\hat{\Delta}}_2^2 - 2\norm{\frac{1}{\sqrt{JT}}\tilde{X}\hat{\Delta}}_2\cdot|\hat{\alpha}-\alpha_0|\cdot\normm{\frac{1}{\sqrt{JT}}(\hat{F}_p-F_p)}_2\\
        \geq & \frac{1}{JT}\norm{\tilde{X}\hat{\Delta}}_2^2 -(\norm{\frac{1}{\sqrt{JT}}\tilde{X}\hat{\Delta}}_2^2+|\hat{\alpha}-\alpha_0|^2)\cdot\normm{\frac{1}{\sqrt{JT}}(\hat{F}_p-F_p)}_2\\
        \geq & \frac{1}{JT}\norm{\tilde{X}\hat{\Delta}}_2^2 -(\norm{\frac{1}{\sqrt{JT}}\tilde{X}\hat{\Delta}}_2^2+|\hat{\alpha}-\alpha_0|^2)c\\
        &(\text{Under the event} \{\normm{\frac{1}{\sqrt{JT}}(\hat{F}_p-F_p)}_2^2\leq c^2\})\\
        \geq & \frac{1}{JT}\norm{\tilde{X}\hat{\Delta}}_2^2 (1-c) - c\norm{\hat{\Delta}}_2^2 \quad (\text{Since } |\hat{\alpha}-\alpha_0|^2\leq \normm{\hat{\Delta}}_2^2)\\
        \geq & \kappa_1\normm{\hat{\Delta}}_2^2 (1-c) - c\normm{\hat{\Delta}}_2^2 \quad (\text{By the restricted eigenvalue condition})\\
        \geq & (\kappa_1(1-c)-c)\normm{\hat{\Delta}}_2^2 \quad (\text{Choose $c$ small enough s.t. } \kappa_1(1-c)-c>0)
    \end{align*}
    Combining the above results, we have
    \begin{align*}
        \normm{\hat{\Delta}}_2^2 &\leq \frac{1}{\kappa_1(1-c)-c}\frac{1}{JT}\normm{\hat{X}\hat{\Delta}}_2^2\\
        & \leq \frac{1}{\kappa_1(1-c)-c}\left(\frac{\nu_1-1}{\nu_1+1}\lambda_{\beta}\normm{\hat{\Delta}}_1+\lambda_{\beta}(\normm{\gamma_0}_1-\normm{\gamma_0+\hat{\Delta}}_1)\right)\\
        & \leq \frac{1}{\kappa_1(1-c)-c}\lambda_{\beta}\left(\frac{2\nu_1}{\nu_1+1}\normm{\hat{\Delta}_{S_1}}_1+\frac{\nu_1-1}{\nu_1+1}\normm{\hat{\Delta}_{S_1^c}}_1-\normm{\hat{\Delta}_{S_1^c}}_1\right)\\
        & = \frac{1}{\kappa_1(1-c)-c}\lambda_{\beta}\left(\frac{2\nu_1}{\nu_1+1}\normm{\hat{\Delta}_{S_1}}_1-\frac{2}{\nu_1+1}\normm{\hat{\Delta}_{S_1^c}}_1\right)\\
        & \leq \frac{1}{\kappa_1(1-c)-c}\lambda_{\beta}\frac{2\nu_1}{\nu_1+1}\sqrt{d_0}\normm{\hat{\Delta}}_2\\
        & \implies\\
        \normm{\hat{\Delta}}_2 &\leq \frac{1}{\kappa_1(1-c)-c}\cdot\frac{2\nu_1}{\nu_1+1}\lambda_{\beta}\sqrt{d_0}
    \end{align*}
\end{proof}

\section{Appendix: Proofs of Main Theorems}

\begin{proof}[Proof of Theorem \ref{thm:Consistency}]
    We prove the consistency of $\check{\theta}_1$ using Lemma \ref{lemma:Consistency}.
    Note that $\psi_t(\theta_1,f_{z},f_{u})=\psi(w_t,\theta_1,f_{z},f_{u})$. Define functions
    \begin{align*}
        Q_T(\theta_1)&:=-\E[\psi_t(\theta_1,f_{z0},f_{u0})]'W\E[\psi_t(\theta_1,f_{z0},f_{u0})]\\
        \hat{Q}_T(\theta_1)&:=-\E_{T,L}[\psi_t(\theta_1,\hat{f}_z^{l},\hat{f}_u^{l})]'\hat{W}\E_{T,L}[\psi_t(\theta_1,\hat{f}_z^{l},\hat{f}_u^{l})]
    \end{align*}
    Condition \ref{cond:IdentificationMoment} implies function $Q_T(\theta_1)$ is uniquely minimized at $\theta_1=\theta_{10}$. 
    To provide (ii) in Lemma \ref{lemma:Consistency}, note that 
    \begin{align*}
        Q_T(\theta_1)\leq & -\lambda_{min}(W)\cdot\normm{\E[\psi_t(\theta_1,f_{z0},f_{u0})]}_2^2\\
        \inf_{\theta_1:\normm{\theta_1-\theta_{10}}_2\geq \eta}(Q_T(\theta_{10})-Q_T(\theta_1))\geq & \lambda_{min}(W)\cdot\inf_{\theta_1:\normm{\theta_1-\theta_{10}}_2\geq \eta}\normm{\E[\psi_t(\theta_1,f_{z0},f_{u0})]}_2^2\\
        \geq & \lambda_{min}(W) \epsilon^2\\
        \geq & c \epsilon^2>0
    \end{align*}
    where $\lambda_{min}(W)>c>0$ holds by Condition \ref{cond:WeightMatrix} and $\forall$ $\eta>0$, there exists $\epsilon>0$ such that \\
    $\inf_{\normm{\theta_1-\theta_{10}}_2\geq \eta}\normm{\E[\psi_t(\theta_1,f_{z0},f_{u0})]}_2\geq \epsilon>0$ by Condition \ref{cond:IdentificationMoment2}.
    Thus by Lemma \ref{lemma:Consistency}, what remains to be proved is that
    \[
        \sup_{\Theta_\sigma\times\Theta_\alpha}|\hat{Q}_T(\theta_1)-Q_T(\theta_1)|=o_p(1).
    \]
    It can be proved by the following statements.
    \begin{enumerate}
        \item $\normm{\hat{W}-W}_2=o_p(1)$
        \item $\sup_{\Theta_\sigma\times\Theta_\alpha}\normm{\E[\psi_t(\theta_1,f_{z0},f_{u0})]}_2\leq C,\quad\sup_{\Theta_\sigma\times\Theta_\alpha}\normm{\E_{T,L}[\psi_t(\theta_1,\hat{f}_z^{l},\hat{f}_u^{l})]}_2 =O_p(1)$
        \item $\sup_{\Theta_\sigma\times\Theta_\alpha}\normm{\E_{T,L}[\psi_t(\theta_1,\hat{f}_z^{l},\hat{f}_u^{l})]-\E\psi_t(\theta_1,f_{z0},f_{u0})}_2=o_p(1)$
    \end{enumerate}
    \textbf{\textit{Proof of the first statement:}} Condition \ref{cond:WeightMatrix} guarantees the first statement.

    \noindent\textbf{\textit{Proof of the second statement:}} Condition \ref{cond:Regularity1} and \ref{cond:BoundednessNO} imply that 
    \begin{align}
        \normm{\phi_j(w_t,\theta_1,f_{z0},f_{u0})}_2=&\normm{\epsilon_{jt}^z(y_{jt}(\sigma)-\alpha p_{jt}-f_{u0}(x_{jt}))}_2 \nonumber\\
        \leq & \normm{\epsilon_{jt}^z}_{\infty}\cdot(\sup_{\sigma\in\Theta_\sigma}|y_{jt}(\sigma)|+\sup_{\alpha\in\Theta_\alpha}|\alpha|\cdot|p_{jt}|+ \normm{f_{u0}}_\infty )\nonumber\\
        \leq & C \label{eq:BoundedPhi}
    \end{align}
    Thus $\sup_{\Theta_\sigma\times\Theta_\alpha}\normm{\E[\psi_t(\theta_1,f_{z0},f_{u0})]}_2=\sup_{\Theta_\sigma\times\Theta_\alpha}\normm{\E[\E_J\phi_j(w_t,\theta_1,f_{z0},f_{u0})]}_2\leq C$.
    \begin{align*}
        \psi_t(\theta_1,\hat{f}_z^{l},\hat{f}_u^{l})&=\E_J(z_{jt}-\hat{f}_z^l(x_{jt}))\cdot (y_{jt}(\sigma)-\alpha p_{jt} - \hat{f}_u^{l}(x_{jt}))\\
        & = \E_J(\epsilon_{jt}^z-(\underbrace{\hat{f}_z^l(x_{jt})-f_{z0}(x_{jt})}_{\Delta_l f_z(x_{jt})}))\cdot 
        (y_{jt}(\sigma)-\alpha p_{jt} - f_{u0}(x_{jt}) - (\underbrace{\hat{f}_u^{l}(x_{jt})-f_{u0}(x_{jt})}_{\Delta_l f_{u}(x_{jt})}))\\
        & = \E_J(\epsilon_{jt}^z-\Delta_l f_z(x_{jt}))\cdot (y_{jt}(\sigma)-\alpha p_{jt} - f_{u0}(x_{jt}) - \Delta_l f_{u}(x_{jt}))
    \end{align*}
    Thus
    \begin{align*}
        &\normm{\E_{T,L}[\psi_t(\theta_1,\hat{f}_z^{l},\hat{f}_u^l)]}_2\\
        = &  \norm{\frac{1}{T}\sum_{l=1}^L\sum_{t\in I_l}
        \E_J(\epsilon_{jt}^z-\Delta_l f_z(x_{jt}))\cdot (y_{jt}(\sigma)-\alpha p_{jt} - f_{u0}(x_{jt}) - \Delta_l f_{u}(x_{jt}))}_2\\
        \leq &   \E_J\norm{\frac{1}{T}\sum_{l=1}^L\sum_{t\in I_l}
        (\epsilon_{jt}^z-\Delta_l f_z(x_{jt}))\cdot (y_{jt}(\sigma)-\alpha p_{jt} - f_{u0}(x_{jt}) - \Delta_l f_{u}(x_{jt}))}_2\\
        \leq &\E_J\norm{\E_T(\epsilon_{jt}^z)\cdot (y_{jt}(\sigma)-\alpha p_{jt} - f_{u0}(x_{jt}))}_2\\
         &+ \E_J\norm{\frac{1}{T}\sum_{l=1}^L\sum_{t\in I_l}\Delta_l f_z(x_{jt})\cdot (y_{jt}(\sigma)-\alpha p_{jt} - f_{u0}(x_{jt}))}_2\\
         & + \E_J\norm{\frac{1}{T}\sum_{l=1}^L\sum_{t\in I_l}(\epsilon_{jt}^z)\cdot \Delta_l f_{u}(x_{jt})}_2\\
         & + \E_J\norm{\frac{1}{T}\sum_{l=1}^L\sum_{t\in I_l}\Delta_l f_z(x_{jt})\cdot \Delta_l f_{u}(x_{jt})}_2\\
        \leq & C + C\E_{J}\frac{1}{T}\sum_{l=1}^L\sum_{t\in I_l}\norm{\Delta_l f_z(x_{jt})}_2 + C\E_{J}\frac{1}{T}\sum_{l=1}^L\sum_{t\in I_l}|\Delta_l f_{u}(x_{jt})| + 
        \E_J\norm{\frac{1}{T}\sum_{l=1}^L\sum_{t\in I_l}\Delta_l f_z(x_{jt})\cdot \Delta_l f_{u}(x_{jt})}_2\\
    \end{align*}
    $\implies$
    \begin{align*}
        &\sup_{\Theta_\sigma\times\Theta_\alpha}\normm{\E_{T,L}[\psi_t(\theta_1,\hat{f}_z^{l},\hat{f}_u^l)]}_2\\
         \leq & C + C\E_{J}\E_{T,L}\norm{\Delta_l f_z(x_{jt})}_2 + C\E_{J}\E_{T,L}|\Delta_l f_{u}(x_{jt})| + 
        \E_J\norm{\E_{T,L}\Delta_l f_z(x_{jt})\cdot \Delta_l f_{u}(x_{jt})}_2
    \end{align*}
    Then we can apply Lemma \ref{lemma:ConditionalPopulation}. 
    Let $a(w_t,\hat{\eta}_l) = \normm{\Delta_l f_z(x_{jt})}_2\cdot T^{1/4}$ for a fixed $j$ and $b(w_t,\hat{\eta}_l)=1$. 
    Then $\normm{a(w_t,\hat{\eta}_l)}_{L^2} = [\int \normm{\Delta_l f_z(x_{jt})}_2^2\cdot T^{1/2}dP(x_{jt})]^{1/2} = T^{1/4}\normm{\Delta_l f_z(x_{jt})}_{L^2}$.
    We have that $\normm{\Delta_l f_z(x_{jt})}_{L^2}=o_p(T^{-1/4})\implies \normm{a(w_t,\hat{\eta}_l)}_{L^2}=o_p(1)\implies \E_{T,L}\norm{a(w_t,\hat{\eta}_l)}_2=o_p(1)$, i.e., 
    \begin{equation}\label{eq:Deltafz}
        \E_{T,L}\norm{\Delta_l f_z(x_{jt})}_2=o_p(T^{-1/4})=o_p(1).
    \end{equation}
    Similarly let $a(w_t,\hat{\eta}_l) = |\Delta_l f_{u}(x_{jt})|\cdot T^{1/4}$, for a fixed $j$ and $b(w_t,\hat{\eta}_l)=1$, we have $\normm{a(w_t,\hat{\eta}_l)}_{L^2} = [\int |\Delta_l f_{u}(x_{jt})|^2\cdot T^{1/2}dP(x_{jt})]^{1/2} = T^{1/4}\normm{\Delta_l f_{u}(x_{jt})}_{L^2}=o_p(1)$.
    Thus
    \begin{equation}\label{eq:Deltafbeta}
        \E_{T,L}|\Delta_l f_{u}(x_{jt})|=o_p(T^{-1/4}) = o_p(1).
    \end{equation}
    Let $a(w_t,\hat{\eta}_l) = \normm{\Delta_l f_z(x_{jt})}_2\cdot T^{1/4}$ and $b(w_t,\hat{\eta}_l)=|\Delta_l f_{u}(x_{jt})|\cdot T^{1/4}$, we have $\normm{a(w_t,\hat{\eta}_l)}_{L^2}\cdot\normm{b(w_t,\hat{\eta}_l)}_{L^2}=o_p(1)$, which implies $\E_{T,L}\normm{a(w_t,\hat{\eta}_l)}_2\cdot\normm{b(w_t,\hat{\eta}_l)}_2=o_p(1)$, i.e.,
    \begin{equation}\label{eq:Deltafzfbeta}
        \normm{\E_{T,L}\Delta_l f_z(x_{jt})\cdot \Delta_l f_{u}(x_{jt})}_2\leq \E_J\E_{T,L}\normm{\Delta_l f_z(x_{jt})}_2\cdot |\Delta_l f_{u}(x_{jt})|=o_p(T^{-1/2})=o_p(1).
    \end{equation}
    Consequently, we have $\sup_{\Theta_\sigma\times\Theta_\alpha}\normm{\E_{T,L}[\psi_t(\theta_1,\hat{f}_z^{l},\hat{f}_u^l)]}_2=O_p(1)$.

    \noindent\textbf{\textit{Proof of the third statement:}} 
    It is sufficient to prove the following two statements.
    \[
    A=\sup_{\Theta_\sigma\times\Theta_\alpha}\norm{\E_{T,L}[\psi_t(\theta_1,\hat{f}_z^{l},\hat{f}_u^l)]-\E_T\psi_t(\theta_1,f_{z0},f_{u0})}_2=o_p(1).
    \]
    and
    \[
    B=\sup_{\Theta_\sigma\times\Theta_\alpha}\norm{\E_T\psi_t(\theta_1,f_{z0},f_{u0})-\E\psi_t(\theta_1,f_{z0},f_{u0})}_2=o_p(1).
    \]
    Note that similar to the proof of the second statement, we have
    \begin{align*}
        A\leq & \E_J\sup_{\Theta_\sigma\times\Theta_\alpha}\norm{\E_{T,L}\phi_j(w_{t},\theta_1,\hat{f}_z^l,\hat{f}_u^l)-\E_T\phi_j(w_{t},\theta_1,f_{z0},f_{u0})}_2\\
        = & \E_J\sup_{\Theta_\sigma\times\Theta_\alpha}\bigg\|
        \frac{1}{T}\sum_{l=1}^L\sum_{t\in I_l}(\epsilon_{jt}^z-\Delta_lf_z(x_{jt}))\cdot (y_{jt}(\sigma)-\alpha p_{jt} - f_{u0}(x_{jt}) - \Delta_l f_{u}(x_{jt}))\\
         & - \E_T(\epsilon_{jt}^z)\cdot (y_{jt}(\sigma)-\alpha p_{jt} - f_{u0}(x_{jt}))\bigg\|_2\\
        \leq & \E_J\sup_{\Theta_\sigma\times\Theta_\alpha}\bigg\| \frac{1}{T}\sum_{l=1}^L\sum_{t\in I_l}\Delta_l f_z(x_{jt})\cdot (y_{jt}(\sigma)-\alpha p_{jt} - f_{u0}(x_{jt}))\bigg\|_2\\
        & + \E_J\sup_{\Theta_\sigma\times\Theta_\alpha}\bigg\|\frac{1}{T}\sum_{l=1}^L\sum_{t\in I_l}(\epsilon_{jt}^z)\cdot \Delta_l f_{u}(x_{jt})\bigg\|_2\\
         & + \E_J\sup_{\Theta_\sigma\times\Theta_\alpha}\bigg\|\frac{1}{T}\sum_{l=1}^L\sum_{t\in I_l}\Delta_l f_z(x_{jt})\cdot \Delta_l f_{u}(x_{jt})\bigg\|_2\\
        \leq & C\E_J\E_{T,L}\normm{\Delta_l f_z(x_{jt})}_2 + C\E_J\E_{T,L}|\Delta_l f_{u}(x_{jt})| + \E_J\E_{T,L}\normm{\Delta_l f_z(x_{jt})\cdot \Delta_l f_{u}(x_{jt})}_2
    \end{align*}
    Equations \eqref{eq:Deltafz}, \eqref{eq:Deltafbeta} and \eqref{eq:Deltafzfbeta} guarantee that $A=o_p(1)$.
    Consider term $B$. We have
    \[
        B\leq \E_J\sup_{\Theta_\sigma\times\Theta_\alpha}\norm{\E_T\phi_j(w_{t},\theta_1,f_{z0},f_{u0})-\E\phi_j(w_{t},\theta_1,f_{z0},f_{u0})}
    \]
    Equation \ref{eq:BoundedPhi} guarantees that $\normm{\phi_j(w_{t},\theta_1,f_{z0},\beta_0)}_2\leq C =:d(w_t).$
    Lemma \ref{lemma:UniformConvergence} implies that $B=o_p(1)$.
\end{proof}

\begin{proof}[Proof of Theorem \ref{thm:AsympNorm}]
    Take Taylor expansion at the point of $\theta_{10}$
    \begin{align*}
        0=&(\frac{\partial}{\partial \theta_1}\E_{T,L}[\psi_t(\check{\theta}_1,\hat{f}_z^{l},\hat{f}_u^l)]')\hat{W}(\E_{T,L}[\psi_t(\check{\theta}_1,\hat{f}_z^{l},\hat{f}_u^l)])\\
        -(\frac{\partial}{\partial \theta_1}\E_{T,L}[\psi_t(\check{\theta}_1,\hat{f}_z^{l},\hat{f}_u^l)]')\cdot\hat{W}\cdot\E_{T,L}[\psi_t(\theta_{10},\hat{f}_z^{l},\hat{f}_u^l)]=&
        \{(\frac{\partial}{\partial \theta_1}\E_{T,L}[\psi_t(\check{\theta}_1,\hat{f}_z^{l},\hat{f}_u^l)]')\cdot\\
        &\hat{W}\cdot\frac{\partial}{\partial \theta_1'}\E_{T,L}[\psi_t(\bar{\theta}_1,\hat{f}_z^{l},\hat{f}_u^l)]\}(\check{\theta}_1-\theta_{10})\\
        -\underbrace{(\frac{\partial}{\partial \theta_1}\E_{T,L}[\psi_t(\check{\theta}_1,\hat{f}_z^{l},\hat{f}_u^l)]')}_{A}\hat{W}\cdot\underbrace{\sqrt{T}\E_{T,L}[\psi_t(\theta_{10},\hat{f}_z^{l},\hat{f}_u^l)]}_{B}=&
        \{(\frac{\partial}{\partial \theta_1}\E_{T,L}[\psi_t(\check{\theta}_1,\hat{f}_z^{l},\hat{f}_u^l)]')\cdot\\
        &\hat{W}\cdot(\underbrace{\frac{\partial}{\partial \theta_1'}\E_{T,L}[\psi_t(\bar{\theta}_1,\hat{f}_z^{l},\hat{f}_u^l)]}_{C'})\}\sqrt{T}(\check{\theta}_1-\theta_{10})\\
    \end{align*}
    where $\bar{\theta}$ lies on the line between $\theta_{10}$ and $\check{\theta}_1$.
    Theorem \ref{thm:Consistency} guarantees the convergence of $\check{\theta}_1\to\theta_{10}$ in probability, and thus we have $\bar{\theta}_1\overset{p}{\to}\theta_{10}$. 
    We can prove the asymptotic normality of $\check{\theta}_1$ following the steps below.
    
    \textbf{Step 1:} Show that $A, B$ and $C$ have the following limits respectively.
    \begin{enumerate}[label=(\roman*)]
        \item $A-\underbrace{\E_J\E[
            \begin{bmatrix}
                \frac{\partial }{\partial \sigma}y_{jt}(\sigma_0)\\
                p_{jt}
            \end{bmatrix}
            \cdot \epsilon^{z\prime}_{jt}]}_{G}=o_p(1)$
        \item $B-\sqrt{T}\E_{JT}\epsilon^{z}_{jt}\cdot \xi_{jt}=o_p(1)$.
        \item $\Omega^{-1/2}\sqrt{T}\E_{JT}\epsilon^{z}_{jt}\cdot \xi_{jt}\overset{d}{\to}N(0,I_{d_z})$, where $\Omega=Var(\E_J[\epsilon^{z}_{jt}\cdot \epsilon^{z\prime}_{jt}\cdot\xi_{jt}^2])$.
        \item $C-\E_J\E[
            \begin{bmatrix}
                \frac{\partial }{\partial \sigma}y_{jt}(\sigma_0)\\
                p_{jt}
            \end{bmatrix}
            \cdot \epsilon^{z\prime}_{jt}]=o_p(1)$
    \end{enumerate}
    We will prove (i), (ii) and (iii), as (iv) is similar to (i).
    Note here we supress the dependence of $G$ and $W$ on $T$ for notation simplicity.\\
    \noindent\textbf{Proof of (i)}
    \begin{align*}
        A  =& \frac{\partial}{\partial \theta_1}\E_{T,L}[\psi_t(\check{\theta}_1,\hat{f}_z^{l},\hat{f}_u^l)]'\\
        =& \frac{\partial}{\partial \theta_1}\E_{T,L} \E_J\phi_j(w_{t},\check{\theta}_1,\hat{f}_z^{l},\hat{f}_u^l)'\\
        =& \frac{\partial}{\partial \theta_1}\E_{T,L} \E_J\bigg[(z_{jt}-\hat{f}_z^l(x_{jt}))\cdot (y_{jt}(\check{\sigma})-\check{\alpha} p_{jt} - \hat{f}_u^{l}(x_{jt}))\bigg]'\\
        =& \E_J\E_{T,L} 
        \frac{\partial}{\partial \theta_1}\bigg[(z_{jt}-\hat{f}_z^l(x_{jt}))\cdot (y_{jt}(\check{\sigma})-\check{\alpha} p_{jt} - \hat{f}_u^{l}(x_{jt}))\bigg]'\\
        =&\E_J\E_{T,L} \begin{bmatrix}
                \frac{\partial }{\partial \sigma}y_{jt}(\check{\sigma})\\
                p_{jt}
        \end{bmatrix}\cdot(z_{jt}-\hat{f}_z^l(x_{jt}))'\\
         =&\E_J\E_{T,L}\begin{bmatrix}
                \frac{\partial }{\partial \sigma}y_{jt}(\check{\sigma})\\
                p_{jt}
        \end{bmatrix}\cdot \epsilon^{z\prime}_{jt}-\E_J\E_{T,L}\begin{bmatrix}
                \frac{\partial }{\partial \sigma}y_{jt}(\check{\sigma})\\
                p_{jt}
        \end{bmatrix}\cdot(\Delta_l f_z(x_{jt}))'\\
         =& \E_J\E_{T,L}\begin{bmatrix}
                \frac{\partial }{\partial \sigma}y_{jt}(\sigma_0)\\
                p_{jt}
        \end{bmatrix}\epsilon^{z\prime}_{jt}+
            \E_J\E_{T,L}\begin{bmatrix}
                \frac{\partial }{\partial \sigma}y_{jt}(\check{\sigma})-\frac{\partial }{\partial \sigma}y_{jt}(\sigma_0)\\
                0
        \end{bmatrix}\epsilon^{z\prime}_{jt}
          & - \E_J\E_{T,L}\begin{bmatrix}
                \frac{\partial }{\partial \sigma}y_{jt}(\check{\sigma})\\
                p_{jt}
        \end{bmatrix}(\Delta_l f_z(x_{jt}))'\\
        =: & A_1 + A_2 + A_3\\
    \end{align*}
    We will analyse the three terms separately.
    \begin{align*}
        A_1 = & \E_J\E_{T,L}\begin{bmatrix}
                \frac{\partial }{\partial \sigma}y_{jt}(\sigma_0)\\
                p_{jt}
        \end{bmatrix}\epsilon^{z\prime}_{jt}\\
        = & \E_J\E_{T}\begin{bmatrix}
                \frac{\partial }{\partial \sigma}y_{jt}(\sigma_0)\\
                p_{jt}
        \end{bmatrix}\epsilon^{z\prime}_{jt}\\
        = & \E_J \E[
            \begin{bmatrix}
                \frac{\partial }{\partial \sigma}y_{jt}(\sigma_0)\\
                p_{jt}
            \end{bmatrix}
            \cdot \epsilon^{z\prime}_{jt}]+o_p(1)\\
        =& G +o_p(1)
    \end{align*}
    \begin{align*}
        A_2 =& \E_J\E_{T,L}\begin{bmatrix}
                \frac{\partial }{\partial \sigma}y_{jt}(\check{\sigma})-\frac{\partial }{\partial \sigma}y_{jt}(\sigma_0)\\
                0
        \end{bmatrix}\epsilon^{z\prime}_{jt}\\
        =& (\check{\sigma}-\sigma_0)\E_J\E_T\begin{bmatrix}
                \frac{\partial^2 }{\partial \sigma^2}y_{jt}(\bar{\sigma})\\
                0
        \end{bmatrix}\cdot\epsilon^{z\prime}_{jt} \\
        & \implies\\
        \normm{A_2}_2 \leq & C |\check{\sigma}-\sigma_0|\E_{JT}\normm{\epsilon^z_{jt}}_2 = o_p(1)\\
    \end{align*}
    The inequality for $A_2$ is because $y_{jt}(\sigma)$ is a smooth function on a compact set by condition \ref{cond:Regularity1} and consequently it has bounded second order derivatives.
    \begin{align*}
        A_3 = & \E_J\E_{T,L}\begin{bmatrix}
                \frac{\partial }{\partial \sigma}y_{jt}(\check{\sigma})\\
                p_{jt}
        \end{bmatrix}(\Delta_l f_z(x_{jt}))'\\
        \leq & C \E_J\E_{T,L} \normm{\Delta_l f_z(x_{jt})}_2 \\
        = & o_p(1) \text{    (by equation \ref{eq:Deltafz})}\\
    \end{align*}
    The inequality for $A_3$ is due to the uniform boundedness of $\frac{\partial}{\partial \sigma}y_{jt}(\sigma)$ and $p_{jt}$.

    \noindent\textbf{Proof of (ii)}:
    \begin{align*}
        B-\sqrt{T}\E_{JT}\epsilon^{z}_{jt}\cdot \xi_{jt}=&\sqrt{T}\E_{T,L}[\psi_t(\theta_{10},\hat{f}_z^{l},\hat{f}_u^{l})]-\sqrt{T}\E_{JT}\epsilon^{z}_{jt}\cdot \xi_{jt}\\
        =&\sqrt{T}\E_{T,L}\{\E_J[(z_{jt}-\hat{f}^l_z(x_{jt}))\cdot(y_{jt}(\sigma_0)-\alpha_0p_{jt}-\hat{f}_u^{l}(x_{jt}))\}\\
        &-\sqrt{T}\E_{T}\{\E_J[\epsilon_{jt}^z\cdot\xi_{jt}]\}\\
        =&\sqrt{T}\E_{T,L}\{\E_J[(\epsilon^z_{jt}-\Delta_l f_{z}(x_{jt}))\cdot(\xi_{jt}-\Delta_l f_{u}(x_{jt}))]\}\\
        &-\sqrt{T}\E_{T}\{\E_J[\epsilon_{jt}^z\cdot\xi_{jt}]\}\\
        =&\sqrt{T}\E_{T,L}\E_J[-\Delta_l f_{z}(x_{jt})\cdot\xi_{jt}-\epsilon_{jt}^z\cdot \Delta_lf_{u}(x_{jt})+\Delta_l f_{z}(x_{jt})\cdot\Delta_l f_{u}(x_{jt})]
    \end{align*}
    Equation (\ref{eq:Deltafzfbeta}) implies that $\sqrt{T}\E_{T,L}\E_J\Delta_l f_{z}(x_{jt})\cdot\Delta_l f_{u}(x_{jt})=o_p(1)$.\\
    Then we prove $\sqrt{T}\E_{T,L}\E_J[\Delta_l f_{z}(x_{jt})\cdot\xi_{jt}] = o_p(1)$.
    Let $a(w_t,\hat{\eta}_l) = \Delta_l\hat{f}_{z}(x_{jt})\cdot\xi_{jt}\cdot\sqrt{T}$, then for any $l=1,\dots,L$ and fixed $j$, we have,
    \begin{align*}
        &\normm{\frac{1}{T_l}\sum_{t\in I_l}a(w_t,\hat{\eta}_l)}_{L^2}\\
        =&\sqrt{T}\normm{\frac{1}{T_l}\sum_{t\in I_l}\Delta_l f_{z}(x_{jt})\cdot\xi_{jt}}_{L^2}\\
        =&\sqrt{T}\left\{\int \normm{\frac{1}{T_l}\sum_{t\in I_l}\Delta_l f_{z}(x_{jt})\cdot\xi_{jt}}_2^2 \cdot P(\prod_{t\in I_l}dx_{jt}d\xi_{jt})\right\}^{1/2}\\
        =&\sqrt{T}\left\{\int\frac{1}{T_l}\sum_{t\in I_l}(\Delta_l f_{z}(x_{jt}))'\cdot\xi_{jt}\cdot\frac{1}{T_l}\sum_{t\in I_l}\Delta_l f_{z}(x_{jt})\cdot\xi_{jt}P(\prod_{t\in I_l}dx_{jt}d\xi_{jt})\right\}^{1/2}\\
        =&\sqrt{T}\left\{\int\frac{1}{T_l^2}\sum_{t_1=t_2\in I_l}(\Delta_l f_{z}(x_{jt_1}))'\cdot\xi_{jt_1}\cdot\Delta_l f_{z}(x_{jt_2})\cdot\xi_{jt_2} P(\prod_{t\in I_l}dx_{jt}d\xi_{jt})\right\}^{1/2}\\
        & \text{(the cross terms with $t_1\neq t_2$ vanish due to the independence between $w_{t_1}$ and $w_{t_2}$)}\\
        =&\sqrt{T}\left\{\frac{1}{T_l^2}\sum_{t\in I_l} \int (\Delta_l f_{z}(x_{jt}))'\cdot\xi_{jt}\cdot\Delta_l f_{z}(x_{jt})\cdot\xi_{jt}P(dx_{jt}d\xi_{jt})\right\}^{1/2}\\
        =&\sqrt{T}\left\{\frac{1}{T_l^2} \cdot T_l\int \normm{\Delta_l f_{z}(x_{jt})}_2^2\cdot\xi_{jt}^2 P(dx_{jt}d\xi_{jt})\right\}^{1/2}\\
        \leq & C\sqrt{T}\left\{\frac{1}{T_l} \int \normm{\Delta_l f_{z}(x_{jt})}_2^2 P(dx_{jt})\right\}^{1/2}\\
        & \text{(the inequality holds due to the boundedness of $\E[\xi_{jt}^2|x_{jt}]$)}\\
         =& C\sqrt{T/T_l}\normm{\Delta_l f_{z}(x_{jt})}_{L^2}\\
         \lesssim & \normm{\Delta_l f_{z}(x_{jt})}_{L^2}\\
         = & o_p(1)
    \end{align*}
    where $(\Delta_l f_{z}(x_{jt}))'$ is the transpose of $\Delta_l f_{z}(x_{jt})$.
    By Lemma \ref{lemma:ConditionalPopulation}, we have 
    \begin{equation}\label{eq:Deltafbetai}
            \sqrt{T}\E_J\E_T\Delta_l f_{z}(x_{jt})\cdot\xi_{jt}= o_p(1).
    \end{equation}
    Similarly, we can show that 
    \begin{equation}\label{eq:Deltafbetaxi}
        \sqrt{T}\E_T\E_J\epsilon_{jt}^z\cdot \Delta_l f_{u}(x_{jt})=o_p(1)
    \end{equation}
    Thus equations (\ref{eq:Deltafzfbeta}\ref{eq:Deltafbetai}) and (\ref{eq:Deltafbetaxi}) guarantee that $B=\sqrt{T}\E_{JT}\epsilon^{z}_{jt}\cdot \xi_{jt}+o_p(1)$.
    
    \noindent\textbf{Proof of (iii)}:
    By Multivariate Lindeberg-Feller CLT for triangular arrays (theorem 1.5 in \cite{berckmoes2016stein}), we only need to verify the following statement
    \begin{align*}
        \forall\,\epsilon>0, 
        \sum_{t=1}^T\E[\frac{1}{\sqrt{T}}\normm{\Omega^{-1/2}\E_J\epsilon^{z}_{jt}\cdot \xi_{jt}}_2^2\cdot 1\{\normm{\frac{1}{\sqrt{T}}\Omega^{-1/2}\E_J\epsilon^{z}_{jt}\cdot \xi_{jt}}_2^2>\epsilon\}]\to 0
    \end{align*}
    This holds since 
    \begin{align*}
        &\sum_{t=1}^T\E[\frac{1}{\sqrt{T}}\normm{\Omega^{-1/2}\E_J\epsilon^{z}_{jt}\cdot \xi_{jt}}_2^2\cdot 1\{\normm{\frac{1}{\sqrt{T}}\Omega^{-1/2}\E_J\epsilon^{z}_{jt}\cdot \xi_{jt}}_2^2>\epsilon\}]\\
        =&\frac{1}{T}\sum_{t=1}^T\E[\normm{\Omega^{-1/2}\E_J\epsilon^{z}_{jt}\cdot \xi_{jt}}_2^2\cdot 1\{\normm{\Omega^{-1/2}\E_J\epsilon^{z}_{jt}\cdot \xi_{jt}}_2^2>\epsilon T\}]\\
        =& \E[\normm{\Omega^{-1/2}\E_J\epsilon^{z}_{jt}\cdot \xi_{jt}}_2^2\cdot 1\{\normm{\Omega^{-1/2}\E_J\epsilon^{z}_{jt}\cdot \xi_{jt}}_2^2>\epsilon T\}]\\
        \leq & c^{-1}\E[\normm{\E_J\epsilon^{z}_{jt}\cdot \xi_{jt}}_2^2\cdot 1\{\normm{\E_J\epsilon^{z}_{jt}\cdot \xi_{jt}}_2^2>c\epsilon T\}]\\
        \to  & 0 \text{  (convergence holds due to the boundedness of $\normm{\E_J\epsilon^{z}_{jt}\cdot \xi_{jt}}_2^2$)}
    \end{align*}

    \textbf{Step 2:} Note we have the following statements
    \begin{enumerate}[label=(\alph*)]
        \item boundedness of $\normm{\Omega}_2$, $\normm{W}_2$, $\normm{\hat{W}}_2$, $\normm{G}_2$, $\normm{A}_2$, $\normm{B}_2$ and $\normm{C}_2$.
        \item $A\hat{W}C' - GWG' = o_p(1)$ and $(A\hat{W}C')^{-1} - (GWG')^{-1} = o_p(1)$.
    \end{enumerate} 
    The boundedness of $\normm{\Omega}_2$ is guranteed by boundedness of $\normm{\epsilon_{jt}^z}_\infty$ and $\normm{\xi_{jt}}_\infty$. 
    Boundedness of $\normm{W}_2$ is assumed in Condition \ref{cond:WeightMatrix}. 
    The remaining matrices are bounded because of the following reasons.
    \begin{align*}
        \normm{\hat{W}}_2\leq& \normm{\hat{W}-W}_2 + \normm{W}_2 \\
        = & o_p(1) +O(1) = O_p(1)\\
        \normm{G}_2=&\norm{\E_J\E[
            \begin{bmatrix}
                \frac{\partial }{\partial \sigma}y_{jt}(\sigma_0)\\
                p_{jt}
            \end{bmatrix}
            \cdot \epsilon^{z\prime}_{jt}]}_2\\
            \leq& \E_J\norm{\E[
            \begin{bmatrix}
                \frac{\partial }{\partial \sigma}y_{jt}(\sigma_0)\\
                p_{jt}
            \end{bmatrix}
            \cdot \epsilon^{z\prime}_{jt}]}_2\\
            \leq & C\E_J(\norm{\frac{\partial }{\partial \sigma}y_{jt}(\sigma_0)}_\infty + \normm{p_{jt}}_\infty+\normm{\epsilon^z_{jt}}_\infty)\\
            \leq & C\\
            \normm{A}_2\leq& \normm{A-G}_2 + \normm{G}_2 = O_p(1)\\
            \normm{B}_2\leq& \normm{B-\sqrt{T}\E_{JT}\epsilon^{z}_{jt}\cdot \xi_{jt}}_2 + \normm{\sqrt{T}\E_{JT}\epsilon^{z}_{jt}\cdot \xi_{jt}}_2\\
            =& o_p(1) + \normm{\Omega^{1/2}}_2\cdot\normm{\Omega^{-1/2}\sqrt{T}\E_{JT}\epsilon^{z}_{jt}\cdot \xi_{jt}}_2\\
            =& o_p(1) + O(1)\cdot O_p(1) = O_p(1)\\
            \normm{C}_2\leq& \normm{C-G}_2 + \normm{G}_2 = O_p(1)\\
    \end{align*}
    Then the multiplication of the matrices $A\hat{W}C'$ is close to $GWG'$,
    \begin{align*}
        &A\hat{W}C' - GWG' \\
        =& (A-G)\hat{W}C' + G\hat{W}(C'-G') + G\hat{W}G' - GWG'\\
        =& (A-G)\hat{W}C' + G\hat{W}(C'-G')  + G(\hat{W}-W)G'\\
        \leq& \normm{A-G}_2\normm{\hat{W}}_2\normm{C}_2 + \normm{C-G}_2\normm{\hat{W}}_2\normm{G}_2  + \normm{G}_2^2\normm{\hat{W}-W}_2\\
        =& o_p(1)
    \end{align*}
    Thus the inverse of $A\hat{W}C'$ is close to the inverse of $GWG'$,
    \begin{align*}
        &(A\hat{W}C')^{-1} - (GWG')^{-1}\\
        =& (A\hat{W}C')^{-1}(GWG' - A\hat{W}C')(GWG')^{-1}\\
        \leq& \normm{(A\hat{W}C')^{-1}}_2\normm{GWG' - A\hat{W}C'}_2\normm{(GWG')^{-1}}_2\\
        \leq& s_{1}((A\hat{W}C')^{-1})\sqrt{2}\cdot o_p(1)\cdot \lambda_{max}((GWG')^{-1})\sqrt{2} \quad \text{note } GWG' \text{ is positive definite}\\
        =& (s_{2}(A\hat{W}C'))^{-1}\sqrt{2}\cdot o_p(1)\cdot (\lambda_{min}(GWG'))^{-1}\sqrt{2} \quad \text{note }A\hat{W}C'\in\R^{2\times 2}\\
        \leq& 2c^{-1}\left(s_{2}(GWG')-\normm{A\hat{W}C'-GWG'}_{op}\right)^{-1}\cdot o_p(1)\\
        \leq& 2c^{-1}\left(s_{2}(GWG')-\normm{A\hat{W}C'-GWG'}_{2}\right)^{-1}\cdot o_p(1)\\
        \leq & 2c^{-1}\left(c-o_p(1)\right)^{-1}\cdot o_p(1)\\
        =& o_p(1)
    \end{align*}
    Here we utilized the property that $\normm{M}_2\leq\normm{M}_{op}\sqrt{rank(M)}=s_1(M)\sqrt{rank(M)}$ for any matrix $M$, where $s_1(M)\geq s_2(M)\cdots\geq s_{\min(m,n)}(M)\geq 0$ are all singular value of $M\in\R^{m\times n}$, and Mirsky's singular value inequality (Weyl's inequality for singular values) $|s_{2}(A)-s_{2}(B)|\leq \normm{A-B}_{op}$ for any two symmetric matrices $A$ and $B$.

    \noindent\textbf{Step 3:} Then we can prove the asymptotic normality of $\check{\theta}_1$.
    \begin{align*}
        A\hat{W}B = & A\hat{W}C'\sqrt{T}(\check{\theta}_1-\theta_{10})\\
        (A\hat{W}C')^{-1}A\hat{W}B =&\sqrt{T}(\check{\theta}_1-\theta_{10})\\
        (GWG')^{-1}A\hat{W}B +o_p(1) =&\sqrt{T}(\check{\theta}_1-\theta_{10})\quad \text{as }B=O_p(1)\\
        (GWG')^{-1}GWB +o_p(1) =&\sqrt{T}(\check{\theta}_1-\theta_{10})\\
        (GWG')^{-1}GW(\sqrt{T}\E_{JT}\epsilon^{z}_{jt}\cdot \xi_{jt}) +o_p(1) = &\sqrt{T}(\check{\theta}_1-\theta_{10})\\
        (GWG')^{-1}GW\Omega^{1/2}(\Omega^{-1/2}\sqrt{T}\E_{JT}\epsilon^{z}_{jt}\cdot \xi_{jt}) +o_p(1) = &\sqrt{T}(\check{\theta}_1-\theta_{10})
    \end{align*}
    Define a symmetric matrix $P\in\R^{2\times 2}$ as the symmetric matrix s.t $P\cdot P = (GWG')^{-1}GW\Omega WG'(GWG')^{-1}$, then we have property that 
    \begin{align*}
        \normm{P^{-1}}_2 \leq & \sqrt{2}\lambda_{max}(P^{-1})\\
        = & \sqrt{2} (\lambda_{min}(P))^{-1}\\
         = & \sqrt{2}(\lambda_{min}(PP))^{-1/2} \\
         = & \sqrt{2}[\lambda_{min}((GWG')^{-1}GW\Omega WG'(GWG')^{-1})]^{-1/2}\\
         \lesssim & [\lambda_{min}((GWG')^{-1}G WG'(GWG')^{-1})]^{-1/2} \quad \text{(the inequality holds since $\lambda_{min}(\Omega),\lambda_{min}(W)> c $)}\\
         = & [\lambda_{min}((GWG')^{-1})]^{-1/2}\\
         = & [\lambda_{max}(GWG')]^{1/2} \\
         \leq & (\normm{G}_2^2\normm{W}_2)^{1/2} \\
            = & O(1)
    \end{align*}
    Consequently, we have
    \begin{align*}
        P^{-1}\sqrt{T}(\check{\theta}_1-\theta_{10}) =& \underbrace{P^{-1}(GWG')^{-1}GW\Omega^{1/2}}_{\tilde{P}\in\R^{2\times d_{z}}}(\Omega^{-1/2}\sqrt{T}\E_{JT}\epsilon^{z}_{jt}\cdot \xi_{jt}) +o_p(1)
    \end{align*}
    where $\tilde{P}\cdot\tilde{P}' = P^{-1}\cdot P \cdot P\cdot P^{-1}=I_{2}$ since $PP = (GWG')^{-1}GW\Omega WG'(GWG')^{-1}$.
    Then apply multivariate Lindeberg-Feller CLT, we need to prove that for any $\epsilon>0$,
    \begin{align*}
        &\sum_{i=1}^T\E[\normm{\frac{1}{\sqrt{T}}\tilde{P}(\Omega^{-1/2}\E_J\epsilon^{z}_{jt}\cdot \xi_{jt})}_2^2\cdot 1\{\normm{\frac{1}{\sqrt{T}}\tilde{P}(\Omega^{-1/2}\E_J\epsilon^{z}_{jt}\cdot \xi_{jt})}_2^2>\epsilon \}] \to 0\\
        &\Longleftrightarrow \\
        &\E[\normm{\tilde{P}(\Omega^{-1/2}\E_J\epsilon^{z}_{jt}\cdot \xi_{jt})}_2^2\cdot 1\{\normm{\tilde{P}(\Omega^{-1/2}\E_J\epsilon^{z}_{jt}\cdot \xi_{jt})}_2^2>\epsilon T \}] \to 0
    \end{align*}
    Note that $\normm{\tilde{P}}_2=\sqrt{tr(\tilde{P}\tilde{P}')}=\sqrt{tr(I_2)}=\sqrt{2}$, we have
    \begin{align*}
        \normm{\tilde{P}(\Omega^{-1/2}\E_J\epsilon^{z}_{jt}\cdot \xi_{jt})}_2 \leq &\normm{\tilde{P}}_2\normm{\Omega^{-1/2}}_2\normm{\E_J\epsilon^{z}_{jt}\cdot \xi_{jt}}_2\\
        \leq & \sqrt{2}\cdot \sqrt{2}\normm{\Omega^{-1/2}}_{op}\cdot\normm{\E_J\epsilon^{z}_{jt}\cdot \xi_{jt}}_2\\
        = & 2 [\lambda_{min}(\Omega)]^{-1/2}\cdot\normm{\E_J\epsilon^{z}_{jt}\cdot \xi_{jt}}_2\\
        \leq & C
    \end{align*}
    Thus the Lindeberg-Feller condition is satisfied and we have that
    \begin{align*}
        P^{-1}\sqrt{T}(\check{\theta}_1-\theta_{10}) \overset{d}{\to} & N(0, I_2)
    \end{align*}
\end{proof}

\begin{proof}[Proof of Theorem \ref{thm:Nuisance}]
    It suffices to establish the results under the assumption that the restricted eigenvalue condition holds for the full sample. 
    Since the sample-splitting procedure only reduces the sample size by a constant factor, the convergence rates remain unchanged.
    We complete the proof in 3 steps. 
    \begin{enumerate}[label=Step \arabic*:]
        \item Prove the statement that 
        \[
            \normm{\hat{f}_z(x_{jt})-f_{z0}(x_{jt})}_{L^2}=o_p(T^{-1/4})
        \] and 
        \[
            \sqrt{\E_{JT}[\hat{f}_p(x_{jt},z_{jt})-f_p(x_{jt},z_{jt})]^2}\leq C \sqrt{d_0\log(d_x\vee T)/T} \quad w.p.a.1.
        \]
        \item Prove the convergence rate of $\tilde{\sigma}$
        \[
            |\tilde{\sigma}-\sigma_0|\leq C\sqrt{\log(d_{\tilde{z}}\vee T)/T} \quad w.p.a.1
        \]
        \item Prove the statement that
        \[
            \normm{\hat{f}_u(x_{jt})-f_{u0}(x_{jt})}_{L^2} = o_p(T^{-1/4})
        \]
    \end{enumerate}
    \textbf{Step 1:}
    We will first prove the result for $\hat{f}_z$.
    Define $r_i^z$, $\epsilon_i^z$ as the vectors of all $r^z_{jt,i}$ and $\epsilon^z_{jt,i}$, respectively, where $r_{jt,i}$ and $\epsilon_{jt,i}$ are the $i$-th elements of $r^z_{jt}$ and $\epsilon^z_{jt}$.
    Let $X$ be the matrix of all $x_{jt}'$.
    Lemma \ref{lemma:lasso} implies that under the event that $\{\lambda_{\Pi} \geq 2\frac{\nu_2+1}{\nu_2-1}\norm{\frac{X'(r_i^z+\epsilon_i^p)}{JT}}_\infty\}$
    \[
        \normm{\hat{\Pi}_i-\Pi_{0,i}}_2\leq \frac{1}{\kappa_2}\frac{2\nu_2}{\nu_2+1}\sqrt{d_0}\lambda_{\Pi}
    \]
    Note that 
    \begin{align*}
        \norm{\frac{X'(r_i^z+\epsilon_i^z)}{JT}}_\infty &\leq \norm{\frac{X'r_i^z}{JT}}_\infty + \norm{\frac{X'\epsilon_i^p}{JT}}_\infty\\
        &\lesssim \frac{1}{JT}\norm{X}_\infty\cdot\norm{r_i^z}_1 + \sqrt{\frac{\log(d_x\vee T)}{T}} \quad w.p.a.1. \text{ (by Lemma \ref{lemma:Hoeffding})}\\
        &\lesssim \sqrt{\E_{JT}(r^z_{jt,i})^2} + \sqrt{\frac{\log(d_x\vee T)}{T}}\\
        &\lesssim \sqrt{\frac{d_0\log(d_x\vee T)}{T}}\quad w.p.a.1
    \end{align*}
    The last inequality holds, because $\E (r^z_{jt,i})^2 = O(d_0/T)\implies $ $\sqrt{\E_{JT}(r^z_{jt,i})^2} = O_p(\sqrt{d_0/T})$ $\implies$ 
    $\sqrt{\E_{JT}(r^z_{jt,i})^2} \lesssim \sqrt{\frac{d_0\log(d_x\vee T)}{T}} \quad w.p.a.1$.
    \footnote{
        It can be proved that if $X_n=O_p(a_n)$, then for any sequence $g_n\to\infty$ we have $X_n\leq g_na_n\quad w.p.a.1$. 
    }
    Thus by setting $\lambda_{\Pi}=C_{\Pi}\sqrt{\frac{d_0\log(d_x\vee T)}{T}}$ with a large enough $C_{\Pi}$, 
    we have $\normm{\hat{\Pi}_i-\Pi_{0,i}}_2\leq C\sqrt{\frac{d_0^2\log(d_x\vee T)}{T}} \quad w.p.a.1$ and thus 
    \begin{align*}
        \normm{\hat{f}_z(x_{jt})-f_{z0}(x_{jt})}_{L^2}& \leq \normm{\hat{f}_z(x_{jt})-\Pi_0 x_{jt} }_{L^2} + \sqrt{\E \normm{r_{jt}^z}_2^2}\\
        & = \sqrt{\int \sum_{i=1}^{d_z}((\hat{\Pi}_i-\Pi_{i})x_{jt})^2dP(x_{jt})} + \sqrt{\E \normm{r_{jt}^z}_2^2}\\
        & \leq \sum_{i=1}^{d_z}\sqrt{\int ((\hat{\Pi}_i-\Pi_{i})x_{jt})^2dP(x_{jt})} + \sqrt{\E \normm{r_{jt}^z}_2^2}\\
        &\lesssim \sqrt{\lambda_{\max}(\E x_{jt}x_{jt}')}\sum_{i=1}^{d_z} \normm{\hat{\Pi}_i-\Pi_{0,i}}_2 + \sqrt{d_0/T}\\
        &\lesssim \sum_{i=1}^{d_z} \normm{\hat{\Pi}_i-\Pi_{0,i}}_2 + \sqrt{d_0/T}\\
        &\lesssim \sqrt{\frac{d_0^2\log(d_x\vee T)}{T}} \quad w.p.a.1\\
        \implies \normm{\hat{f}_z(x_{jt})-f_{z0}(x_{jt})}_{L^2}& =  o_p(T^{-1/4})
    \end{align*}

    Next we prove the result for $\hat{f}_p$.
    This argument is borrowed from the proof of Theorem 1 in \cite{belloni2011high}.
    Let $P$ be the vector of all $p_{jt}$, $\tilde{X}_p$ be the matrix of all $(z_{jt}',x_{jt}')$, $\hat{\beta}_p=(\hat{\beta}_{pz}',\hat{\beta}_{px}')'$, $\beta_{p0}=(\beta_{pz0}',\beta_{px0}')'$,
    $\hat{\Delta}_p = \hat{\beta}_p-\beta_{p0}$,
    $r^p$ be the vector of all $r^p_{jt}$ and $\epsilon^p$ be the vector of all $\epsilon^p_{jt}$.
    Under the event $\{\lambda_{\beta_{pz},\beta_{px}} \geq 2\frac{\nu_2+1}{\nu_2-1}\norm{\frac{\tilde{X}_p'\epsilon^p}{JT}}_\infty\}$, then
    \begin{align*}
        \frac{1}{JT}\normm{P-\tilde{X}_p\hat{\beta}_p}_2^2-\frac{1}{JT}\normm{P-\tilde{X}_p\beta_{p0}}_2^2
        \leq& \lambda_{\beta_{pz},\beta_{px}}(\normm{\beta_{p0}}_1-\normm{\hat{\beta}_p}_1)\\
        \frac{1}{JT}\normm{\epsilon^p+r^p-\tilde{X}_p(\hat{\beta}_p-\beta_{p0})}_2^2 - \frac{1}{JT}\normm{\epsilon^p+r^p}_2^2\leq& \lambda_{\beta_{pz},\beta_{px}}(\normm{\beta_{p0}}_1-\normm{\hat{\beta}_p}_1)\\
        & (\text{plug in } P=\tilde{X}_p\beta_{p0}+\epsilon^p+r^p)\\
        \frac{1}{JT}\normm{\tilde{X}_p\hat{\Delta}_p}_2^2\leq &\frac{2}{JT}(\epsilon^p+r^p)'\tilde{X}_p\hat{\Delta}_p+\lambda_{\beta_{pz},\beta_{px}}(\normm{\beta_{p0}}_1-\normm{\hat{\beta}_p}_1)\\
        \frac{1}{JT}\normm{\tilde{X}_p\hat{\Delta}_p}_2^2\leq &2\normm{\frac{1}{JT}\epsilon^{p\prime}\tilde{X}_p}_\infty\normm{\hat{\Delta}_p}_1 + 2\frac{1}{JT}\normm{r^p}_2\cdot\normm{\tilde{X}_p\hat{\Delta}_p}_2\\
        & +\lambda_{\beta_{pz},\beta_{px}}(\normm{\beta_{p0}}_1-\normm{\hat{\beta}_p}_1)\\
        \frac{1}{JT}\normm{\tilde{X}_p\hat{\Delta}_p}_2^2\leq &\frac{\nu_2-1}{\nu_2+1}\lambda_{\beta_{pz},\beta_{px}}\normm{\hat{\Delta}_p}_1
        + 2\underbrace{\frac{1}{\sqrt{JT}}\normm{r^p}_2}_{\hat{c}_p}\cdot \frac{1}{\sqrt{JT}}\normm{\tilde{X}_p\hat{\Delta}_p}_2\\
        &  + \lambda_{\beta_{pz},\beta_{px}}(\normm{\beta_{p0}}_1-\normm{\beta_{p0}+\hat{\Delta}_p}_{1})\\
        \frac{1}{JT}\normm{\tilde{X}_p\hat{\Delta}_p}_2^2\leq& \lambda_{\beta_{pz},\beta_{px}}(\frac{2\nu_2}{\nu_2+1}\normm{\hat{\Delta}_{p,S_2}}_1-\frac{2}{\nu_2+1}\normm{\hat{\Delta}_{p,S_2^c}}_1)\\
        & + 2\hat{c}_p\cdot\frac{1}{\sqrt{JT}}\normm{\tilde{X}_p\hat{\Delta}_p}_2
    \end{align*}
    Then if $\frac{1}{\sqrt{JT}}\normm{\tilde{X}_p\hat{\Delta}_p}_2\geq 2\hat{c}_p$, we have 
    \begin{align*}
        \frac{1}{JT}\normm{\tilde{X}_p\hat{\Delta}_p}_2^2\leq& \lambda_{\beta_{pz},\beta_{px}}(\frac{2\nu_2}{\nu_2+1}\normm{\hat{\Delta}_{p,S_2}}_1-\frac{2}{\nu_2+1}\normm{\hat{\Delta}_{p,S_2^c}}_1)\\
        & + \frac{1}{\sqrt{JT}}\normm{\tilde{X}_p\hat{\Delta}_p}_2\cdot\frac{1}{\sqrt{JT}}\normm{\tilde{X}_p\hat{\Delta}_p}_2\\
        0\leq& \lambda_{\beta_{pz},\beta_{px}}(\frac{2\nu_2}{\nu_2+1}\normm{\hat{\Delta}_{p,S_2}}_1-\frac{2}{\nu_2+1}\normm{\hat{\Delta}_{p,S_2^c}}_1)\\
        &\implies\\
        \normm{\hat{\Delta}_{p,S_2^c}}_1\leq& \nu_2\normm{\hat{\Delta}_{p,S_2}}_1
    \end{align*}
    Thus $\hat{\Delta}_p\in \mathcal{C}_{\nu_2}(S_2)$, where $S_2$ is the support of $\beta_{p0}$.
    Consequently, 
    \begin{align*}
        \frac{1}{JT}\normm{\tilde{X}_p\hat{\Delta}_p}_2^2\leq& \lambda_{\beta_{pz},\beta_{px}}(\frac{2\nu_2}{\nu_2+1}\normm{\hat{\Delta}_{p,S_2}}_1-\frac{2}{\nu_2+1}\normm{\hat{\Delta}_{p,S_2^c}}_1) + 2\hat{c}_p\cdot\frac{1}{\sqrt{JT}}\normm{\tilde{X}_p\hat{\Delta}_p}_2\\
        \frac{1}{JT}\normm{\tilde{X}_p\hat{\Delta}_p}_2^2\leq& \lambda_{\beta_{pz},\beta_{px}}\frac{2\nu_2}{\nu_2+1}\sqrt{d_0}\normm{\hat{\Delta}_p}_2
         + 2\hat{c}_p\cdot\frac{1}{\sqrt{JT}}\normm{\tilde{X}_p\hat{\Delta}_p}_2\\
        \leq & \lambda_{\beta_{pz},\beta_{px}}\frac{2\nu_2}{\nu_2+1}\sqrt{d_0}\frac{1}{\sqrt{\kappa_2}}\frac{1}{\sqrt{JT}}\normm{\tilde{X}_p\hat{\Delta}_p}_2
         + 2\hat{c}_p\cdot\frac{1}{\sqrt{JT}}\normm{\tilde{X}_p\hat{\Delta}_p}_2\\
        & \implies\\
        \frac{1}{\sqrt{JT}}\normm{\tilde{X}_p\hat{\Delta}_p}_2\leq & \frac{2\nu_2}{\nu_2+1}\frac{1}{\sqrt{\kappa_2}}\sqrt{d_0}\lambda_{\beta_{pz},\beta_{px}} + 2\hat{c}_p\\
    \end{align*}
    If $\frac{1}{\sqrt{JT}}\normm{\tilde{X}_p\hat{\Delta}_p}_2 < 2\hat{c}_p$, then trivially
    \[
        \frac{1}{\sqrt{JT}}\normm{\tilde{X}_p\hat{\Delta}_p}_2\leq \frac{2\nu_2}{\nu_2+1}\frac{1}{\sqrt{\kappa_2}}\sqrt{d_0}\lambda_{\beta_{pz},\beta_{px}} + 2\hat{c}_p
    \]
    To sum up, we have
    \[
        \frac{1}{\sqrt{JT}}\normm{\tilde{X}_p\hat{\Delta}_p}_2\lesssim \sqrt{d_0}\lambda_{\beta_{pz},\beta_{px}} + \hat{c}_p
    \]
    Note that 
    \begin{align*}
        \hat{c}_p = & \frac{1}{\sqrt{JT}}\normm{r^p}_2\\
        = & O_p(\sqrt{d_0/T})\\
         \lesssim & \sqrt{\frac{d_0\log(d_x\vee T)}{T}} \quad w.p.a.1
    \end{align*}
    and by Lemma \ref{lemma:Hoeffding} we have $\norm{\frac{\tilde{X}_p'\epsilon^p}{JT}}_\infty \lesssim \sqrt{\frac{\log(d_x\vee T)}{T}} \quad w.p.a.1$.
    Consequently, by setting $\lambda_{\beta_{pz},\beta_{px}}=C_{\beta_{pz},\beta_{px}}\sqrt{\frac{\log(d_x\vee T)}{T}}$ with a large enough $C_{\beta_{pz},\beta_{px}}$, we have 
    $\{\lambda_{\beta_{pz},\beta_{px}} \geq 2\frac{\nu_2+1}{\nu_2-1}\norm{\frac{\tilde{X}_p'\epsilon^p}{JT}}_\infty\}$ holds with probability approaching one.
    This gives 
    \begin{align*}
        &[\E_{JT}(\hat{f}_p(x_{jt},z_{jt})-f_p(x_{jt},z_{jt}))^2]^{1/2}\\
        = & [\E_{JT}(\hat{f}_p(x_{jt},z_{jt})-(z_{jt}',x_{jt}')\beta_{p0}-r_{jt})^2]^{1/2}\\
        \leq & [\E_{JT}(\hat{f}_p(x_{jt},z_{jt})-(z_{jt}',x_{jt}')\beta_{p0})^2]^{1/2} + [\E_{JT}(r_{jt})^2]^{1/2}\\
        = & \frac{1}{\sqrt{JT}}\normm{\tilde{X}_p\hat{\Delta}_p}_2 + \hat{c}_p\\
        \lesssim & \sqrt{\frac{d_0\log(d_x\vee T)}{T}} \quad w.p.a.1
    \end{align*}

\textbf{Step 2: Convergence rate of $\tilde{\sigma}$:}
    Define $C_1$ as the constant s.t. 
    \begin{align*}
        \norm{\hat{g}(\theta_0)}_\infty &\leq  C_1T^{-1/2}\sqrt{d_0\log(d_{\tilde{z}}\vee T)} \quad w.p.a.1.\\
        \sup_{\sigma\in\Theta_\sigma}\norm{(\E_{JT}-\E)\tilde{z}_{jt}y_{jt}(\sigma)}_\infty &\leq  C_1T^{-1/2}\sqrt{\log(d_{\tilde{z}}\vee T)} \quad w.p.a.1.\\
        \norm{(\E_{JT}-\E)\tilde{z}_{jt}(p_{jt},x_{jt}')}_\infty &\leq  C_1T^{-1/2}\sqrt{\log(d_{\tilde{z}}\vee T)} \quad w.p.a.1.
    \end{align*}
    The existence of such $C_1$ is guaranteed by Lemma \ref{lemma:empbound} and Lemma \ref{lemma:supdiff}.
    Let $\lambda_{\tilde{\theta}}\geq2C_1\sqrt{\frac{\log (d_{\tilde{z}}\vee T)}{T}}$, $\gamma_0 = (\alpha_0,\beta_0')'$ and $\tilde{\gamma} = (\tilde{\alpha},\tilde{\beta}')'$.
    Equation (\ref{eq:theta_tilde}) implies that
    \begin{equation}\label{ineq:theta_tilde}
        \normm{\hat{g}(\tilde{\theta})}_\infty + \lambda_{\tilde{\theta}}\normm{\tilde{\gamma}}_1 \leq \normm{\hat{g}(\theta_0)}_\infty + \lambda_{\tilde{\theta}}\normm{\gamma_0}_1
    \end{equation}
    Then we have
    \begin{align*}
        &c(|\tilde{\sigma}-\sigma_0|+\normm{(\tilde{\alpha},f_{\tilde{\beta}})-(\alpha_0,f_{u0})}_{\Theta_{\alpha,f_u}})\quad (\text{define }f_{\tilde{\beta}}(x)=x'\tilde{\beta})\\
        \leq &\normm{g(\tilde{\theta})}_\infty\quad \text{ by condition \ref{cond:SigmaIdent}}\\
        \leq &\normm{g(\tilde{\theta})-\hat{g}(\tilde{\theta})}_\infty+\normm{\hat{g}(\tilde{\theta})}_\infty\\
        \leq &\normm{(\E_{JT}-\E)\tilde{z}_{jt}y_{jt}(\tilde{\sigma})}_\infty + \normm{(\E_{JT}-\E)\tilde{z}_{jt}(p_{jt},x_{jt}')\tilde{\gamma}}_\infty + \normm{\hat{g}(\tilde{\theta})}_\infty\\
        \leq &\sup_{\sigma\in\Theta_\sigma}\norm{(\E_{JT}-\E)\tilde{z}_{jt}y_{jt}(\sigma)}_\infty+\normm{(\E_{JT}-\E)\tilde{z}_{jt}(p_{jt},x_{jt}')}_\infty\normm{\tilde{\gamma}}_1 + \normm{\hat{g}(\tilde{\theta})}_\infty\\
        \leq &\sup_{\sigma\in\Theta_\sigma}\norm{(\E_{JT}-\E)\tilde{z}_{jt}y_{jt}(\sigma)}_\infty+\normm{(\E_{JT}-\E)\tilde{z}_{jt}(p_{jt},x_{jt}')}_\infty\normm{\tilde{\gamma}}_1 \\
        & +\normm{\hat{g}(\theta_0)}_\infty +\lambda_{\tilde{\theta}}\normm{\gamma_0}_1 - \lambda_{\tilde{\theta}}\normm{\tilde{\gamma}}_1 \quad (\text{By inequality \ref{ineq:theta_tilde}})\\
        \leq & C_1 T^{-1/2}\sqrt{\log(d_{\tilde{z}}\vee T)} + C_1 T^{-1/2}\sqrt{\log(d_{\tilde{z}}\vee T)} \normm{\tilde{\gamma}}_1 + C_1 T^{-1/2}\sqrt{d_0\log(d_{\tilde{z}}\vee T)}\\
        & + M\lambda_{\tilde{\theta}}- \lambda_{\tilde{\theta}}\normm{\tilde{\gamma}}_1\quad w.p.a.1.\\
        & \text{ since } \normm{\gamma_0}_1\leq M\\
        \leq &2C_1 T^{-1/2}\sqrt{d_0\log(d_{\tilde{z}}\vee T)} + M\lambda_{\tilde{\theta}} + (C_1T^{-1/2}\sqrt{\log(d_{\tilde{z}}\vee T)}-\lambda_{\tilde{\theta}})\normm{\tilde{\gamma}}_1 \quad w.p.a.1.\\
        \leq &2C_1 T^{-1/2}\sqrt{d_0\log(d_{\tilde{z}}\vee T)} + M\lambda_{\tilde{\theta}}\\
        \lesssim & T^{-1/2}\sqrt{d_0\log(d_{\tilde{z}}\vee T)} \quad w.p.a.1.
    \end{align*}

    Thus we have 
    \[
        |\tilde{\sigma}-\sigma_0|\lesssim T^{-1/2}\sqrt{d_0\log(d_{\tilde{z}}\vee T)} \quad w.p.a.1.
    \]

    \noindent\textbf{\textit{Convergence rate of $\hat{\beta}$:}} 
    Note that 
    $|\tilde{\sigma}-\sigma_0|\lesssim T^{-1/2}\sqrt{d_0\log(d_{\tilde{z}}\vee T)} \quad w.p.a.1.\implies$ define $r^\sigma_{jt} = y_{jt}(\tilde{\sigma})-y_{jt}(\sigma_0)$, we have
    \[
        \max_{j,t}|r^\sigma_{jt}| := \max_{j,t}|y_{jt}(\tilde{\sigma})-y_{jt}(\sigma_0)|\leq C | \tilde{\sigma}-\sigma_0|\lesssim T^{-1/2}\sqrt{d_0\log(d_{\tilde{z}}\vee T)}\quad w.p.a.1.
    \]
    We apply Lemma \ref{lemma:lasso_endogeneity} to prove the convergence rate of $\hat{\beta}$.
    Note that 
    \begin{align*}
        p_{jt} &= \alpha_0 f_p(x_{jt},z_{jt}) + \epsilon^p_{jt}\\
        y_{jt}(\tilde{\sigma}) &= y_{jt}(\sigma_0) + r^\sigma_{jt}\\
        & = \alpha_0 p_{jt} + \beta_0'x_{jt} + \xi_{jt} + \underbrace{r^\sigma_{jt} + r_{jt}^u}_{r_{jt}}\\
        & = \alpha_0 f_p(x_{jt},z_{jt}) + \beta_0'x_{jt} + \underbrace{\xi_{jt} + r_{jt} + \alpha_0\epsilon^p_{jt}}_{u_{jt}}
    \end{align*}
    Let $\xi,R,\epsilon^p, F_p,\hat{F}_p, Y$ be the corresponding vectors of $\xi_{jt}, r_{jt}, \epsilon^p_{jt}, f_{p0}(x_{jt},z_{jt}), \hat{f}_p(x_{jt},z_{jt}), y_{jt}(\tilde{\sigma})$ respectively. 
    Let $X$ be the matrix of all $x_{jt}$, $\tilde{X}=(F_p,X)$, $\hat{X}=(\hat{F}_p,X)$. 
    Define $U=\xi + R + \alpha_0\epsilon^p$.
    Then we can write the above equations in the matrix form:
    \begin{align*}
        P & = F_p\alpha_0 + \epsilon^p\\
        Y & = F_p\alpha_0 + X\beta_0 + \xi + R + \alpha_0\epsilon^p\\
          & = (F_p,X)\begin{bmatrix}
        \alpha_0\\
        \beta_0
          \end{bmatrix} + U\\
        & = \tilde{X}\gamma_0 + U
    \end{align*}
    Note that $R$ has the following property:
    \begin{align*}
        \frac{1}{JT}\normm{R}_1\leq & \frac{1}{\sqrt{JT}}\normm{R}_2\\
        = & \sqrt{\E_{JT}(r_{jt}^\sigma + r_{jt}^u)^2}\\
        \leq & \sqrt{\E_{JT}(r_{jt}^\sigma)^2} + \sqrt{\E_{JT}(r_{jt}^u)^2}\\
        \lesssim & T^{-1/2}\sqrt{d_0\log(d_{\tilde{z}}\vee T)} \quad w.p.a.1
    \end{align*}
    Then by Lemma \ref{lemma:lasso_endogeneity}, we have that under the event set
    \begin{align*}
        & \{\frac{1}{JT}\normm{F_p-\hat{F}_p}_2^2\leq c_{F_p}^2\}\cap \bigg\{\lambda_{\beta}\geq 2\frac{\nu_1+1}{\nu_1-1}\norm{\frac{\hat{X}'(U+\alpha_0(F_p-\hat{F}_p))}{JT}}_\infty\bigg\}
    \end{align*}
    there is
    \[
        \normm{\hat{\beta}-\beta_0}_2\lesssim \lambda_{\beta}\sqrt{d_0}
    \]
    Note that in step 1, we have already proved
    \[
        \frac{1}{JT}\normm{F_p-\hat{F}_p}_2^2\lesssim \frac{d_0\log(d_x\vee T)}{T} \quad w.p.a.1
    \]
    This implies that for any fixed $c_{F_p}>0$, we have $P(\frac{1}{JT}\normm{F_p-\hat{F}_p}_2^2\leq c_{F_p}^2)\to 1$ as $T\to\infty$.
    For the second part of the event set, observe that $\normm{R}_\infty\lesssim |\tilde{\sigma}-\sigma_0|$ and $|\tilde{\sigma}-\sigma_0|\lesssim \sqrt{\frac{d_0\log (d_{\tilde{z}}\vee T)}{T}}\; w.p.a.1$.
    \begin{align*}
        &\norm{\frac{\hat{X}'(\alpha_0(F_p-\hat{F}_p) + U)}{JT}}_\infty\\
        = & \norm{\frac{(\hat{F}_p,X)'(\alpha_0(F_p-\hat{F}_p)+ R+\xi+\alpha_0\epsilon^p)}{JT}}_\infty\\
        \leq & \norm{\frac{(\hat{F}_p,X)'( R+\xi+\alpha_0\epsilon^p)}{JT}}_\infty+\alpha_0\norm{\frac{(\hat{F}_p,X)'(F_p-\hat{F}_p)}{JT}}_\infty\\
        \leq & \norm{\frac{(F_p,X)'( R+\xi+\alpha_0\epsilon^p)}{JT}}_\infty + \left|\frac{(F_p-\hat{F}_p)'( R+\xi+\alpha_0\epsilon^p)}{JT}\right|\\
            & +\alpha_0\norm{\frac{(F_p,X)'(F_p-\hat{F}_p)}{JT}}_\infty + \alpha_0\left|\frac{(F_p-\hat{F}_p)'(F_p-\hat{F}_p)}{JT}\right|\\
        \leq & \norm{\frac{(F_p,X)'(\xi+\alpha_0\epsilon^p)}{JT}}_\infty + \norm{\frac{(F_p,X)' R}{JT}}_\infty+\left|\frac{(F_p-\hat{F}_p)'(  R+\xi+\alpha_0\epsilon^p)}{JT}\right|\\
            & +\alpha_0\norm{\frac{(F_p,X)'(F_p-\hat{F}_p)}{JT}}_\infty + \alpha_0\left|\frac{(F_p-\hat{F}_p)'(F_p-\hat{F}_p)}{JT}\right|\\
        \lesssim & \sqrt{\frac{\log(d_{\tilde{z}}\vee T)}{T}} + \frac{1}{JT}\normm{R}_1\cdot\normm{(F_p,X)}_{\infty} + \frac{1}{JT}\normm{F_p-\hat{F}_p}_2\cdot \normm{R+\xi+\alpha_0\epsilon^p}_2\\
        & + \frac{1}{JT}\normm{F_p-\hat{F}_p}_1\cdot\normm{(F_p,X)}_{\infty} + \E_{JT}(F_p-\hat{F}_p)^2 \quad w.p.a.1\\
        & (\text { since}\norm{\frac{(F_p,X)'(\xi+\alpha_0\epsilon^p)}{JT}}_\infty\lesssim \sqrt{\frac{\log(d_{\tilde{z}}\vee T)}{T}}\quad w.p.a.1 \text{. by lemma \ref{lemma:Hoeffding}.})\\
        \lesssim & \sqrt{\frac{\log(d_{\tilde{z}}\vee T)}{T}} + \frac{1}{JT}\normm{R}_1 + \frac{1}{JT}\normm{F_p-\hat{F}_p}_2(\normm{R}_{2} + \normm{\xi+\alpha_0\epsilon^p}_2) \\
        &+ \frac{1}{JT}\normm{F_p-\hat{F}_p}_1 + \E_{JT}(F_p-\hat{F}_p)^2 \quad w.p.a.1\\
        \leq & \sqrt{\frac{\log(d_{\tilde{z}}\vee T)}{T}} +\frac{1}{JT}\normm{R}_1 + \sqrt{\E_{JT}(F_p-\hat{F}_p)^2}\cdot(\frac{1}{\sqrt{JT}}\normm{R}_2+C)\\
        & + \sqrt{\E_{JT}(F_p-\hat{F}_p)^2} + \E_{JT}(F_p-\hat{F}_p)^2 \quad w.p.a.1\\
        \lesssim & \sqrt{\frac{\log(d_{\tilde{z}}\vee T)}{T}} + \sqrt{\frac{d_0\log(d_{\tilde{z}}\vee T)}{T}} + \sqrt{\frac{d_0\log(d_x\vee T)}{T}}(\sqrt{\frac{d_0\log(d_{\tilde{z}}\vee T)}{T}}+C)\\
        &  +\sqrt{\frac{d_0\log(d_x\vee T)}{T}} + \frac{d_0\log(d_x\vee T)}{T} \quad w.p.a.1\\
        \lesssim & \sqrt{\frac{d_0\log(d_{\tilde{z}}\vee T)}{T}} \quad w.p.a.1
    \end{align*}
    Thus choose $\lambda_{\beta} = C_{\beta}\sqrt{\frac{d_0\log(d_{\tilde{z}}\vee T)}{T}}$ with a large enough $C_{\beta}$, we have $\{\lambda_{\beta}\geq 2\frac{\nu_1+1}{\nu_1-1}\norm{\frac{\hat{X}'(U+\alpha_0(F_p-\hat{F}_p))}{JT}}_\infty\}$ holds with probability approaching one.
     This gives
    \begin{align*}
        \normm{\hat{\gamma}-\gamma_0}_2 &\leq C\lambda_{\beta}\sqrt{d_0}\\
        & \lesssim \sqrt{d_0}\sqrt{\frac{d_0\log(d_{\tilde{z}}\vee T)}{T}} \quad w.p.a.1\\
        \normm{\hat{\gamma}-\gamma_0}_2 & = o_p(T^{-1/4})
    \end{align*}
    Consequently, $\normm{\hat{\beta}-\beta_0}_2 = o_p(T^{-1/4})$ and 
    \begin{align*}
        \normm{\hat{f}_u(x_{jt})-f_{u0}(x_{jt})}_{L^2} = & \normm{\hat{\beta}'x_{jt} - \beta_0'x_{jt} -r_{jt}^u}_{L^2}\\
        \leq & \normm{(\hat{\beta}-\beta_0)'x_{jt}}_{L^2} + \sqrt{\E(r_{jt}^u)^2}\\
        \leq & \sqrt{\lambda_{\max}(\E x_{jt}x_{jt}')}\normm{\hat{\beta}-\beta_0}_2 + \sqrt{\E(r_{jt}^u)^2}\\
        \lesssim & \sqrt{\frac{d_0^2\log(d_{\tilde{z}}\vee T)}{T}} \quad w.p.a.1\\
        &\implies\\
        \normm{\hat{f}_u(x)-f_{u0}(x)}_{L^2} = &o_p(T^{-1/4}).
    \end{align*}
\end{proof}

\section{Appendix: Contraction Mapping}
In the theory, we need to back up $y_t(\sigma)$ from $\sigma$ and $s_t$. 
To achieve this in practice, we define a mapping:
\[
    T: [y_t-M,y_t+M]^{J}\to [y_t-M,y_t+M]^{J},\quad T(x)=x+\log(s_t)-\log(f_s(p_t,x,\sigma))
\]
For abbreviation, we denote $f_s(p_t,x,\sigma)$ as $f_s(x,\sigma)$.
And let $f_{s,i}(x,\sigma)$ be the $i$-th element of $f_s(x,\sigma)$, and $\partial f_{s,i}(x,\sigma)/\partial x_j$ be the partial derivative of $f_{s,i}(x,\sigma)$ w.r.t. $x_j$, the $j$-th component of $x$.
\begin{theorem}
    $T$ is a contraction mapping, and thus the iteration admits a fixed point.
\end{theorem}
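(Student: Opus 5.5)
The plan is the classical BLP argument (Berry, Levinsohn and Pakes 1995, Appendix I). The key tool is Blackwell's sufficient conditions for a contraction in the sup norm on the box $[y_t-M,y_t+M]^J$. Equivalently, I will bound the Jacobian of $T$ directly: each row of $\partial T/\partial x$ is nonnegative and has row sum bounded by some $\rho<1$ uniformly on the box. That gives $\normm{T(x)-T(x')}_\infty\leq\rho\normm{x-x'}_\infty$ by the mean value theorem along the segment joining $x$ and $x'$, which lies in the box by convexity.

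\textbf{Step 1: compute the Jacobian.} Write $s_{ij}(x,b)=\exp(x_j+p_{jt}b)/(1+\sum_k\exp(x_k+p_{kt}b))$, so that $f_{s,j}(x,\sigma)=\int s_{ij}(x,b)\,dF(b,\sigma)$. Then
\[
\frac{\partial f_{s,j}}{\partial x_k}=\int s_{ij}(1\{j=k\}-s_{ik})\,dF(b,\sigma).
\]
Therefore
\[
\frac{\partial T_j}{\partial x_k}=1\{j=k\}-\frac{1}{f_{s,j}}\frac{\partial f_{s,j}}{\partial x_k}.
\]
For $k\neq j$ this equals $\int s_{ij}s_{ik}\,dF/f_{s,j}\geq0$. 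For $k=j$ it equals $1-\int s_{ij}(1-s_{ij})\,dF/f_{s,j}=\int s_{ij}^2\,dF/f_{s,j}\geq0$.

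\textbf{Step 2: bound the row sums.} Summing over $k$ gives
\[
\sum_k\frac{\partial T_j}{\partial x_k}=\frac{\int s_{ij}\sum_k s_{ik}\,dF}{\int s_{ij}\,dF}=1-\frac{\int s_{ij}s_{i0}\,dF}{\int s_{ij}\,dF},
\]
where $s_{i0}=1/(1+\sum_k\exp(x_k+p_{kt}b))$ is the outside-good share. The subtracted ratio is a weighted average of $s_{i0}$, so it is strictly positive.

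\textbf{Step 3: uniformity, the main obstacle.} I must show $\inf_x\int s_{ij}s_{i0}\,dF/\int s_{ij}\,dF\geq 1-\rho>0$ uniformly over the compact box. This is where $b_i$ being unbounded causes trouble, since $s_{i0}$ can approach zero for extreme $b$. I will handle it by picking a compact set $K$ of $b$ values with $F(K,\sigma)\geq1/2$. On $K\times[y_t-M,y_t+M]^J$, using $|p_{jt}|\leq M$ from Condition \ref{cond:BoundednessVar1}, both $s_{i0}$ and $s_{ij}$ are bounded below by a positive constant $\underline{s}$ and above by one. The ratio is then at least $\underline{s}^2F(K,\sigma)/1\geq\underline{s}^2/2$. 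If uniformity in $\sigma$ is wanted, compactness of $\Theta_\sigma$ and smoothness of $f(b,\sigma)$ (Condition \ref{cond:SmoothDensity1}) let $K$ be chosen uniformly.

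\textbf{Step 4: self-mapping and conclusion.} I also need $T$ to map the box into itself. Since $T(y_t)=y_t$ (because $f_s(y_t,\sigma)=s_t$) and $T$ is a $\rho$-contraction, $\normm{T(x)-y_t}_\infty\leq\rho\normm{x-y_t}_\infty\leq M$, so invariance follows automatically. Here $y_t=y_t(\sigma)$ is the inverse guaranteed by BLP. The box is complete under $\normm{\cdot}_\infty$, so Banach's fixed point theorem yields a unique fixed point, namely $y_t(\sigma)$. The fixed point is the limit of the iteration $x^{(n+1)}=T(x^{(n)})$ started anywhere in the box.
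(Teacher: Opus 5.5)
Your proposal is correct and follows essentially the same route as the paper: a componentwise mean value theorem on the box, a Jacobian of $T$ with nonnegative entries and row sums $1-\int s_{ij}s_{i0}\,dF/\int s_{ij}\,dF<1$, a uniform bound on that ratio over the box, and invariance of the box via $T(y_t)=y_t$. The differences are cosmetic. You make the uniform positivity explicit with a compact set $K$ of $b$-values, where the paper simply invokes continuity of the ratio on the compact box. You also check that the diagonal entries $\int s_{ij}^2\,dF/f_{s,j}$ are nonnegative, which the paper uses implicitly when it drops the absolute values.
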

\begin{proof}
    There are two things we need to show: first mapping $T$ is well-defined; second it is a contraction mapping. It turns out the two statements can be shown at the same time.\\
    Define the metric as $d(x,w)=\norm{x-w}_{\infty}$, where $x,w\in [y_t-M,y_t+M]^{J}$. 
    Trivially,
    \begin{align*}
        T(x)-T(w)&=(x-\log(f_s(x,\sigma)))-(w-\log(f_s(w,\sigma)))\\
        &=P(z)(x-w)
    \end{align*}
    where $z$ lies in the line between $x$ and $w$. $P(z)=[\delta_{ij}-f_{s,i}(\sigma,z)^{-1}\partial f_{s,i}(\sigma,z)/\partial z_j]_{i,j}$, where $\delta_{ij}=1 $ iff $i=j$. 
    Note that when $i=j$, we have
    \begin{align*}
    \frac{f_{s,i}(\sigma,z)}{\partial z_i}&=\int\frac{\exp\{z_{i}+p_{i}\sigma\zeta\}}{1+\sum_{j=1}^J\exp\{z_{j}+p_{j}\sigma\zeta\}}\cdot (1-\frac{\exp\{z_{i}+p_{i}\sigma\zeta\}}{1+\sum_{j=1}^J\exp\{z_{j}+p_{j}\sigma\zeta\}})dF(\zeta)>0\\
    \end{align*}
    When $i\neq j$, we have
    \begin{align*}
        \frac{f_{s,i}(\sigma,z)}{\partial z_j}&=\int\frac{\exp\{z_{i}+p_{i}\sigma\zeta\}}{1+\sum_{j=1}^J\exp\{z_{j}+p_{j}\sigma\zeta\}}\cdot (-\frac{\exp\{z_{j}+p_{j}\sigma\zeta\}}{1+\sum_{j=1}^J\exp\{z_{j}+p_{j}\sigma\zeta\}})dF(\zeta)<0
    \end{align*}
    Thus
    \begin{align*}
        &[T(x)-T(w)]_{i}\\
        &=\sum_{j=1}^{J}(\delta_{ij}-f_{s,i}(\sigma,z)^{-1}\partial f_{s,i}(\sigma,z)/\partial z_j)(x_j-w_j)\\
        &\leq \sum_{j=1}^{J}|\delta_{ij}-f_{s,i}(\sigma,z)^{-1}\partial f_{s,i}(\sigma,z)/\partial z_j|\cdot|x_j-w_j|\\
        &\leq \sum_{j=1}^{J}|\delta_{ij}-f_{s,i}(\sigma,z)^{-1}\partial f_{s,i}(\sigma,z)/\partial z_j|\norm{x-w}_\infty\\
        &=\left(1-f_{s,i}(\sigma,z)^{-1}\sum_{j=1}^{J}\partial f_{s,i}(\sigma,z)/\partial z_j\right)\norm{x-w}_\infty\\
        &=\left(1-\underbrace{(\int\frac{\exp\{z_{i}+p_{i}\sigma\zeta\}}{1+\sum_{j=1}^J\exp\{z_{j}+p_{j}\sigma\zeta\}}dF(\zeta))^{-1}\int\frac{\exp\{z_{i}+p_{i}\sigma\zeta\}}{(1+\sum_{j=1}^J\exp\{z_{j}+p_{j}\sigma\zeta\})^2}dF(\zeta)}_{g(z,\sigma)}\right)\norm{x-w}_\infty\\
    \end{align*}
    Obviously, $g(z,\sigma)<1$ is a continuous on a compact set, thus it can obtain it's minimum value which is a positive number. 
    Thus there is a number $0<C(M,p,\sigma)<1$ s.t. $\forall x,w\in [y_t-M,y_t+M]^{J}$
    \[
    [T(x)-T(w)]_{i}\leq C(M,p,\sigma)\norm{x-w}_\infty
    \]
    Consequently,
    \[
    \norm{T(x)-T(w)}_\infty\leq C(M,p,\sigma)\norm{x-w}_\infty
    \]
    Specifically, for any $\sigma$, let $y_t$ be the solution to the equation $f_s(y_t,\sigma)=s_t$, then $T(y_t)=y_t$, i.e., $y_t$ is the fixed point of $T$. 
    Consequently $\forall x\in [y_t-M,y_t+M]^{J}$
    \[
    \norm{T(x)-y_t}_\infty=\norm{T(x)-T(y_t)}_\infty\leq C(M,p,\sigma)\norm{x-y_t}_\infty< \norm{x-y_t}_\infty \leq  M
    \]
    i.e. $T(x)\in [y_t-M,y_t+M]^{J}$. So $T$ is well-defined.
\end{proof}



\bibliographystyle{apalike}
\bibliography{bib.bib}

\end{document}